\newcommand\vldbdoi{10.14778/3648160.3648187}
\newcommand\vldbpages{1515 - 1527}
\newcommand\vldbvolume{17}
\newcommand\vldbissue{6}
\newcommand\vldbyear{2024}
\newcommand\vldbauthors{\authors}
\newcommand\vldbtitle{\shorttitle} 
\newcommand\vldbavailabilityurl{https://github.com/hououou/AeonG.git}
\newcommand\vldbpagestyle{empty} 
\theoremstyle{definition}
\newtheorem{definition}{Definition}
\theoremstyle{plain}
\newtheorem{theorem}{Theorem}
\theoremstyle{remark}
\theoremstyle{definition}
\newcommand{\numberedcircle}[1]{%
  \tikz[baseline=(char.base)]{
    \node[shape=circle, draw, fill=black, text=white,inner sep=0.8pt,font=\small] (char) {#1};
  }
}
\newcommand\tgdb{\ensuremath{\textsf{AeonG}}}
\newcommand\tgdbd{\ensuremath{\textsf{AeonG-D}}}
\newcommand\visible{\ensuremath{\text{legal}}}
\newcommand\expgraph{\ensuremath{\text{1}}}
\newcommand\expmodel{\ensuremath{\text{2}}}
\newcommand\expquery{\ensuremath{\text{3}}}
\newcommand\expoverviewstorage{\ensuremath{\text{4}}}
\newcommand\expoverviewquery{\ensuremath{\text{5}}}
\newcommand\expocurrent{\ensuremath{\text{6}}}
\newcommand{\todo}[1]{{\color{black} {#1}}}
\newif\ifextended\extendedtrue
\newcommand{\maintext}[1]{\ifextended\relax\else#1\fi} 
\newcommand{\extended}[1]{\ifextended#1\else\relax\fi}
\begin{document}

\maintext{\title{AeonG: An Efficient Built-in Temporal Support in Graph Databases}}

%%%%% Extended
\extended{\title{AeonG: An Efficient Built-in Temporal Support in Graph Databases (Extended Version)}}
    
%%
%% The "author" command and its associated commands are used to define the authors and their affiliations.
\author{Jiamin Hou}
\affiliation{%
  \institution{Renmin University of China}
}
\email{jiaminhou@ruc.edu.cn}

\author{Zhanhao Zhao}
\affiliation{%
  \institution{Renmin University of China}
}
\email{zhanhaozhao@ruc.edu.cn}

\author{Zhouyu Wang}
\affiliation{%
  \institution{Renmin University of China}
}
\email{zyu\_wang@ruc.edu.cn}

% \begin{multicols}{4}
\author{Wei Lu}
\affiliation{\institution{Renmin University of China}}
\email{lu-wei@ruc.edu.cn}

% \columnbreak % 切换到下一列

\author{Guodong Jin}
\affiliation{\institution{University of Waterloo}}
\email{g35jin@uwaterloo.ca}

% \columnbreak % 切换到下一列

\author{Dong Wen}
\affiliation{\institution{UNSW, Australia}}
\email{dong.wen@unsw.edu.au}

% \columnbreak % 切换到下一列

\author{Xiaoyong Du}
\affiliation{\institution{Renmin University of China}}
\email{duyong@ruc.edu.cn}
% \end{multicols}

\begin{abstract}
Real-world graphs are often dynamic and evolve over time. 
It is crucial for storing and querying a graph's evolution %by introducing temporal features 
in graph databases.
However, existing works either suffer from high storage overhead or lack efficient temporal query support, or both.
In this paper,
we propose {\tgdb}, a new graph database with built-in temporal support.
% that efficiently provides built-in temporal support.
{\tgdb} is based on a novel temporal graph model.
To fit this model, we design a storage engine and a query engine. 
Our storage engine is hybrid, with one current storage to manage the most recent versions of graph objects, and another historical storage to manage the previous versions of graph objects.
This separation makes the performance degradation of querying the most recent graph object versions as slight as possible. 
To reduce the historical storage overhead, we propose a novel \textit{anchor+delta} strategy, in which we periodically create a complete version (namely anchor) of a graph object, and maintain every change (namely delta) between two adjacent anchors of the same object.  
To boost temporal query processing, we propose an anchor-based version retrieval technique in the query engine to skip unnecessary historical version traversals.
Extensive experiments are conducted on both real 
and synthetic datasets. The results show 
that {\tgdb} achieves up to {5.73$\times$} lower storage consumption
% {{2.82$\times$} lower latency for graph operations, 
and {2.57$\times$} lower temporal query latency against state-of-the-art approaches, while introducing only {{9.74\%}} performance degradation for supporting temporal features.

\end{abstract}

\maketitle

\setcounter{figure}{0}
\setcounter{table}{0}
\setcounter{lstlisting}{0}
\setcounter{page}{1}
%%% do not modify the following VLDB block %%
%%% VLDB block start %%%
\pagestyle{\vldbpagestyle}
\begingroup\small\noindent\raggedright\textbf{PVLDB Reference Format:}\\
\vldbauthors. \vldbtitle. PVLDB, \vldbvolume(\vldbissue): \vldbpages, \vldbyear.\\
\href{https://doi.org/\vldbdoi}{doi:\vldbdoi}
\endgroup
\begingroup
\renewcommand\thefootnote{}\footnote{
\noindent Wei Lu is the corresponding author.

\noindent This work is licensed under the Creative Commons BY-NC-ND 4.0 International License. Visit \url{https://creativecommons.org/licenses/by-nc-nd/4.0/} to view a copy of this license. For any use beyond those covered by this license, obtain permission by emailing \href{mailto:info@vldb.org}{info@vldb.org}. Copyright is held by the owner/author(s). Publication rights licensed to the VLDB Endowment. \\
\raggedright Proceedings of the VLDB Endowment, Vol. \vldbvolume, No. \vldbissue\ %
ISSN 2150-8097. \\
\href{https://doi.org/\vldbdoi}{doi:\vldbdoi} \\
}\addtocounter{footnote}{-1}\endgroup
%%% VLDB block end %%%

%%% do not modify the following VLDB block %%
%%% VLDB block start %%%
\ifdefempty{\vldbavailabilityurl}{}{
\vspace{.3cm}
\begingroup\small\noindent\raggedright\textbf{PVLDB Artifact Availability:}\\
The source code, data, and/or other artifacts have been made available at \url{\vldbavailabilityurl}.
\endgroup
}
%%% VLDB block end %%%

\section{Introduction}
\label{sec:intro}
Graphs are prevalent to model relationships between real-world entities.
Many graph databases,  such as Neo4j~\cite{neo4j}, ArangoDB~\cite{ArangoDB}, Dgraph~\cite{Dgraph}, and Memgraph~\cite{Memgraph}, 
% have been 
are developed to manage graph data efficiently.
Despite the fact that real-world graphs are often dynamic and evolve over time, these 
% graph 
databases are typically designed to manage
% model and analyze 
{up-to-date} graph data:
when a graph changes, the database only stores
% focus on \textit{static} graphs, indicating 
the current (latest) state of the graph,
% which are only capable of maintaining the latest state of a graph. 
i.e., 
% the database  will only retains 
the most recent values
% state 
of vertices and edges, while discarding any previous (historical) state.
However, 
time-evolving (temporal) graph data, which contains both the latest and historical states of a graph, is important in many applications, such as financial fraud detection~\cite{abdallah2016fraud}, traffic prediction in road networks~\cite{laddada2020graph}, etc.

\begin{figure}[]
\centering
\includegraphics[width=0.45\textwidth]{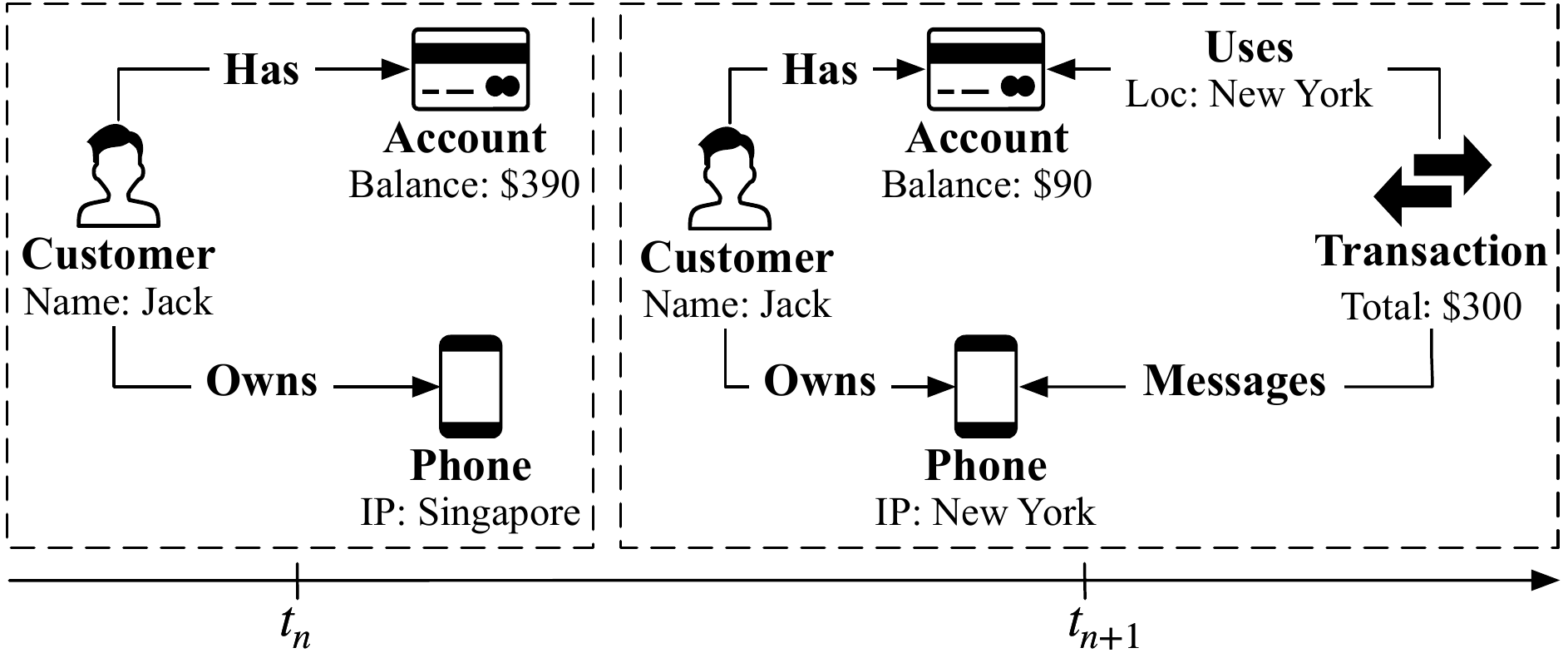}
% \vspace{-2mm}
\caption{The Evolution of a Customer Purchase Graph}
\label{fig:exp1}
% \vspace{-2mm}
\end{figure}

\textbf{Example {\expgraph}.}
Figure~\ref{fig:exp1} shows the evolution of a customer purchase graph, where 
customers, bank accounts, phones, and transactions are modeled as entities, and the relationships between entities are modeled as edges.
The phone logs its location during various activities, such as receiving a message or web browsing.
Each customer purchase invokes a transaction to update the account balance and record the location where it occurs.
Let us assume that at time $t_n$, a customer named Jack has an account balance of \$390, and his phone's location falls in Singapore.
% makes a phone call in Singapore. 
We store a graph reflecting this state, as shown 
% This activity constructs a graph, depicted 
in the left portion of Figure~\ref{fig:exp1}.
%\sout{, recording the call location as Singapore}.
Subsequently, at time $t_{n+1}$ (one minute after $t_n$), Jack invokes a purchase transaction totaling \$300, resulting in a new graph state, as shown in the right part of Figure~\ref{fig:exp1}.
This transaction occurs in New York, identical to the location of Jack's phone, thus it appears to be legitimate.
However, when comparing the states of $t_{n+1}$ and $t_{n}$, we observe that Jack's phone location changes from Singapore to New York within one minute.
Considering it is impossible for Jack to travel such a distance so quickly, this transaction is likely fraudulent.
We would like to emphasize that changes in phone location alone are not inherently suspicious. 
However, when such a location shift is associated with a transaction, it warrants vigilance.
As discussed above, this potentially fraudulent activity can only be identified by tracking the evolution of the graph structure over time.
Therefore, traditional graph databases, which only maintain the latest state at $t_{n+1}$, would fail to detect such fraudulent transactions. \qed
% \end{example}

Thus far, various works have been proposed to manage temporal graph data.
Several proposals~\cite{T-GQL,Frame,liu2017keyword, GRADOOP, Timebased3, Timebased4, durand2017backlogs} 
assign each vertex or edge in the graph with a timestamp property to reflect its lifespan, as shown in Figure~\ref{fig:exp2}.
Rather than discarding the previous state when the graph changes, these approaches maintain both the current and historical states in a single graph.
For example, at time $t_{n+1}$, two vertices of Jack's phone coexist -- one represents the previous state with a time interval of $[t_{n}, t_{n+1})$, and the other denotes the current state with $[t_{n+1}, +\infty)$.
Consequently, they can detect fraudulent transactions, as in Example {\expgraph}, by identifying the irregular sub-graph structure (highlighted in the red box) that indicates a transaction proceeded with location inconsistency.
However, in these approaches where timestamps are treated as regular properties, executing temporal queries (which select data based on given timestamps) often requires the traversal of the entire graph.
As the graph size inevitably increases due to the addition of historical states, query efficiency can significantly degrade over time.
Another line of research~\cite{DeltaGraph, ChronoGraph, Raphtory, llama, ImmortalGraph,clock-g, graphone, Chronos, Auxo, khurana2016storing}
% \todo{introduce temporal features at the system level,} 
manages time-evolving graph data by periodically materializing the snapshots of the entire graph and logging the deltas between two successive snapshots.
% the temporal data as of a specified point in time 
Querying a historical state in these methods requires first identifying the nearest snapshot based on the timestamp, and then reconstructing the complete graph state using the snapshot and associated deltas.
% , and finally executing the temporal query on this state. 
Therefore, these methods incur substantial storage overhead due to the maintenance of complete snapshots, and can lead to sub-optimal query performance because of the historical state reconstruction.

\begin{figure}[]
\centering
\includegraphics[width=0.45\textwidth]{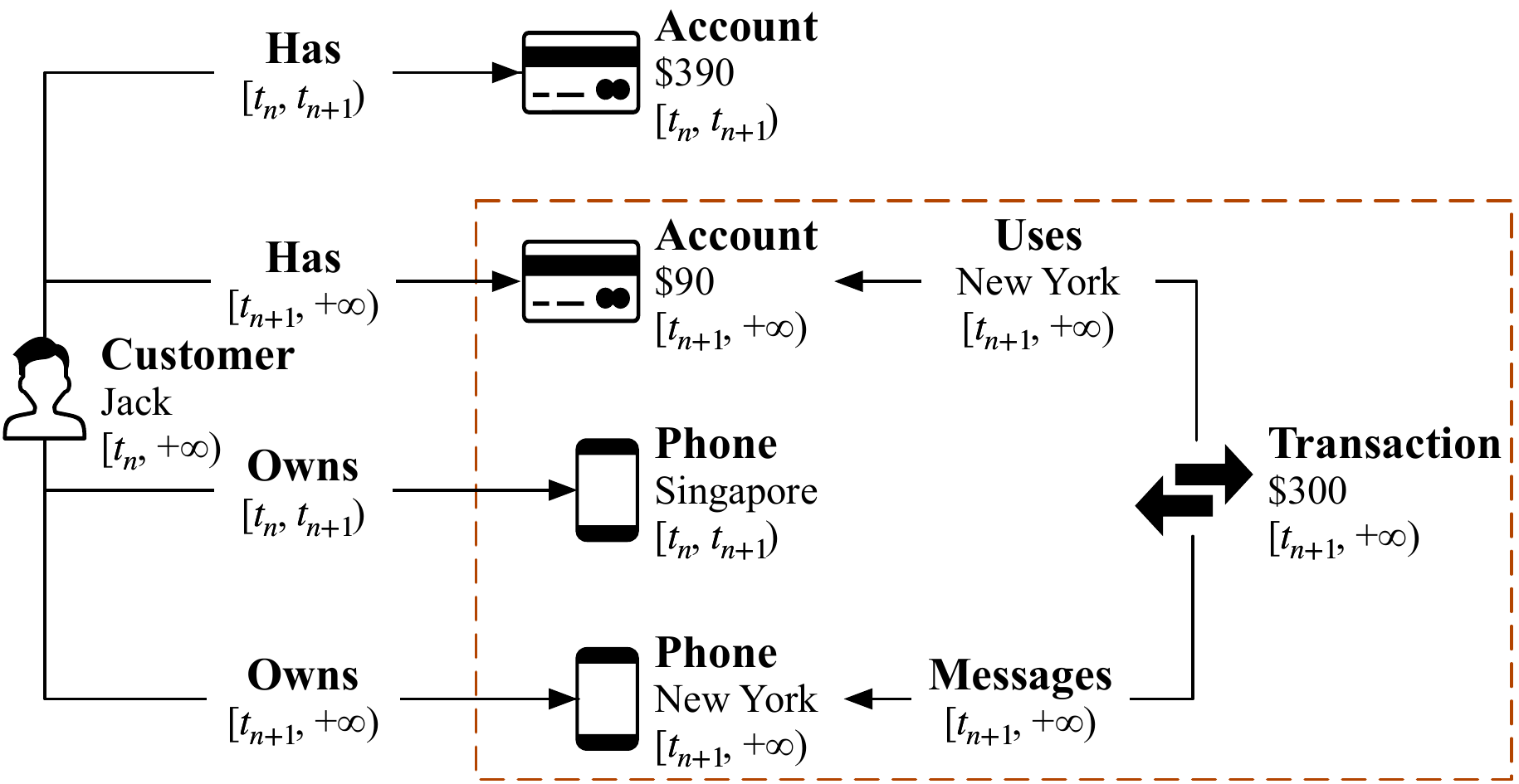}
% \vspace{-2mm}
\caption{Customer Purchase Graph with a Time Dimension}
\label{fig:exp2}
% \vspace{-3mm}
% \vspace{-2mm}
\end{figure}

% \textbf{Challenges.} 
Our goal is to design a graph database for efficient temporal graph data management. 
However, achieving this requires addressing three key challenges to minimize storage overhead and facilitate swift temporal query processing.
First, as the volume of historical graph states continuously escalates, achieving minimal storage overhead is not straightforward\ding{182}. 
Second, given the considerable amount of historical states, it is not trivial to 
% the task of 
process temporal queries efficiently while upholding data 
% freshness and 
consistency 
% is not trivial
\ding{183}.
Lastly, the database needs native temporal support to enable users to conveniently access temporal graph data\ding{184}.

% In this paper, we propose a built-in temporal support implementation in graph database systems which has desirable consume space of Log and also provide near direct access like Copy. 
In this paper, we propose {\tgdb}, a new graph database that efficiently offers built-in temporal support.
By integrating the widely-accepted static property graph model~\cite{property_graph, cypher,property_graph_model,pg-key} with a time dimension, we first define a \textit{temporal property graph model} to formalize the representation and manipulation of temporal graph data.
\todo{Guided by this model, we then extend the common graph database architecture to design {\tgdb}.
In particular, we enhance the storage engine, query language, and query engine, with efficient temporal support.}
% Building upon existing foundations, {\tgdb} further expands these components to introduce temporal support.
\todo{We build a \textit{hybrid storage engine}\ding{182}, constituting the current storage and historical storage, to store temporal graph data with minimal storage overhead.
This engine maintains multiple versions for each vertex and edge, with the most recent versions retained in the current storage and previous versions in the historical storage. 
We integrate time dimensions into the data layout, and develop the current storage based on the multi-version storage engine used in various existing graph databases~\cite{livegraph,Weaver,G-Tran,Memgraph,Dgraph,ArangoDB}.
We propose a novel ``anchor+delta'' strategy to compactly organize historical data in the historical storage.
% , which is built based on key-value store~\cite{rocksdb,tikv}.
% based on key-value store and equip it with, compactly storing  versions.
% , asynchronously migrated from the current storage. 
In particular, we periodically create a complete version (namely anchor) of a graph object and maintain every change (namely delta) between two consecutive anchors of the same object to reduce the historical storage overhead. 
Moreover, we introduce an asynchronous migration mechanism to transfer outdated versions from the current storage to the historical storage. 
Instead of synchronously migrating previous versions with every update or deletion of a vertex/edge, we defer the migration until the database’s periodic garbage collection is invoked~\cite{mvcc_overall}.
This mechanism ensures that the migration is non-intrusive, thereby reducing the performance degradation caused by the maintenance of temporal data.

We then present a \textit{temporal-enhanced query language}\ding{184}, which extends Cypher~\cite{cypher}, a common-used graph query language, to conveniently access temporal graph data. 
% by integrating temporal dimensions,
% Furthermore, we fundamentally redesign the storage and query engines to align with the new temporal data model and query language. 
Building upon the hybrid storage engine, we introduce a built-in \textit{temporal query engine}\ding{183}.
We inherit two fundamental operations from existing graph databases, namely scan and expand, and extend them to enable consistent and efficient temporal query processing.
We propose a unique anchor-based version retrieval technique to {minimize} unnecessary historical version traversals in the scan and expand operators.
Specifically, we directly locate the nearest anchor that aligns with the given query conditions, and apply the subsequent deltas on the obtained anchors to reconstruct the desired version, thus minimizing the historical version traversal overhead. 
}

In summary, we make the following contributions:
% \vspace{-3mm}
\begin{itemize}[leftmargin=*,itemsep=2pt,topsep=0pt,parsep=0pt]
% [leftmargin=*]
% ,itemsep=2pt,topsep=0pt,parsep=0pt]
 \todo{\item We present {\tgdb}, a new graph database providing efficient built-in temporal support. Built with a temporal-enhanced query language, query engine, and storage engine, {\tgdb} regards temporal features as the first citizen, making it simple and intuitive to manipulate temporal graph data.}
 
 \todo{\item We propose a hybrid storage engine, which employs separate storage engines with an ``anchor+delta'' strategy to reduce storage overhead for historical data.
 We further introduce an asynchronous migration strategy to minimize performance degradation for maintaining temporal graph data.
 }
 
 \todo{\item We design a temporal query engine, featuring an anchor-based version retrieval technique, to provide consistent and efficient temporal query processing with minimal historical version traversal overhead.}

\item We implement {\tgdb} based on Memgraph~\cite{Memgraph}, a real-world native graph database. 
We conduct extensive experiments on both real and synthetic datasets,
% \cz{one real dataset and two synthetic datasets },
% both real and synthetic datasets, 
and compare {\tgdb} against two state-of-the-art temporal graph databases~\cite{T-GQL,clock-g}.
% , T-GQL~\cite{T-GQL} and Clock-G~\cite{clock-g}.
The results demonstrate that {\tgdb} achieves up to {5.73$\times$} lower storage consumption 
% {{2.82$\times$} lower latency for graph operations, 
and {2.57$\times$} lower latency for temporal queries, while only introducing {9.74\%} performance degradation for supporting temporal features.
% compared with the vanilla Memgraph .
\end{itemize}

\extended{
The remainder of the paper is structured as follows.
The next section 
% provides relevant background on temporal graph databases, 
formulates the temporal graph model, and presents the temporal query language.
% We next formalize our temporal graph data model, query language and constraints, and 
Section~\ref{sec:overview} overviews the architecture of {\tgdb}.
Section~\ref{sec:hybrid_storage} details the hybrid storage engine, and Section~\ref{sec:query-engine} elaborates on the temporal query engine.
Section~\ref{sec:implementation} describes the system implementation.
% , and 
Section~\ref{sec:evaluation} presents the experimental results.
Section~\ref{sec:relatedwork} discusses the related work, and Section~\ref{sec:conclusion} concludes.
}

\section{Modeling and query language}
\label{sec:background}

In this section, we formulate the temporal graph model and present the temporal query language used in {\tgdb}. 

\subsection{Temporal Property Graph Model}
\label{sec:graph_model}

We define the temporal graph model by extending the static property graph model \todo{\cite{property_graph, cypher,property_graph_model,pg-key}} with a time dimension.
In the property graph model, real-world entities are represented as vertices, and the relationships between these entities are modeled as edges. Each vertex or edge has a unique identifier (id for short), possibly several labels (e.g., customer, phone), and properties (e.g., Name: Jack).

\todo{
\begin{definition}[Property Graph]
\label{def:pgm}
% {\textbf (Property Graph Model)} 
Let $\mathcal{N}$ and $\mathcal{E}$ denote sets of vertex ids and edge ids, respectively. Assume countable sets $\mathcal{L}$, $\mathcal{K}$, and $\mathcal{V}$ of \textit{labels, property names}, and \textit{property values}. A property graph is a tuple $G=\langle N, E, \rho, \lambda, \pi \rangle$ where:
\begin{itemize}[leftmargin=*,itemsep=2pt,topsep=0pt,parsep=0pt]
\item $N$ is a finite subset of $\mathcal{N}$, whose elements are referred to as the \textit{vertices} of $G$;
\item $E$ is a finite subset of $\mathcal{E}$, whose elements are referred to as the \textit{edges} of $G$ and $N \cap E =\emptyset$;
\item $\rho$: $E$ $\to$ $(N \times N)$ is a total function mapping each edge to its source and destination vertices;
 \item $\lambda: (N \cup E) \rightarrow 2^{\mathcal{L}}$ is a total function mapping vertices and edges to finite sets of labels (including the empty set); 
\item $\pi:(N \cup E) \times \mathcal{K} \rightarrow \mathcal{V}$ is a finite partial function, mapping a vertex/edge and a property key to a value.
\end{itemize}
\end{definition}
}

\begin{figure}[]
\centering    %居中
\includegraphics[width=0.45\textwidth]{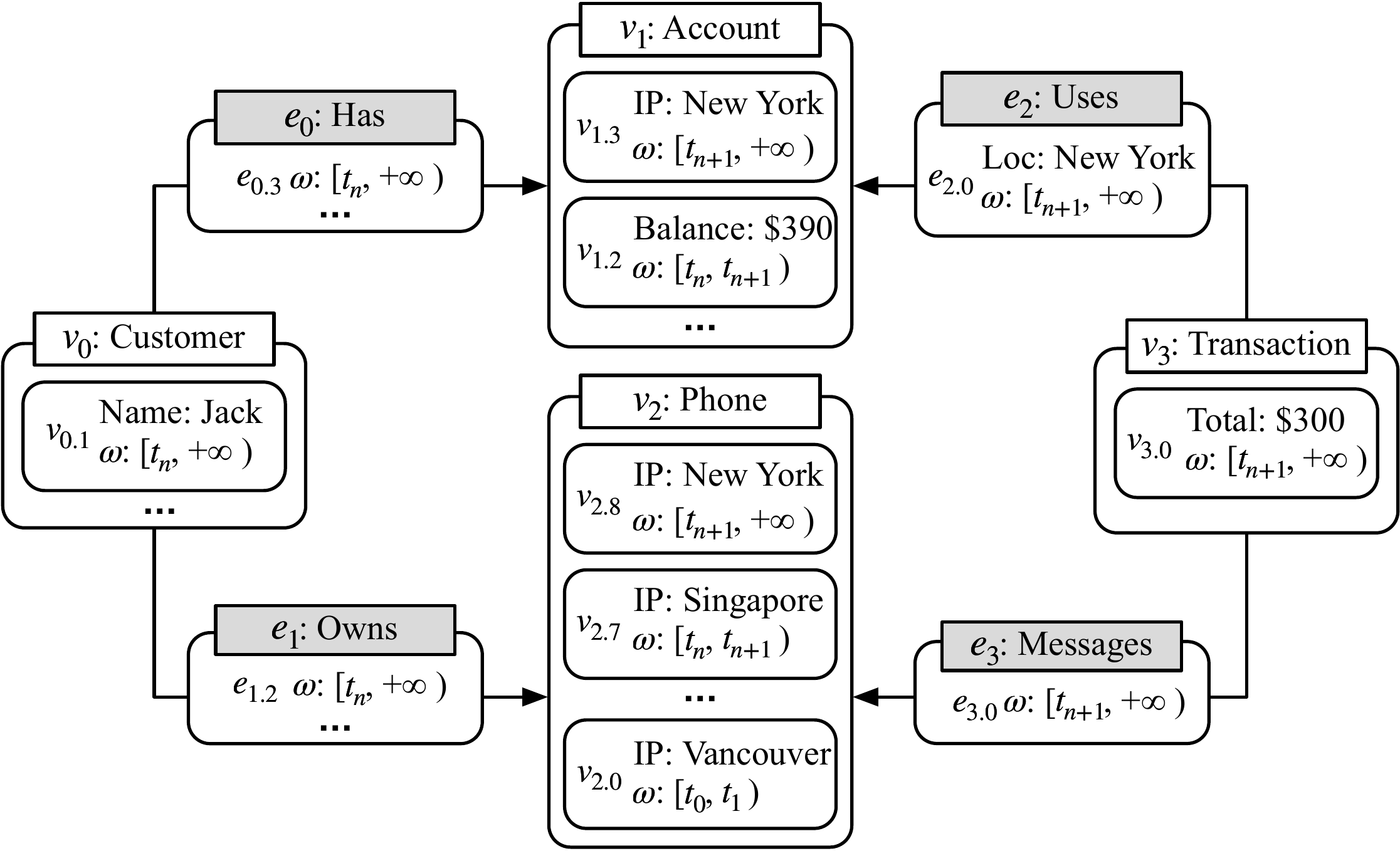}
\centering
% \vspace{-2mm}
\caption{A Running Example of Temporal Property Graphs}
% \vspace{-2mm}
\label{fig:exp_model}
\end{figure}

The property graph model, originally designed for static graphs, lacks the inherent ability to capture the evolution of graphs over time. 
In the context of relational databases, the concept of ``Transaction Time''~\cite{sql-2011} is proposed to bring a time dimension to the relational model.
This transaction time is created and maintained by the database system itself, tracking the lifespan of each data item within the system.
Inspired by the transaction time, we integrate the time dimension into the property graph model to formally define the temporal property graph model.

\todo{
\begin{definition}[Temporal Property Graph]
\label{def:tpgm}
A temporal property graph is a tuple $G=\langle \Omega, N, E, \rho,\lambda, \pi, \sigma, \tau \rangle$ where:
\begin{itemize}[leftmargin=*,itemsep=2pt,topsep=0pt,parsep=0pt]
\item $\Omega$ is a temporal domain, which is a finite set of consecutive timestamps, that is, $\Omega=\{i\in T | a \leq i\leq b\}$ for some $a$, $b \in T$ such that $a \leq b$.
$T$ represents the universe of time points;
\item $N$, $E$, $\rho$, $\lambda$, $\pi$ inherit  their definitions from Definition \ref{def:pgm};
\item $\sigma: (N \cup E) \times \Omega \to \{true, false\} $ is a total function that maps a vertex or an edge and a time period $\omega$ to a Boolean variable, indicating whether this vertex or edge exists during period $\omega$;
\item $\tau:(N \cup E) \times \mathcal{K} \times \Omega \rightarrow \mathcal{V}$ is a partial function that maps a vertex or an edge and a property key, and a time period $\omega$ to a value.
\end{itemize}
\end{definition}
}

\todo{
\textbf{{Constraints.}} 
In our temporal property graph model, we impose two constraints to enforce that the graph at any time point corresponds to a valid property graph. At any time point $t$:  (1) An edge exists only if both source and destination vertices exist at $t$. Formally, if $e \in E$, $\sigma(e,t)=true$, and $\rho(e)=(v_1,v_2)$, then $\sigma(v_1,t)=true$ and $\sigma(v_2,t)=true$; (2) A property can only take on a value during the time period when the corresponding vertex or edge exists. Formally, if $\tau(o,k,t)=val$, where $ o\in(N \cup E)$, $k\in\mathcal{K}$, $val\in\mathcal{V}$, then $\sigma(o,t)=true$. 
}

\renewcommand\thelstlisting{\arabic{lstlisting}}
\maintext{
\begin{lstlisting}[mathescape,caption={Syntax of Temporal-enhanced Cypher},label={lst:cypher}, abovecaptionskip=0pt]
[OPTIONAL] MATCH pattern_tuple 
           [WHERE expr] 
           [FOR TT AS OF expr| FOR TT FROM expr TO expr]
\end{lstlisting}
}
%%%%% Extended
\extended{
\begin{figure*}[] % 开始双栏
\begin{lstlisting}[mathescape,caption={Syntax of Expressions, Queries, and Clauses in Cypher Enhanced with Temporal Features},label={lst:cypher}, abovecaptionskip=0pt,escapeinside=&&]
  //EXPRESSIONS
  expr::=v|a| $f$@(@expr_list@)@  $v \in \mathcal{V}, a \in \mathcal{A}, f \in \mathcal{F}$ &{\hfill values/variables }&  
         |expr.$k|$ @{}@ | @{@prop_list@}@ &{\hfill maps    }&
         |@[]@ | @[@expr_list@]@ | expr IN expr | expr@[@expr@]@ | expr@[@expr..@]@ | expr@[..@expr@]@ | expr@[@expr@..@expr@]@ &{\hfill lists    }& 
         |expr START WITH expr | expr ENDS WITH expr | expr CONTAINS expr &{\hfill strings    }&
         |expr OR expr| expr AND expr| expr XOR expr| NOT expr | expr IS NULL | expr IS NOT NULL &{\hfill logic }&
         |expr @<@ expr| expr @<=@ expr| expr @>=@ expr| expr @>@ expr | expr @=@ expr | expr @<>@ expr &{\hfill inequalities    }&
  expr_list::=expr | expr@,@ expr_list &{\hfill expression lists    }&

  //QUERIES
  query ::=query$^{\circ}$ | query UNION query | query UNION ALL query &{\hfill unoins }&
  query$^{\circ}$::=RETURN ret | clause query$^{\circ}$ &{\hfill sequences of clauses }&
  ret::= @*@ | expr [AS $a$] | |ret@,@ expr [AS $a$] &{\hfill return lists }&

  //CLAUSES
  clause::= [OPTIONAL] MATCH pattern_tuple [WHERE expr] &\underline{[\color{blue}{FOR TT AS OF} \color{black}{expr} | \color{blue}{FOR TT FROM} \color{black}{expr} \color{blue}{TO} \color{black}{expr}]}& &{\hfill matching clauses }&
            |WITH ret [WHERE expr] | UNWIND expr AS $a$ $a \in \mathcal{A}$ &{\hfill relational clauses }&
  pattern_tuple::= pattern | pattern@,@ pattern_tuple &{\hfill tuples of patterns }&
\end{lstlisting}
\end{figure*} % 结束双栏
}

\todo{In our model, each graph object comprises multiple corresponding versions, including one current/latest version and potentially several historical versions. 
Unlike existing works such as T-GQL, which assigns a time period to each graph object, our model assigns the time period to each version of a graph object (vertex or edge).
For example, as depicted in Figure~\ref{fig:exp2},
consider updating the entity ``Phone''.
In existing models, this update results in retaining two entire ``Phone'' vertices within the same graph, leading to two redundant unchanged ``Owns'' edges.
In contrast, we create a new version of the ``Phone'' vertex and re-link the ``Owns'' edge to this version, with changed attributes stored in the historical version.
Consequently, our model is less complex but more efficient by avoiding the creation of redundant vertices and edges.}
% A graph object (vertex or edge) may have one or more versions, each associated with $\omega$ to convey various semantics at different time periods. 
At any given time $t$, a graph object version with $\omega=[st,ed)$ is said to be \textbf{{\visible}} if  $st\leq t<ed$. We classify a graph object version as a current version if it is legal at the current time, and
as a historical version if it is not legal at the current time.

\todo{
\textbf{{Graph operations.}} 
Our temporal model supports diverse graph operations as follows. Assume these graph operations are issued by a transaction committed at time $t_1$.
\begin{itemize}[leftmargin=*,itemsep=2pt,topsep=0pt,parsep=0pt]
\item \textit{Creating} a vertex or an edge: This involves adding a vertex or edge with 
a current version having a time period $\omega=[t_1, +\infty)$.
\item \textit{Deleting} a vertex or an edge whose current version is with $\omega=[st, +\infty)$: This entails updating $\omega$ to  $[st, t_1)$.
\item  \textit{Updating} a vertex or an edge whose current version is with $\omega=[st, +\infty)$: 
This marks the current version as a historical version by updating $\omega$ to $[st, t_1)$ and generates a new current version with $\omega=[t_1, +\infty)$ representing the up-to-date semantics. 
\end{itemize}
}

%对应graph model的时间，举例子
% \vspace{-2mm}
% \begin{example}
% \label{exp:tpgm}
\textbf{Example {\expmodel}.} 
According to Definition \ref{def:tpgm}, we present the corresponding temporal property graph of Example {\expgraph} in Figure \ref{fig:exp_model}. Here, $\Omega=[t_0, +\infty]$, $N=[ v_0, v_1, v_2, v_3]$, $E=[ e_0, e_1, e_2, e_3]$. Each vertex and edge owns a current version and several historical versions from $t_0$ to $t_n$. 
For brevity, we omit graph states before $t_n$. At $t_n$, there exists three vertices ($v_0$, $v_1$ and $v_2$) and two edges ($e_0: (v_0,v_1))$ and $e_1:(v_0,v_2)$).
For instance, $v_2$ owns a current version $v_{2.7}$, which has 
a unique id $2$, a label ``Phone'', a property with the key-value pair (IP, Singapore), and a lifespan $\omega=[t_{n},\infty)$. 
Subsequently, at $t_{n+1}$, consider there is a customer purchase transaction.
% to change the graph state. 
It updates the properties of $v_1$ and $v_2$, resulting in new versions for each of them. 
Take $v_2$ as an example: it marks $v_{2.7}$ as a historical version and generates a new current version $v_{2.8}$. 
Specifically, $\tau$ maps $v_{2.8} \times {IP} \times [t_{n+1},+\infty)$ to $\operatorname{\textit{New York}}$ and maps $v_{2.7} \times {IP} \times [t_{n},t_{n+1})$ to $\operatorname{\textit{Singapore}}$. 
% \makesure{ Note $v_2$ exists during the {\visible} periods from $v_{2.0}$ to $v_{2.8}$. $\sigma$ maps $(v_2) \times [t_{0},+\infty)$ to $true$, indicating $v_2$ is created at $t_0$ and exists in the database from then on.}
% \extended
We regard $v_{2.7}$ as legal at $t_n$, but not legal at $t_{n+1}$.
{Moreover, this transaction also creates $v_3$, $e_2$ and $e_3$, which have only the current version with $\omega=[t_{n+1}, +\infty)$.
All the aforementioned graph operations adhere to the defined constraints. 
For instance, $e_2$ can be successfully  created at $t_{n+1}$ only after verifying linked vertices $v_3$ and $v_1$ exit at $t_{n+1}$ (Constraint 1).}  \qed

\subsection{Temporal Graph Query Language}
\label{sec:temporal_query_language}
% {\tgdb} is designed to be an online system that inherently integrates temporal support.
% Therefore, b
{\tgdb} incorporates a temporal-enhanced Cypher~\cite{cypher}, which extends the standard syntax defined in OpenCypher \cite{cypher} 
% \cz{of[jm:for]} read queries
to support temporal queries.
% In addition, {\tgdb}
% extends the standard syntax defined in OpenCypher~\cite{cypher} to support temporal queries, 
As illustrated in Listing \ref{lst:cypher}, {\tgdb} introduces two temporal syntax extensions in the \texttt{MATCH} clause \maintext{(line 3):}
\extended{(underlined in {Listing \ref{lst:cypher}}):}
% it introduces two syntax extensions within the \texttt{MATCH} clause of Cypher (line 3):
% For fetching historical data, there are two syntax extensions introduced: 
(1) FOR $TT$ AS OF $t$, which retrieves all graph objects {\visible} at time $t$, and 
(2) FOR $TT$ FROM $t_1$ TO $t_2$, which locates all graph objects consistently {\visible} within the time range from $t_1$ to $t_2$. 
The former is referred to as ``time-point'' queries, while the latter is known as ``time-slice'' queries. 
Users can apply any time conditions to temporal queries,
spanning a wide time range from the oldest historical records up to the most recent updates.
Further, apart from retrieving temporal graph data of user interest using temporal queries, {{\tgdb} allows users to submit common (non-temporal) queries and data manipulation operations (creating, updating, and deleting) with the standard Cypher syntax.} 

% \begin{comment}
\textbf{Example {\expquery}.}
\todo{Consider the query ``What was Jack's phone IP at $t_n${''}. This query can be answered by issuing the following statement, where the temporal syntax is underlined: ``MATCH (:Customer {name: `Jack'})-[r]-(p:Phone) \uline{FOR TT AS OF $t_n$} return p.IP. ''
}

\begin{figure*}[]
\centering    %居中
\includegraphics[width=0.99\textwidth]{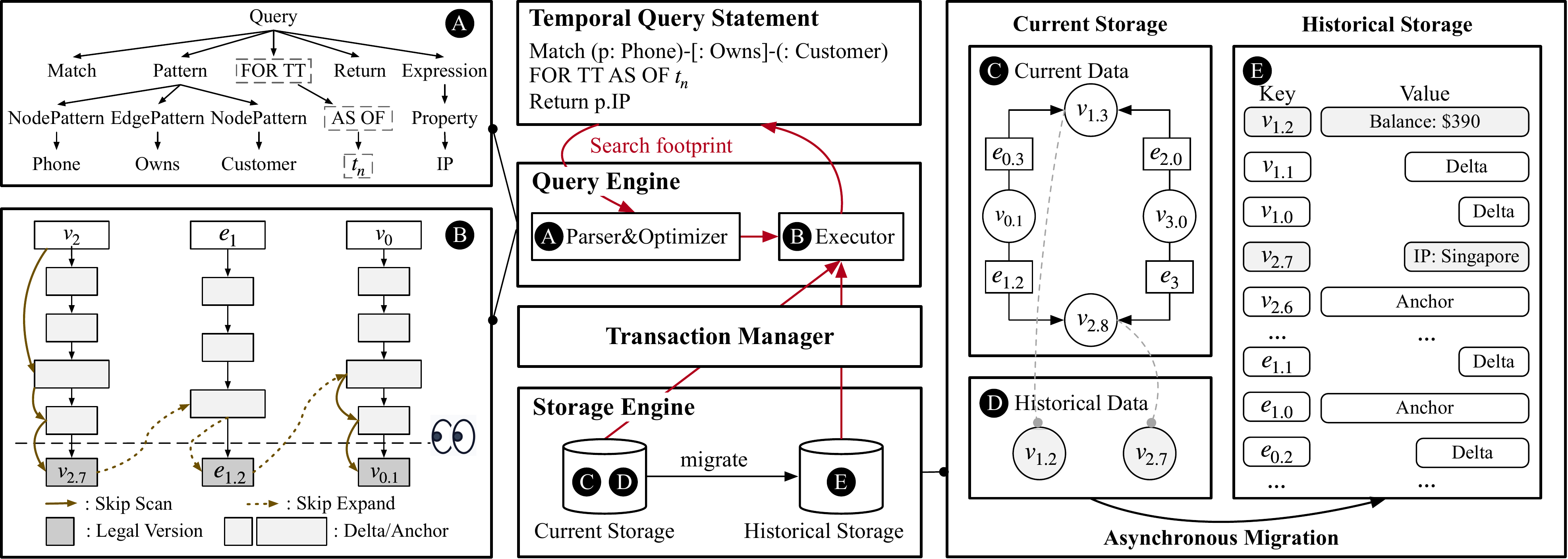}
% \centering
% \vspace{-2mm}
\captionsetup{justification=raggedright}
\caption{An Overview of {\tgdb} - {\rm {\tgdb} consists of a temporal query engine, a hybrid storage engine, and an MVCC-based transaction manager.
We employ an "anchor+delta" strategy to reduce the historical storage overhead, while using an anchor-based version retrieval technique to ensure efficient temporal query processing.}}
% \vspace{-3mm}
\label{fig:overview}
\end{figure*}

% \vspace{-3mm}
\section{System architecture} 
\label{sec:overview}
In this section, we introduce the system architecture of {\tgdb} as shown in 
% the middle part of 
Figure ~\ref{fig:overview}. 
% \todo{
{\tgdb} includes a transaction manager 
% component to support transactions, 
which enables handling a sequence of graph operations with ACID properties.
We process transactions by employing the Multi-Version Concurrency Control (MVCC)~\cite{mvcc}.
% utilize the transaction management component to process transactions, 
MVCC ensures that transactions only see a consistent snapshot of the data that is \textbf{visible} to them, thus enabling multiple transactions to work concurrently without interfering with one another~\cite{mvcc,mvcc2,SI1,SI2,SI3,DBLP:journals/tkde/ZhaoLZHLPD23}.
Given our primary focus on temporal data management, we now describe how we utilize the MVCC mechanism to manage temporal data effectively.
% The system's fundamental unit is a transaction, which manages  
{\tgdb} supports built-in temporal features through two major components: the storage engine and the query engine. 

 % (create, read, update, and delete)
\subsection{Storage Engine}
\label{sec:overview_storage}
The storage engine of {\tgdb} has two physically isolated storages: current storage and historical storage.
The current storage typically maintains the 
current versions
% most recent versions 
of graph objects.
% \todo{It handles various graph operations and serves both read and write queries.}
% and serves for both read and write requests.
% both read-only, write-only, and read-write query requests.
In contrast, the historical storage manages 
historical versions
% the previous versions 
of graph objects, which are 
% \todo{generated by graph operations} and 
asynchronously migrated from the current storage. 

\textbf{Current storage.} As discussed in Section \ref{sec:graph_model}, graph involves under various graph operations. 
To efficiently record these changes, {\tgdb} builds its current storage as a multi-version storage, maintaining multiple versions for each graph object.
Each graph object includes one current version
% most recent version 
retaining the up-to-date state and is linked to a list of historical versions preserving the previous states. 
When a graph object is updated,
% or deleted, 
instead of directly overwriting the data, we create a new current version and move the previous one to the list of historical versions.
% create a new version and move the previous version to the list of historical versions.
We further integrate time dimensions into the data layout and modification paradigm to trace accurate graph evolution.
% following the MVCC mechanism.
% Based on this data layout, we design a modification paradigm on the graph object to ensure the graph evolution is precisely maintained. 
We will introduce the details 
% including its data layout and modification paradigm, 
in Section \ref{sec:current-storage}.

\textbf{Historical storage.} 
{\tgdb} does not store historical versions in the current storage permanently. 
Instead, we migrate them to the historical storage for long-term maintenance. 
To handle the potentially large volume of historical data, we properly compress the historical storage.
% of {\tgdb}.
% utilizes a compact historical storage model while ensuring fast access.
We organize migrated historical versions in a key-value format. 
% groups three types of deltas (VP, EP, and VE deltas) into their respective segments. In each segment, the historical delta is organized in a key-value format. 
% Each key-value pair is a version of a specific graph object.
The key contains the metadata of a historical version, including vertex/edge id and version's lifespan $\omega$, while the value holds detailed properties of this version.
Instead of retaining all properties for every version, we organize versions 
% of a vertex or edge 
in an ``anchor+delta'' manner.
We utilize deltas to record relative differences between subsequent versions, minimizing the storage cost of ever-growing historical data.
In addition, after a series of deltas, we maintain an anchor to preserve the complete state of a graph object, facilitating the reconstruction process when executing temporal queries. 
We will introduce 
% the design of historical storage 
the details in Section \ref{sec:historical-storage}.
% By doing so, the historical storage has compact storage overhead while enabling efficient retrieval simultaneously.

%historical storage: anchor+delta
%migration
\textbf{Asynchronous migration.} 
{\tgdb} utilizes an asynchronous migration approach to transfer historical data from the current storage to the historical storage.
Rather than triggering a migration immediately following an update, this migration is postponed and occurs during the garbage collection of MVCC. 
This design ensures that transferring ever-growing historical data is lightweight, minimizing its overhead on the current storage. 
% excessively burden the current storage. Furthermore, this migration is asynchronous, makiitsng it
% lightweight and non-intrusive to the current storage.
We will present our asynchronous migration in Section \ref{sec:historical-storage}.

% \begin{example}
% \label{exp:overview_storeg}
\textbf{Example {\expoverviewstorage}.}
In the right part of Figure \ref{fig:overview}, we demonstrate how {\tgdb} stores the customer purchase graph as presented in Example {\expmodel}. 
% Take the customer purchase graph in Figure \ref{fig:exp} as an example. 
In the current storage, component \numberedcircle{C}records the current versions at $t_{n+1}$.
Besides, the historical versions
% previous versions, representing the previous graph state 
at $t_n$ ($v_{1.2}$ and $v_{2.7}$ in this case), \todo{are} stored in component \numberedcircle{D}.
Take $v_2$ as an example. To capture the change in $v_2$'s IP from Singapore to New York at $t_{n+1}$,
{\tgdb} performs two steps. First, it
updates $v_2$ in place to create a new current version $v_{2.8}$.
% , representing the current state. 
Second, to maintain the previous state, {\tgdb} generates a historical version $v_{2.7}$, which is linked to $v_{2.8}$ in a chain and managed by MVCC.
We migrate historical data in component \numberedcircle{D}to the historical storage (component \numberedcircle{E}) asynchronously.
% In Figure \ref{fig:overview} \numberedcircle{E}, 
In the historical storage, the historical versions, $v_{1.2}$ and $v_{2.7}$, are organized as an anchor (represented as a long rectangle) and a delta (represented as a short rectangle).
\qed

\subsection{Query Engine}
% \textbf{Query engine.} 
The query engine is responsible for handling user-issued queries, retrieving relevant graph data from the hybrid storage engine.
Adhering to the ``textbook'' separation of components, {\tgdb} consists of a parser,
% a logical planner, 
an optimizer, 
% a physical planner, 
and an executor.
While inheriting those components from existing graph databases, 
{\tgdb} further extends them to support temporal queries.

\textbf{Parser and optimizer.} The parser translates queries and generates the corresponding syntax tree for the query optimizer.
% Besides handling regular non-temporal queries defined in Cypher~\cite{cypher}, {\tgdb} also supports temporal queries defined in Section~\ref{sec:temporal_query_language}. 
To accommodate the syntax of
temporal queries, {\tgdb} extends its lexical, syntactic, and semantic analyses to recognize time qualifiers as defined in Section~\ref{sec:temporal_query_language}.
% The query engine then employs
Leveraging the resulting syntax tree, the optimizer generates the execution plan for the executor.

\textbf{Executor.} 
\todo{
{\tgdb} builds upon and extends two core fundamental operations from traditional graph databases: scan and expand. The scan operator retrieves the required vertex versions for each query, while the expand operator fetches relevant edge and adjacent vertex versions. 
% While inheriting common capabilities from existing executors, w
We enhance these operators to provide the consistent and efficient processing of temporal queries.
}
To ensure consistent query results, we obtain current and historical data separately from two storage engines and then combine the results together. For current data, we follow the conventional query mechanism, which simply executes the plan over the current storage and captures visible graph object versions under MVCC's snapshot visibility check~\cite{mvcc}. 
However, accessing historical data solely from the historical storage may yield incomplete results. Due to asynchronous migration, a portion of data is still in the current storage. 
% For historical data, however, accessing solely from the historical storage could yield incomplete results. Due to asynchronous migration, a partial of data are still in the current storage.  
To address this, we introduce a legal check mechanism that retrieves relevant data from both storages. 
This mechanism verifies if a version is legal within the given time condition to extract appropriate versions.
Note that the snapshot visibility check is required when retrieving historical versions in the current storage.
These steps ensure that the requested version(s) is from the consistent snapshot(s), thereby guaranteeing consistency.

\todo{To efficiently traverse historical versions from substantial historical data, we propose an anchor-based version retrieval technique to minimize unnecessary traversals.
For the scan operator, we fetch relevant vertex versions that satisfy the provided temporal condition. 
To reconstruct a desired version, we directly locate the nearest anchor with a lifespan $\omega$ aligning with the query time constraint. Subsequently, we traverse subsequent deltas from the obtained anchor, applying all fitting deltas.
Regarding the expand operator, we further eliminate unnecessary traversals by directly locating the corresponding edge and adjacent vertex anchors using acquired held vertex versions.
Further elaboration 
% on the anchor-based version retrieval technique 
can be found in Section \ref{sec:query-engine}.
}

\textbf{Example {\expoverviewquery}.}
Figure \ref{fig:overview} \numberedcircle{A} depicts a simplified syntax tree for a given temporal query statement. Based on it, {\tgdb} then utilizes the executor to fetch query results from the hybrid storage engine. 
Figure \ref{fig:overview} \numberedcircle{B} illustrates the search footprint of the given query statement to answer ``What were the phone IPs of all customers at $t_n$''.
% to answer ``What was Jack's phone IP at $t_n$''. 
We only reconstruct four relevant graph vertices/edges.
% with an efficient skip loop-up strategy.
We start to scan the vertex $v_2$, which we are interested in. We skip to seek its nearest anchor and collect all relevant deltas, to reconstruct the {\visible} version $v_{2.7}$ we want. 
We then expand $v_{2.7}$ to get its linked edge $e_{1.2}$ and adjacency vertex $v_{0.1}$ without traversing the entire version chain of $e_1$ and $v_0$. 
% {\color{red} All operators are under transaction management to ensure data consistency.}
\qed
% \vspace{-3mm}
\section{Hybrid Storage Engine}
\label{sec:hybrid_storage}
In this section, we now elaborate on the design of {\tgdb} 's hybrid storage engine.

\subsection{Current Storage} 
\label{sec:current-storage}
Inheriting existing native graph databases~\cite{GuptaMS21}, {\tgdb} organizes graph data into three storage components: (i) vertex properties (VP), (ii) edge properties (EP), and (iii) graph topology, i.e., vertex's incoming and outgoing edges (VE). 
Like most native graph databases~\cite{neo4j, Memgraph,graphflow}, we retain the topology within the vertices, enabling swift neighborhood traversal for each vertex. However, it is not trivial to record 
graph evolution under this design. The graph could change in not only its semantics, i.e., properties of graph objects, but also its structure. We have to identify different types of operations applied on the graph. For example, we should prevent the creation of a new vertex version when the vertex's relevant graph topology changes but its properties remain unchanged. 
To address this problem, we associate the time dimension to each independent storage component to separately record semantic changes and structural changes. 

\begin{figure}[]%width=1\textwidth. scale=0.6
\centering    %居中
\includegraphics[width=0.45\textwidth]{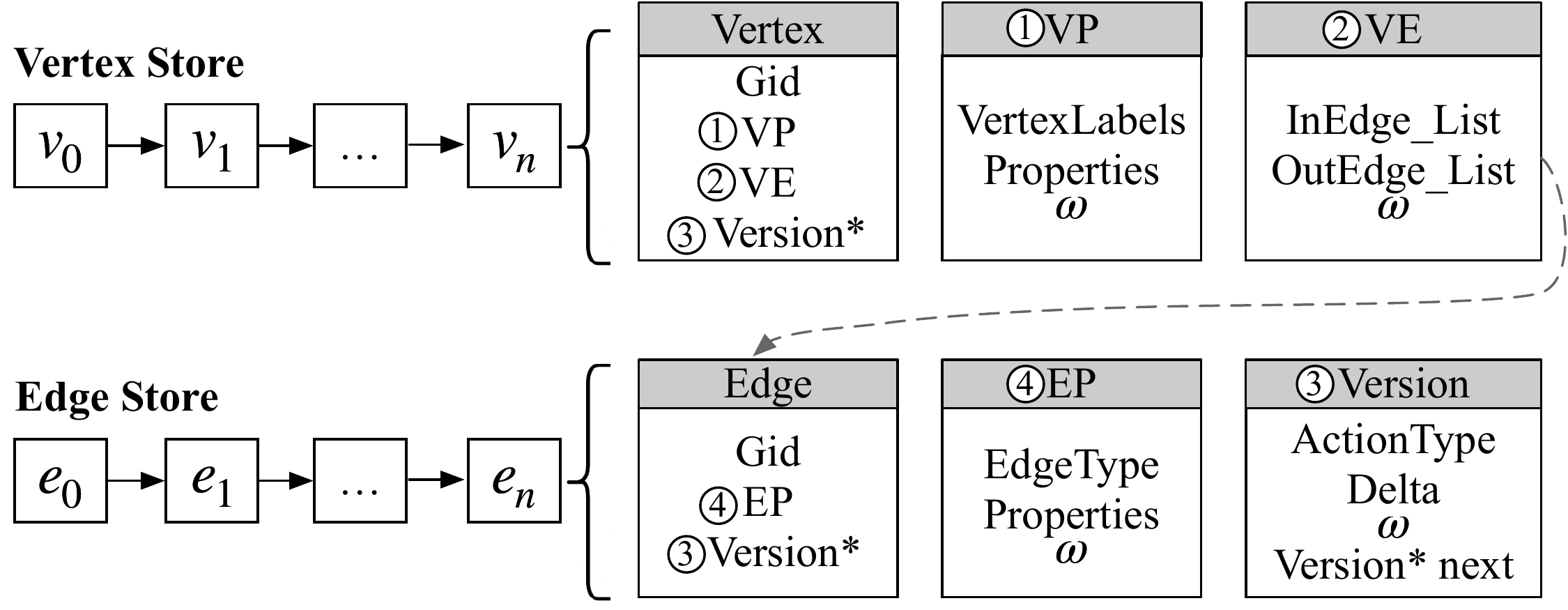}
\centering
% \vspace{-2mm}
 \centering
 \caption{Data Layout of Current Storage}
\label{fig:current_storage}
  % \vspace{-2mm}
\end{figure}
\label{sec:hybrid-storage}

\textbf{Data layout.} As shown in Figure \ref{fig:current_storage}, the data store comprises two components: the Vertex Store, which maintains a list of vertex objects, and the Edge Store, which stores a list of edge objects.
Every vertex object $v$ has a unique graph identifier $Gid$, a VP part, a VE part, and a pointer to a linked list of historical versions (version chain). While every edge object $e$ has a unique identifier $Gid$, an EP part, and a pointer to a linked list of historical versions. Specifically, 
\begin{itemize} %enumerate itemize
[leftmargin=*,itemsep=2pt,topsep=0pt,parsep=0pt]
\item The VP part stores a set of vertex labels and property value pairs associated with the current version of $v$, along with a time period $\omega$ indicating 
$v$'s current semantic lifespan.
\item The VE part keeps track of the current version of $v$'s incoming and outgoing edges, with each entry in a list of (edge $Gid$, neighbor vertex $Gid$) pairs. The VE part also includes $\omega$ to record 
% the structural lifespan of $v$'s current version.
$v$'s current structural lifespan. 
\item The EP part stores an edge type and property value pairs of the current version of $e$, along with a time period $\omega$ indicating $e$'s current semantic lifespan.
\item Each historical version contains: an action type indicating the changes made to a VP part, VE part, or EP part, a delta recording the steps to revert the changes to restore the previous version, a time period $\omega$ capturing the lifespan of the historical version, and a pointer to the next historical version.  All historical versions generated by the same transaction are clustered in an undo buffer following the MVCC mechanism.
\end{itemize}

\textbf{Modification paradigm.} We now discuss how 
{\tgdb} evolves the graph to handle various graph operations.
% As discussed in Section \ref{sec:graph_model}, graph operations include (1) creating vertices or edges, (2) deleting vertices or edges, and  (3) updating the properties of vertices or edges.
Suppose the graph operation is invoked by a transaction $T_i$ whose commit time is $t_i$. The modification paradigm on the graph data layout is as follows.
\begin{enumerate} %enumerate itemize
[leftmargin=*,itemsep=2pt,topsep=0pt,parsep=0pt]
\item When a vertex is created, we create a vertex object, set its VP part's time as $\omega=[t_i, +\infty)$, set its VE part's time as $\omega=[-\infty, +\infty)$, and link it in the vertex object lists. 
\item When an edge is created, we create an edge object, set its EP part's time as $\omega=[t_i, +\infty)$, and link it in the edge object lists. We also create connections to relevant vertex objects by setting their VE part's time to $\omega=[t_i, +\infty)$ if their previous VE part's time is $[-\infty, +\infty)$.
\item When updating/creating/deleting a property value of a vertex object with $\omega=[t_j, +\infty)$, we first update relevant property values in the VP part and set its time as $\omega=[t_i, +\infty)$. 
Next, we create a historical VP version capturing the state of the vertex prior to the modification and set its $\omega$ as $[t_j, t_i)$. This VP version is then linked to the vertex's version chain. 
Updating a property value of an edge follows the same logic.

\item When a vertex is deleted, we first delete all property values of the relevant vertex object, following the (3) paradigm, and then delete all connected edges, following the (5) paradigm.

% decompose it into the deletion of all property values of the relevant vertex object, followed by the (3) paradigm, and deletion of all linked edges, followed by the (5) paradigm. 

\item When an edge is deleted, we decompose it into the deletion of all property values and the deletion of connections with relevant vertices. 
The former acts on the edge object, following the (3) paradigm.
% Suppose the edge state chunk is associated with time $\omega=[t_j, +\infty)$. We first clear all property values of the chunk and place a flag ``delete'' to notify the system and then create an EP delta to record all property values with $\omega=[t_j, t_i)$. 
The latter acts on the source and destination vertex object. Take the source vertex object with $\omega=[t_j, +\infty)$ as an example. We first update its VE part to delete this edge from outgoing edge lists and set its VE part's time as $\omega=[t_i, +\infty)$. We then create a VE version to record this deleted edge with $\omega=[t_j, t_i)$ and link it to the edge's version chain.
\end{enumerate}

\begin{figure}[]%width=1\textwidth. scale=0.6
\centering    %居中
\includegraphics[width=0.45\textwidth]{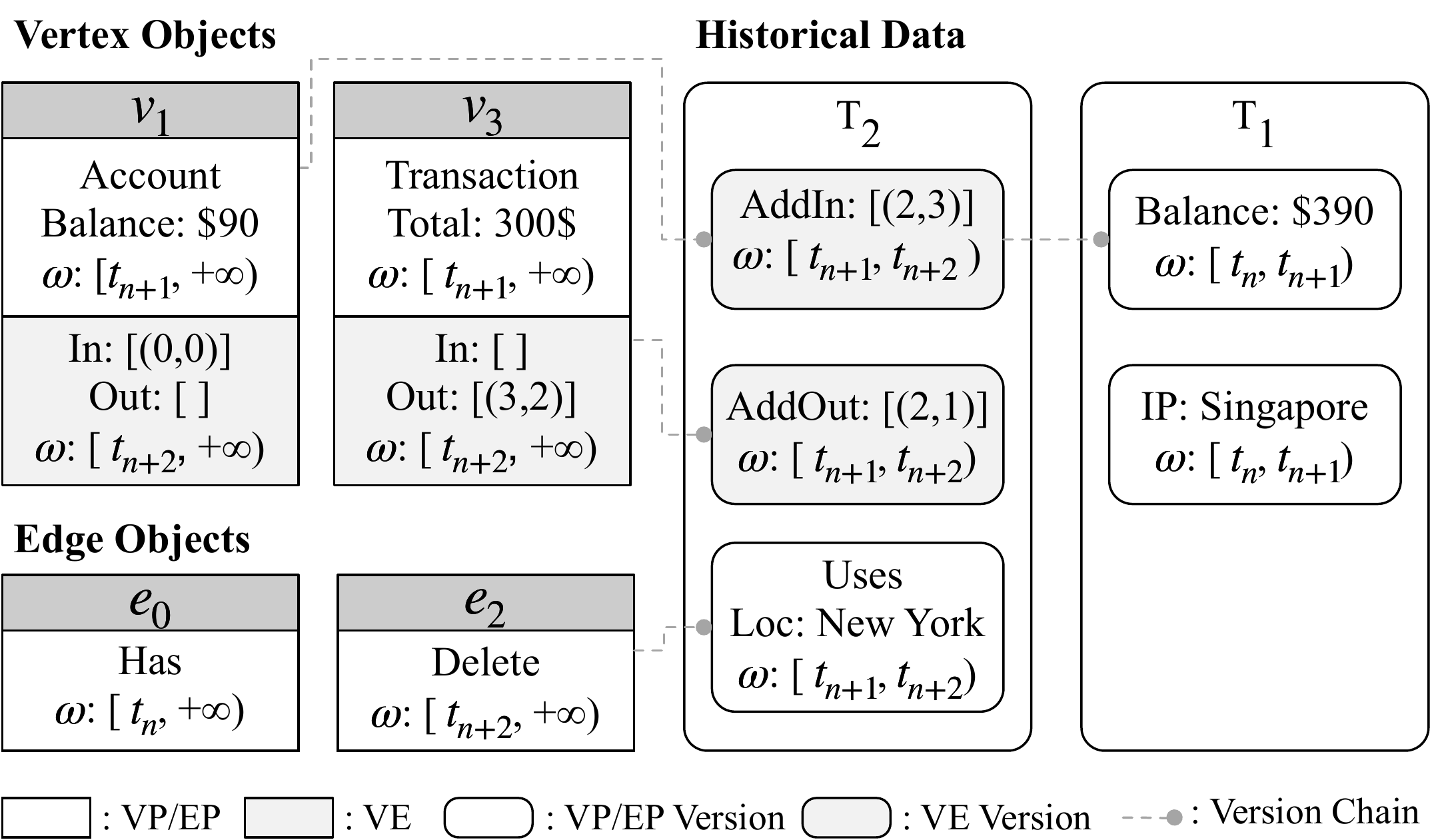}
\centering
% \vspace{-2mm}
 \centering
 \caption{An Example of Current Storage Layout}
  % Figure \ref{fig:exp_model}
\label{fig:current_storage2}
  % \vspace{-2mm}
\end{figure}

\textbf{Example {\expocurrent}.}
We illustrate the data layout of current storage.
As shown in Figure \ref{fig:current_storage2}, we reconsider Example {\expmodel}.
To further represent the structural change, we suppose an event deleting $e_2$ at $t_{n+2}$.
We focus on two vertices $v_1$ and $v_3$, and two edges $e_0$ and $e_2$ to showcase the graph evolution. 
% Each vertex object contains a graph identifier, a VP part, a VE part, and a 
% pointer to a linked list of historical versions. Similar to the edge object. 
At $t_{n+1}$, the transaction $T_1$, representing a customer purchase, is committed. It updates the VP part of $v_1$ and $v_2$, generates two VP versions linked to them, and creates graph objects $e_2$, $e_3$, and $v_3$. Figure \ref{fig:current_storage2} shows these elements except $e_3$ and $v_2$ due to space limitations.
At $t_{n+2}$, $T_2$ is committed to delete $e_2$, which affects $v_1$, $v_3$, and $e_2$ objects. It first acts on $e_2$'s EP part to clear all semantic information and generates an EP version to record the previous edge state. Then, it acts on the VE part of \todo{$v_1$ and $v_3$} and generates two VE versions. \qed
% \end{example}

\subsection{Historical Storage}
\label{sec:historical-storage}

%历史数据在KV中的组织
In MVCC, historical versions are not retained in the current storage permanently. Instead, once these versions are no longer needed by any active transaction, they are safely removed through garbage collection (GC) to optimize the performance of the current storage.
{\tgdb} utilizes this mechanism to transfer those inaccessible versions to the historical storage for long-term maintenance. For the sake of communication, historical versions in the current storage are referred to as ``unreclaimed'', while those in the historical storage are referred to as ``reclaimed''. In this subsection, we first present the optimized key-value format used for storing historical versions and then outline the process of migrating unreclaimed versions into the historical storage.

\maintext{\begin{figure}[t!]%width=1\textwidth. scale=0.6
\centering    %居中
\includegraphics[width=0.32\textwidth]{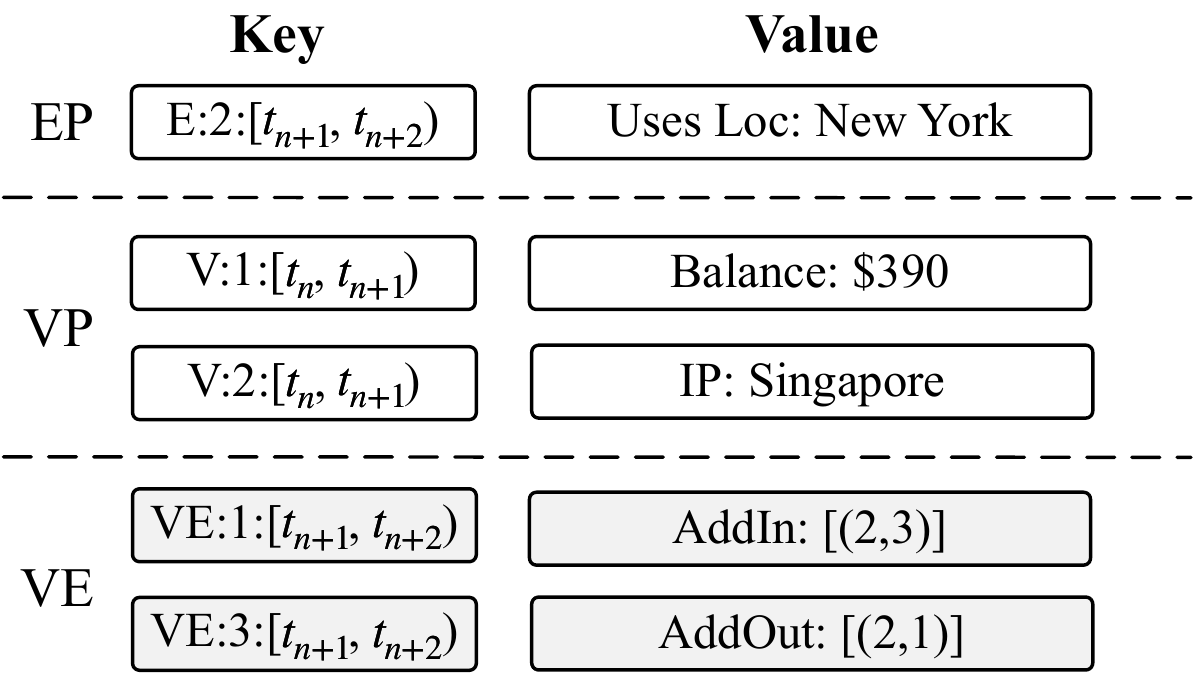}
% \vspace{-2mm}
 \centering
\caption{Key-value Format in Historical Storage}
\label{fig:kvformat}
  % \vspace{-2mm}
\end{figure}}

%key value的组织形式，总体是key:type+Gid+tt value就是deltas剩余的信息。
\textbf{KV format.} 
Reclaimed versions are organized in a key-value format, where the key represents the metadata of the version and the value stores the corresponding detailed information. 
% Three types of deltas, VP, EP, and VE deltas, are grouped into their respective segments.
{\tgdb} groups three types of historical versions (VP, EP, and VE versions) into their respective segments.
In each segment, the key is formed by combining three elements: the $Type$ prefix, the graph identifier $Gid$, and $\omega$ of the version. The $Type$ prefix indicates the type of version contained in the segment: `V' for VP versions, `E' for EP versions, and `VE' for VE versions. 
The $Gid$ is a unique identifier of the graph object linked to the version, while  $\omega$ represents the version's lifespan. As for the value field, it contains the remaining semantic information, i.e., the delta of the version recording only data changes compared to the previous version. 
% This design guarantees that the key is unique for the entire temporal graph. 
By organizing the data in this way, 
% It is worth mentioning that 
data sharing the same prefix in the key are physically clustered together in a SkipList~\cite{rocksdb}, which ensures different versions of the same entity are automatically sorted based on their lifespan $\omega$. As a result, it becomes efficient to retrieve in chronological order. 
Figure \ref{fig:kvformat} depicts the reclaimed historical versions' KV format of unreclaimed historical versions in Figure \ref{fig:current_storage2}.

\extended{\begin{figure}[t!]%width=1\textwidth. scale=0.6
\centering    %居中
\includegraphics[width=0.32\textwidth]{figures/kvstore.pdf}
\vspace{-2mm}
 \centering
\caption{Key-value Format in Historical Storage}
\label{fig:kvformat}
  % \vspace{-2mm}
\end{figure}}

\textbf{Anchor+delta.} We utilize deltas to reduce the storage overhead. However, retrieving a reclaimed graph object requires assembling the latest version with all previous deltas, incurring significant reconstruction costs for long retrieval histories. To mitigate this, 
% regular 
we introduce anchors at intervals in the delta data, where an anchor represents the complete state of a graph object. To differentiate anchors from deltas in the KV store, we append a one-bit character suffix to the key's $Type$, where `A' denotes anchors and `D' represents deltas.
% Anchors can shorten the recovery chains, thereby accelerating the reconstruction process. 
Specifically, to reconstruct a certain reclaimed version $o_1$, we seek its most recent anchor $o_2$, collect all deltas from $o_2$ to $o_1$, and combine them to reconstruct $o_1$.

\todo{ 
% It is intuitive to globally place an anchor after a constant $u$ value of deltas for each graph object. 
% \cz{The evolution frequency varies among graph objects, with some undergoing frequent changes (e.g., a phone's IP) while others change rarely or never (e.g., a customer's name). To address this variability,}%[jm:fillspace]
We propose an adaptive anchoring approach, which assigns different anchor intervals $u$ for different graph objects.
A higher $u$ leads to more deltas between successive anchors, potentially increasing query latency but reducing storage overhead. 
Therefore, we assign a larger $u$ to frequently updated objects to strike a balance between query latency and storage efficiency.
Given a graph object $o$, we use Equation \ref{eq:anchor_threshold} to determine its $u$ according to the update frequency $f(o)$, the number of updates conducted on $o$.
\begin{equation}
u_o=\left\{\begin{array}{lll}
\tau_1 *c & f(o) \leq \tau_1 & \text { low frequency } \\
\tau_2 *c & \tau_1< f(o) \leq \tau_2 & \text { medium frequency } \\
\tau_2 ^2 /\tau_1 *c & \tau_2 \leq f(o) & \text { high frequency }
\end{array}\right.
\label{eq:anchor_threshold}
\end{equation}
}This equation categorizes update frequencies into three levels (low, medium, and high) using two thresholds ($\tau_1$ and $\tau_2$).
Each frequency level is assigned a specific anchor interval, calculated heuristically by multiplying the respective threshold values with a predefined parameter $c$.
% We determine the graph object $o$'s anchor interval $u$ by assessing which frequency level $f(o)$ falls into.
Currently, {\tgdb} enables users to set parameters in Equation~\ref{eq:anchor_threshold}, such as $\tau_1$, during database initialization and runtime. 

\textbf{Data migration.} 
In MVCC, unreclaimed historical versions will be physically removed from the current storage through an asynchronous GC phase when their relevant commit transactions are no longer active. 
% Those inaccessible versions are pruned by a Garbage Collector periodically.
{\tgdb} collects those versions and migrates them to the historical storage for long-term maintenance, as detailed in Algorithm \ref{alg:migration}. 
We use $undo$ to maintain the unreclaimed version to be migrated (line 2) and $kv$ to store reclaimed data in a key-value format (line 3).
Each unreclaimed version $undo$ in the $CT$ is initially encoded into a key-value pair $kv$ (line 5). Subsequently, we store it in the historical KV store (line 6).
Finally, we lock $undo$ in the version chain and physically delete it  (line 7).

\begin{algorithm}[tp]
 \small
 \SetKwInOut{Input}{input}
 \SetKwInOut{Output}{output}
 \caption{Data migration}
 \label{alg:migration}
 \SetKwFunction{FMain}{Migrate}
    \SetKwProg{Fn}{Function}{:}{}
    \Fn{\FMain{$CT$}}{
    \KwIn { $CT$, committed transaction no longer active;}
    %\KwOut  {$\Sigma$, the result set;}$ deltas  \longleftarrow Rec.deltas$\;
    $undo \leftarrow \emptyset$; //unreclaimed version\;
    $kv \leftarrow \emptyset$;
    //reclaimed version\;
    % $undos \leftarrow undos \cup$ \texttt{getUndos}($CT$)\;
    % reclaim versions of $CT$\; 
    \ForEach{$ undo \in CT$}
    {
       $kv$=\texttt{encode2KV}($undo$)\;
       \texttt{KV\_store::put}($kv$)\;
       physically delete undo\;
       % $kvs \leftarrow kv$
    }
    % \texttt{KV\_store::putMultiples}($kvs$)\;
    % reclaimed versions of $CT$\;
}
\textbf{End Function}
\end{algorithm}
\section{Temporal Query Engine}
\label{sec:query-engine}
{\tgdb} inherits and extends scan and expand operators to empower consistent and efficient temporal query processing.

\subsection{Scan Operator}
\label{sec:scan}
{\tgdb} uses the scan operator to efficiently fetch vertex versions while ensuring data consistency for both current data and historical data.
% with both efficient and consistent capability. 
We elaborate on it from the aspect of fetching data from each storage component.
% on fetching temporal graph data from two isolated storage components: the current storage and the historical storage. 
% design two strategies for efficiently and consistently retrieving data from two isolated storage.
% \vspace{-4mm}
When fetching data from the current storage, it is essential to ensure consistent data capture in the presence of concurrent transactions. To achieve this, we start by locating relevant vertex object(s) of interest.
For each vertex object, we first employ the snapshot visibility check~\cite{mvcc} to find a visible version of the given transaction. 
All versions preceding this visible version in the version chain are candidate legal versions we may want. 
Then we utilize a {\visible} check mechanism, which
verifies whether each candidate version $v'$ is {\visible} to the given query time condition, as per the following equation.
% \vspace{-2mm}
\begin{equation}
% o.\omega.st \leq {C}.t_2 \wedge o.\omega.ed>{C}.t_1
% \texttt{TemporalCheck}(\omega, C)=
\omega.st \leq {C}.t_2 \wedge \omega.ed>{C}.t_1
% \vspace{-1.5mm}
\end{equation}
Here, 
% $o$ is the historical version being evaluated; 
$\omega.st$ and $\omega.ed$ represent the start and end time of $v'$'s lifespan, respectively; $C$ represents the time condition of the given query with begin time $t_1$ and end time $t_2$. For a time-point query, $t_1=t_2$. 

When fetching data from the historical storage, there is no need to handle transaction conflicts as the historical storage serves read-only queries that users cannot change the data in the historical storage.
Therefore, we directly employ the {\visible} check mechanism to get desired versions.
To further enhance digging out historical versions, we employ an anchor-based skip retrieval strategy to reconstruct desired versions. To restore a specific {\visible} version $v'$, we directly seek the most recent anchor $v$ in the KV store by the probe prefix ``AV:$id$:$C$'', where `AV' represents the anchors in the VP segment, $id$ is the unique id of interest vertex and $C$ is the given query time constraint. We then assemble $v'$ with all previous versions from $v$ to $v'$. Thanks to the special design of the key-value format in the historical storage, we can leverage the probe prefix to swiftly find the nearest anchor.

\begin{algorithm}[tp]
 \small
 \SetKwInOut{Input}{input}
 \SetKwInOut{Output}{output}
 \caption{Retrieving vertices}
 \label{alg:vertex}
 
 \SetKwFunction{FMain}{VertexRead}
    \SetKwProg{Fn}{Function}{:}{}
    \Fn{\FMain{$C$}}{
    \KwIn { $C$,  temporal condition;}
    \KwOut  {$\Sigma$, the result set;}
    $ v  \leftarrow$ the vertex  which we start to scan\;
    \While{$v$}{
     % $ flag  \longleftarrow $true \;
     // fetch from the current storage\;
       \ForEach{$ v' \in (v \cup v.versions )$ } 
        {
             \lIf{{\texttt{!SnapshotCheck}}($v'$)}{ continue}
             \If{\texttt{TemporalCheck}($v'.\omega, C$)}{
           $\Sigma$ $\leftarrow$ $\Sigma$ $\cup$ $\texttt{Reconstruct}(v')$\;
         }
        }
    // fetch from the historical storage\;
     % \lIf{$flag$}{  \texttt{FetchFromKV}($v.id, C, \Sigma, flag$)}
     \texttt{FetchFromKV}($v.id, C, \Sigma$)\;
        $ v  \leftarrow v$.\texttt{next()}\;
    }
    \textbf{return} $\Sigma$\;
   }
% \textbf{End Function}

\SetKwFunction{FMain}{FetchFromKV}
    \SetKwProg{Fn}{Function}{:}{}
    \Fn{\FMain{$id, C, \Sigma$}}{
    {
     % $kv_{a} \longleftarrow$ $o$'s oldest version from current storage\;
       %\ForEach{$ Gid \in Set(Gid) $}
        {
        $ kv_{a}  \leftarrow$ KV\_store::\texttt{seeknext}($id, C$)\;
        % \lIf{$it_{a}$}{$kv_{a} \leftarrow \operatorname{copy}\left(it_{a} \rightarrow kv\right)$}
        $ kv_{d}  \leftarrow$ KV\_store::\texttt{seeknext}($ id, kv_{a}.key$)\;
        %$num\leftarrow 0$\;
        \While{$kv_{d}$ $\wedge$ $kv_d.\omega.st \leq C.t_2$}{
        %$num++$\;
        $kv_{a} \leftarrow$ \texttt{combine} $(kv_{a}, kv_{d})$\;
        %\lIf{num==1}{$kv_{a} \leftarrow null$ }          
          % \texttt{TemporalCheck}($kv_{a}$, $C$, $\Sigma$, $flag$ )\;
          \If{\texttt{TemporalCheck}($kv_{d}.\omega, C$)}{
           $\Sigma$ $\leftarrow$ $\Sigma$ $\cup$ $kv_{a}$\;
         }
          $kv_{d} \leftarrow kv_{d}.\texttt{next}()$\;
   	 }
    }

    }
	}

\end{algorithm}

% \vspace{-3mm}

Algorithm \ref{alg:vertex} shows the pseudo-code of 
% how to 
fetching vertices from the hybrid storage engine. We start by scanning from the vertex object $v$, which is either the first vertex of the whole graph or the vertex pointed by the index (line 2). We first retrieve data from the current storage (lines 5-8). We check whether $v$ and its historical unreclaimed versions are visible to the current transaction (line 6). We also check whether they are {\visible} using the function \texttt{TemporalCheck()} based on Equation 1 (line 7). 
Next, we catch data from the historical storage (line 10) using the function \texttt{FetchFromKV()}. We first find the most recent anchor $kv_a$ based on the probe prefix (line 15).
We then seek all previous deltas $kv_d$ that satisfy temporal check (line 19) and assemble $kv_a$ with them to get desired versions (line 18). 

\todo{\textbf{Complexity analysis.}
The scan operator queries versions of a vertex $v$ in a dataset with a total of $n$ vertices. This process consists
of two parts: (1) locating the current version of $v$ and (2) querying the historical versions of $v$.
The complexity of locating the current version depends on the specific 
retrieval mechanisms selected in the current storage,
% of existing conventional graph databases,
such as $log(n)$ for B$^+$-tree index look-up and $n$ for non-index lookup, denoted as $O(\iota(n))$. The complexity of querying the historical versions depends on the chosen approach for introducing temporal
features. In {\tgdb}, we first locate the nearest anchor.
{Since we organize historical versions in the key-value store using SkipList, the complexity of this process is $O(log(A_v))$, where $A_v$ is the average number of anchors for vertices.}
Then, we sequentially scan deltas from the anchor until satisfying the query time condition, with a time complexity of $O(u)$,
where $u$ represents the average length of $u_o$ defined in Equation~\ref{eq:anchor_threshold}.
In conclusion, the scan operator has a complexity of $O(\iota(n)+log(A_v)+u)$.
}

\begin{algorithm}[!t]
\small
 \SetKwInOut{Input}{input}
 \SetKwInOut{Output}{output}
 \caption{Expanding Vertices }
 \label{alg:expend}
 
 \SetKwFunction{FMain}{ExpandVertices }
    \SetKwProg{Fn}{Function}{:}{}
    \Fn{\FMain{$v, C$}}{
    \KwIn {$v$, the graph vertex need to expand; $C$,  temporal condition;}
    \KwOut  {$\Sigma$, the result set;}
      $\Sigma_{ve} \leftarrow $ \texttt{VERead}($v, C$) //get adjacency lists\;
      \ForEach{$(e_{id},nv_{id}) \in\Sigma_{ve}$}{
      $\Sigma_{e} \leftarrow $ \texttt{EdgeRead}($e_{id}, f(C,v.\omega)$)  \;
      \ForEach{$e \in\Sigma_{e}$}{
      $\Sigma_{nv} \leftarrow $ \texttt{VertexRead}($nv_{id}, f(C,e.\omega)$)\;
      \ForEach{$nv \in\Sigma_{nv}$}{
       $\Sigma \leftarrow \Sigma \cup (e,nv)$ \;
       }
      }
    }
    \textbf{return} $\Sigma$\;
}
% \vspace{-1mm}
\end{algorithm}

\subsection{Expand Operator}
\label{sec:expand}
{\tgdb} utilizes the expand operator 
% using the adjacency list index
to fetch linked edge and adjacency vertex versions. The overall design insight of the expand operator is similar to that of the scan operator, which employs different retrieval strategies in two separate storage engines. Additionally, the expand operator considers the retrieval of graph structures. We next elaborate on how the expand operator fetches edge and neighboring vertex versions of a given vertex.

As shown in Algorithm \ref{alg:expend}, given a vertex $v$, we first use the function \texttt{VERead()} to access the $v$'s adjacency list version(s) from the VE part/segment (line 2), which contains a list of (edge id $e_{id}$, neighbor vertex id $nv_{id}$) pairs. The function \texttt{VERead()} shares a similar logic as the function \texttt{VertexRead()} in Algorithm \ref{alg:vertex}, which combines current and historical data to get desired versions. 
% bypass unnecessary historical versions 
% shorten 
We then fetch {\visible} versions of specific semantic information of edges and adjacency vertices based on their unique ids (lines 3-8). 
We first obtain linked edge versions using the \texttt{EdgeRead()} function, which follows a similar logic to \texttt{VertexRead()} (line 4). To expedite the search for the linked edge version, we optimize the skip look-up strategy for anchor locating to skip more unnecessary versions.
Guided by Constraint 1 defined in Section \ref{sec:graph_model}, the edge must be {\visible} for its connected vertices. This implies that the lifespan of the edge version must intersect with the lifespan of its connected vertex version. 
Since we already hold a scanned vertex version $v$, we can leverage $v$'s lifespan $\omega$ to refine the probe time scope $C$ 
based on the following equation. 
\begin{equation}
f(C,\omega)=[max(C.t_1, \omega.st), min(C.t_2, \omega.ed)]
\end{equation} 
By implementing this approach, we efficiently bypass more unnecessary versions, thereby enhancing the query performance. 
Subsequently, we retrieve the neighbor vertex versions of each holding edge version based on $nv_{id}$ (lines 5-6) with a similar logical process. Finally, we obtain the final results (lines 7-8). The illustrative examples of both the scan and expand operators are detailed in our extended manuscript~\cite{aeongsupplement}.
% \begin{equation}
% f(C,\omega)=[max(C.t_1, \omega.st), min(C.t_2, \omega.ed)]
% \end{equation} 

\begin{comment}
\todo{For instance, consider continue to retrieve $v_{2.7}$'s linked edge and adjacency vertex at time $t_n$ in Figure~\ref{fig:overview}.
We first obtain the corresponding version of $v_{2.7}$'s adjacency list, which contains $[(1, 0)]$ (line 2). 
Then, we fetch the corresponding edge version $e_{1.2}$ (line 4) and the neighboring vertex version $v_{0.1}$ (line 6).}
\end{comment}

\todo{\textbf{Complexity analysis.} The expand operator retrieves linked edges and adjacent vertices of a specific vertex version in the following three steps. 
First, we fetch the adjacency list version. Similar to the complexity of the scan operator, the associated complexity is $O(log(A_{ve})+u)$, 
where $A_{ve}$ is the total number of anchors for adjacency lists, and $u$ is the average length of $u_o$. 
Next, for each pair ($e_{id}, nv_{id}$) in the adjacency list, we fetch the corresponding edge version.
Finally, we locate the neighbor vertex version. 
Similar to the first step, the complexities of these two steps are $O(log(A_{e})+u)$ and $O(log(A_{v})+u)$, where $A_{ve}$ and $A_{v}$ are the total number of anchors for edges and vertices, respectively. 
In conclusion, the overall complexity of the expand operator is $O(log(A_{ve})+u+D\times (log(A_{e})+u+log(A_{v})+u))$, where $D$ is the average number of vertex degrees.
\extended{ We would like to highlight that the complexity of both Algorithm~\ref{alg:vertex} and Algorithm~\ref{alg:expend} is at the $O(log(n))$ level, which is reasonable and generally acceptable in graph query processing~\cite{livegraph,g*,Teseo}.}
}

\extended{
\subsection{Proof of the Correctness}
\label{sec:corret}
We now prove that we ensure the transaction to fetch consistent graph data from two isolated storage engines.
% Now, we prove the transactions, either read-write transactions or read-only transactions, are verifiable and correct.

% \vspace{-3mm}
\begin{theorem}
Given a read-only transaction $T$ fetching temporal graph data with a condition time $C$, $T$ can be verified to have retrieved the data from the hybrid storage with a consistent state.
\end{theorem}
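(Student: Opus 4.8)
The plan is to argue that the result set $\Sigma$ assembled by the scan and expand operators coincides with the set of graph object versions that are both (i) legal with respect to the time condition $C$ and (ii) drawn from a single consistent snapshot in the sense of MVCC. I would establish this by partitioning every relevant version into one of two disjoint classes — \emph{unreclaimed} versions still residing in the current storage, and \emph{reclaimed} versions already migrated to the historical storage — and showing that the union of what the algorithm retrieves from each class is exactly the target set, with no double-counting and no omission. The disjointness is the linchpin: by the description of Algorithm~\ref{alg:migration}, a version is physically deleted from the version chain in the current storage at the same time (under the GC lock) that it is written into the KV store, so at any instant every historical version lives in exactly one of the two storages.

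First I would treat the current storage. Here I invoke the standard MVCC guarantee stated in Section~\ref{sec:overview}: the snapshot visibility check (\texttt{SnapshotCheck} in Algorithm~\ref{alg:vertex}, line 6) selects, for transaction $T$, a consistent set of versions — namely the newest version committed before $T$'s snapshot together with all older versions in the chain. I would then note that \texttt{TemporalCheck} (Equation~1) is applied on top of this already-consistent candidate set, so it merely filters, and filtering a consistent snapshot cannot break consistency. The only subtlety is that $T$ walks the version chain concurrently with GC/migration; I would argue that the migration lock on $undo$ (Algorithm~\ref{alg:migration}, line 7) plus the snapshot rule ensures $T$ never observes a version mid-migration — either $T$'s snapshot predates the migrating transaction's retirement and the version is still visible in the chain, or it does not and the version is ineligible anyway.

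Next I would treat the historical storage. Since reclaimed versions are read-only and immutable, there are no write-write or read-write conflicts to reason about; consistency reduces to completeness and correctness of reconstruction. I would show that \texttt{FetchFromKV} (Algorithm~\ref{alg:vertex}, lines 14--21) locates the nearest anchor via the probe prefix, then applies exactly the deltas whose lifespans lie between the anchor and $C$, so the reconstructed version equals the true historical state — this follows from the anchor+delta invariant of Section~\ref{sec:historical-storage}, that an anchor is a complete state and each delta records the reversible difference to the next version. Finally, I would combine the two halves: because the classes are disjoint and jointly exhaustive, $\Sigma$ equals the set of all legal versions under $C$ from one consistent snapshot, and the expand operator inherits this since \texttt{VERead}, \texttt{EdgeRead}, and \texttt{VertexRead} are instances of the same retrieval logic and the refinement $f(C,\omega)$ of Equation~3 only shrinks the probe window in a way justified by Constraint~1, hence removes nothing legal.

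The main obstacle I anticipate is the boundary argument at the current/historical interface: precisely pinning down that the asynchronous migration — which runs concurrently with read transactions — never causes a version to be seen twice or missed by a transaction whose snapshot straddles the GC event. Handling this cleanly requires stating exactly which transactions can have a given version in their snapshot relative to the GC watermark, and confirming that the watermark is chosen so that any such transaction still finds the version in the current storage. The rest of the proof is essentially bookkeeping over the version-chain and KV-store invariants already set up in Sections~\ref{sec:hybrid_storage} and~\ref{sec:query-engine}.
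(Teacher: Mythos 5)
There is a genuine gap, and it sits exactly where you call the linchpin of your argument. You assume that at every instant a historical version resides in exactly one of the two storages, claiming that Algorithm~\ref{alg:migration} deletes the version from the version chain ``at the same time'' it is written to the KV store. That is not what the migration does: line 6 puts the key--value pair into the historical store and only afterwards (line 7) locks and physically deletes the unreclaimed version, so a concurrent reader whose scan straddles these two steps can encounter the \emph{same} version in both storages. The paper's own proof explicitly embraces this: it concedes that a version may be duplicated across the current and historical storage during asynchronous migration, and resolves it not by atomicity or a GC watermark (your anticipated fix) but by a read-time deduplication step that keeps only the current-storage copy, which is harmless because both copies carry the same lifespan $\omega$ and identical content. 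In other words, the system deliberately writes-then-deletes to guarantee no omission, accepts transient duplication as the price, and repairs it in the query path; your proof, built on disjointness, would be arguing from a premise the implementation violates, and the ``no double-counting'' half of your exactness claim would fail without the deduplication argument.

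The remainder of your outline is consistent with the paper's reasoning: consistency in the current storage via the snapshot visibility check followed by the temporal (legal) check of Equation~1, read-only immutability plus correct anchor-and-delta reconstruction in the historical storage, and propagation to the expand operator through the shared retrieval logic with the time-scope refinement of Equation~3. One ingredient the paper leans on that you leave implicit is worth making explicit: every version's lifespan $\omega$ is the commit time assigned from a single global clock, so the two physically isolated storages share one serialization order; this is what makes aligning $\omega$ against the query condition $C$ a meaningful consistency criterion across both storages, and it is also what makes the duplicated copies provably identical and thus safely deduplicable.
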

% \vspace{-5mm}
\begin{proof}
First, temporal graph data are globally consistent. 
The lifespan $\omega$ of each version is the commit time of the transaction, allocated by the system from a global clock.
% . {\tgdb} allocates each transaction's commit timestamp uniformly from a global sequence counter. 
As a result, the timeline of two isolated storages follows the same serialization order. 
% Consequently, we can ensure global data consistency within the two isolated storages.
Second, we employ a
% a meticulously designed 
fetching strategy to return consistent requested data based on the lifespan $\omega$.
% We fetch the desired version $o$ based on its lifespan $\omega$.
For a version $o$ in the current storage, we verify its visibility to $T$ and then align $o$'s $\omega$ to $C$ (lines 6-8, Algorithm \ref{alg:vertex}). For $o$ in the historical storage, we solely align $o$'s $\omega$ to $C$ (line 10, Algorithm \ref{alg:vertex}).
By so doing, we can guarantee data consistency within each separate storage.
It is possible to find a version $o$ duplicated in both the current storage and the historical storage. Unreclaimed versions in the current storage might be transferred to the historical storage before being cut from the version chain during the asynchronous migration process (Algorithm \ref{alg:migration}). 
% In such a critical situation, it is possible to find $o$ duplicated both in the current storage and the historical storage. 
However, this does not lead to errors, as both versions share the same lifespan $\omega$ and contain identical information. In such cases, we incorporate a deduplication step, retaining only the version from the current storage. 
Considering all data-fetching scenarios, 
% of data fetching, 
we ensure overall data consistency.
\end{proof}

% \todo{

}
\section{Implementation}
\label{sec:implementation}
{\tgdb} is built on Memgraph\cite{Memgraph} and RocksDB\cite{rocksdb}. Memgraph is a commercial native graph database that supports the property graph model, Cypher, and MVCC.
We utilize and extend Memgraph to serve as the primary database engine, providing the basic query engine and current storage engine for {\tgdb}. We then integrate RocksDB, a popular KV store, into Memgraph as the historical storage to manage historical data.
% This integration allows {\tgdb} to effectively manage temporal graph data while taking advantage of the capabilities offered by Memgraph and RocksDB.
% We would like to emphasize that
% our proposed temporal graph management mechanism is generally
% applicable to native graph databases that support MVCC
Our proposed approach 
% temporal graph management mechanism 
is generally applicable to native graph databases that support MVCC.

\textbf{Query engine.}
{\tgdb} extends the \textit{parser} and \textit{executor} components of the query engine in Memgraph to support temporal queries. 
% enhances the \textit{parser} and \textit{executor} of the query engine in Memgraph to support temporal queries. 
{\tgdb} extends the \textit{parser} to recognize temporal queries defined in Section \ref{sec:temporal_query_language}, incorporating the temporal qualifier into \texttt{Cypher.g4} and enhancing \texttt{CypherMainVisitor()} to recognize temporal qualifier.
% {\tgdb} enables the \textit{parser} to recognize temporal queries defined in Section \ref{sec:temporal_query_language},
% . We incorporate the temporal qualifier into \texttt{Cypher.g4} and enhance \texttt{CypherMainVisitor()} to recognize temporal qualifier.
Furthermore, {\tgdb} enhances the \textit{executor} by modifying two fundamental operators: scan and expand operators.
% We implement the \textit{executor} of {\tgdb} by modifying two fundamental operators in Memgraph: the \textit{Scan} operator and the \textit{Expand} operator.
In the \texttt{ScanAllCursor.Pull()} function, besides retrieving current vertices,  we introduce a function \texttt{AddHistoricalVertices()} to capture both unreclaimed and reclaimed historical versions (Algorithm 2). In the \texttt{ExpandCursor.Pull()} function, a similar adaptation is made with the inclusion of the function \texttt{AddHistoricalEdges()} for
% we add a function \texttt{AddHistoricalEdges()} to 
getting historical edges and neighbor vertices (Algorithm \ref{alg:expend}). 

\textbf{Storage engine.} The storage engine 
of {\tgdb} is hybrid, consisting 
% consists 
of a \textit{current} storage and a \textit{historical} storage,
% , with an asynchronous migration, 
as detailed in Section \ref{sec:hybrid_storage}. 
% \textit{Current storage.} 
The \textit{current} storage is derived from Memgraph's storage, where the \texttt{Vertex} structure maps to the vertex object, the \texttt{EdgeRef} structure maps to the edge object, and the \texttt{Delta} structure represents historical unreclaimed versions.
We then associate the time dimension with those structures to introduce temporal support, as discussed in Section \ref{sec:current-storage}. \todo{Timestamps are assigned by a global clock when relevant transactions are committed, with a time granularity of milliseconds.}
{\tgdb} integrates RocksDB into Memgraph as the \textit{historical} store by starting a RocksDB process when the Memgraph instance starts. We introduce a function, \texttt{Migrate()}, within Memgraph's \texttt{CollectGarbage()} function to transfer unreclaimed data to a key-value pair and subsequently migrate them to RocksDB (Algorithm \ref{alg:migration}). 
\todo{We further implement a distributed version of {\tgdb}, named {\tgdbd}, using TiKV~\cite{tikv}, an efficient distributed key-value store, for historical storage by replacing the interfaces of RocksDB with TiKV.
We introduce a system parameter, \texttt{retention\_period}, in {\tgdb} to set the historical data retention period.
For instance, setting \texttt{retention\_period} to one month enables the periodic removal of historical data generated one month ago from the historical storage.
}

\vspace{-2mm}
\section{Evaluation}
\todo{In this section, we first introduce the experimental setup.
We then compare {\tgdb} against two state-of-the-art temporal systems, Clock-G and T-GQL, and provide in-depth performance analyses for {\tgdb}, with two metrics: 1) latency of
temporal queries/graph operations; 2) storage overheads of temporal graph data.}
\label{sec:evaluation}
\extended{\begin{table}
    \centering
    \setlength{\abovecaptionskip}{0.1cm}
    \caption{\todo{Workload Characteristics}}
    \label{tabel:dataset}
    \resizebox{\columnwidth}{!}{
\begin{tabular}{|c|c|c|cccc|}
\hline
\multirow{2}{*}{}                                                                    & \multirow{2}{*}{T-mgBench} & \multirow{2}{*}{T-LDBC} & \multicolumn{4}{c|}{T-gMark}                                                             \\ \cline{4-7} 
                                                                                     &                            &                         & \multicolumn{1}{c|}{Bib}  & \multicolumn{1}{c|}{WD}   & \multicolumn{1}{c|}{LSN}  & SP   \\ \hline
\# of Vertices                                                                       & 10K                        & 3,181K                  & \multicolumn{1}{c|}{100K} & \multicolumn{1}{c|}{103K} & \multicolumn{1}{c|}{100K} & 100K \\ \hline
\# of Edges                                                                          & 122K                       & 17,256K                 & \multicolumn{1}{c|}{121K} & \multicolumn{1}{c|}{93K}  & \multicolumn{1}{c|}{200K} & 385K \\ \hline
Density                                                                              & 12.17                      & 5.42                    & \multicolumn{1}{c|}{1.2}  & \multicolumn{1}{c|}{0.90} & \multicolumn{1}{c|}{2}    & 3.85 \\ \hline
\# of Vertex Labels                                                                  & 1                          & 8                       & \multicolumn{1}{c|}{5}    & \multicolumn{1}{c|}{24}   & \multicolumn{1}{c|}{15}   & 7    \\ \hline
\# of Edge Labels                                                                    & 1                          & 25                      & \multicolumn{1}{c|}{4}    & \multicolumn{1}{c|}{82}   & \multicolumn{1}{c|}{27}   & 7    \\ \hline
\begin{tabular}[c]{@{}c@{}}\# of Graph Operations\\ for Data Generation\end{tabular} & 320K                       & 1M                      & \multicolumn{1}{c|}{320K} & \multicolumn{1}{c|}{320K} & \multicolumn{1}{c|}{320K} & 320K \\ \hline
\end{tabular}
}
\end{table}
}
% In this section, we first introduce the experimental setup, and then evaluate {\tgdb} against two state-of-the-art temporal graph systems in a range of settings to show its performance.

\subsection{Experimental Setup}
\label{sec:experimental_setup}
% We next evaluate our approach in varies configurations. Our approach
% is implemented inside Memgraph and we call this version of Memgraph as AeonG. 
{\tgdb} is built on Memgraph $v2.2.0$, RocksDB $v6.14.6$, and TiKV $v7.1.2$ for evaluation. 
% By default, {\tgdb} employs RocksDB as the historical storage without implementing historical data retention cleanup. 
We run the experiments in a cluster of up to 5 nodes.
Each node is equipped with 32 Intel(R) Xeon(R) Gold 5220 CPU @ 2.20GHz, 128 GB memory, running CentOS 7.9.
% .2009 operation system with kernel version 3.10.0-1127.19.1.el7.x86\_64. 

\subsubsection{Baseline Systems}
We compare {\tgdb} with two baseline systems that support temporal features:
% T-GQL at the application level and Clock-G at the system level.
% that periodically \todo{materializes} the snapshots
% of the entire graph and \todo{logs} the deltas between two successive
% snapshots (abbreviated as the \textit{Copy+Log} method).
% periodically copies 
% snapshots of the entire database and simultaneously logs differences between successive snapshots.

\noindent 
\textbf{T-GQL}~\cite{T-GQL}: 
A state-of-the-art graph database that assigns a time period to each graph object (vertex or edge) for temporal support.
Since we cannot obtain the source code of T-GQL, for fair comparisons, we implement T-GQL based on Memgraph.
Note that T-GQL stores all the vertices and edges in memory.
% its buffer structure into our prototype system.
% T-GQL adopts a specific representation of temporal graphs, where conventional vertices are decomposed into 
% Object, Attribute, Value vertices, and conventional edges remain the same. 
% T-GQL introduces the time dimension as properties of designed vertices and edges. 
% To ensure fair comparisons, we implemented T-GQL on Memgraph.

\noindent 
{\textbf{Clock-G}}~\cite{clock-g}: 
A state-of-the-art graph storage engine that manages temporal data by periodically creating snapshots of the entire databases.
Since Clock-G is not open-sourced, we implement its temporal data management approach into our codebase for fair comparisons.
We record both the snapshots and the logs between successive snapshots in RocksDB. 
We further introduce a query engine like that used in {\tgdb} to support temporal queries.
By default, Clock-G creates a snapshot after executing 80k graph operations.

\maintext{
\begin{table}
    \centering
    \setlength{\abovecaptionskip}{0.1cm}
    \caption{\todo{Workload Characteristics}}
    \label{tabel:dataset}
    \resizebox{\columnwidth}{!}{
\begin{tabular}{|c|c|c|cccc|}
\hline
\multirow{2}{*}{}                                                                    & \multirow{2}{*}{T-mgBench} & \multirow{2}{*}{T-LDBC} & \multicolumn{4}{c|}{T-gMark}                                                             \\ \cline{4-7} 
                                                                                     &                            &                         & \multicolumn{1}{c|}{Bib}  & \multicolumn{1}{c|}{WD}   & \multicolumn{1}{c|}{LSN}  & SP   \\ \hline
\# of Vertices                                                                       & 10K                        & 3,181K                  & \multicolumn{1}{c|}{100K} & \multicolumn{1}{c|}{103K} & \multicolumn{1}{c|}{100K} & 100K \\ \hline
\# of Edges                                                                          & 122K                       & 17,256K                 & \multicolumn{1}{c|}{121K} & \multicolumn{1}{c|}{93K}  & \multicolumn{1}{c|}{200K} & 385K \\ \hline
Density                                                                              & 12.17                      & 5.42                    & \multicolumn{1}{c|}{1.2}  & \multicolumn{1}{c|}{0.90} & \multicolumn{1}{c|}{2}    & 3.85 \\ \hline
\# of Vertex Labels                                                                  & 1                          & 8                       & \multicolumn{1}{c|}{5}    & \multicolumn{1}{c|}{24}   & \multicolumn{1}{c|}{15}   & 7    \\ \hline
\# of Edge Labels                                                                    & 1                          & 25                      & \multicolumn{1}{c|}{4}    & \multicolumn{1}{c|}{82}   & \multicolumn{1}{c|}{27}   & 7    \\ \hline
\begin{tabular}[c]{@{}c@{}}\# of Graph Operations\\ for Data Generation\end{tabular} & 320K                       & 1M                      & \multicolumn{1}{c|}{320K} & \multicolumn{1}{c|}{320K} & \multicolumn{1}{c|}{320K} & 320K \\ \hline
\end{tabular}
}
\end{table}
}

\subsubsection{Workloads.}
\label{sec:evaluaton_workload}
% We conduct the experiments using the following workloads.

We conduct the experiments using three temporal-enhanced workloads, characteristics of which are detailed in Table~\ref{tabel:dataset}. We next describe each of these three workloads.
% extended from one real-world workload, mgBench~\cite{mgbench}, and two synthetic workloads, LDBC~\cite{LDBC}, and gMark~\cite{gmark}, respectively.
% , T-mgBench, T-LDBC, and T-gMark, 
% These foundational workloads are selected due to their wide use in assessing graph database performance~\cite{Memgraph,kuzu,graindb,livegraph,DBLP:conf/sigmod/JachietGGL20}.

\noindent \textbf{T-mgBench} is based on the real-world Pokec dataset~\cite{pokec}, which is used in Memgraph's mgBench~\cite{mgbench} workload.
% Given that {\tgdb} is built upon Memgraph, we therefore include this workload in our experiments.
% T-mgBench consists of 10k vertices, 122k edges, with each vertex or edge having 1 label.
% We conduct 320k graph operations to generate historical data. 
As outlined in Table \ref{table:temporal_queries}, T-mgBench includes four temporal queries, by extending the non-temporal queries in mgBench with temporal dimension.
Specifically, we add ``FOR TT AS OF $t$'' to Q1 and Q3, forming ``time-point'' queries, and add ``FOR TT FROM $t_1$ to $t_2$'' to Q2 and Q4, forming ``time-slice'' queries.
% we introduce four temporal queries, as detailed in Table \ref{table:temporal_queries}: Q1 and Q3 are `time-travel' queries, while Q2 and Q4 are `time-slice' queries.

\noindent \textbf{T-LDBC} derives from LDBC~\cite{LDBC}, a well-known synthetic graph workload, by incorporating ``FOR TT AS OF $t$'' into the LDBC IS queries (IS1-IS7).
\maintext{The full queries of T-LDBC are listed in our {extended manuscript~\cite{aeongsupplement}}.}
\extended{More specifics of these queries can be found in Table I in our Appendix.}
% We use the dataset LDBC with a scale factor of 1.
% consisting of 3.18M vertices, 17.26M edges, 8 vertex labels, and 25 edge labels. 
% We conduct one million graph operations by default to generate historical data.
% Regarding temporal queries, we employ the IS queries introduced in the LDBC workload by associating ``FOR TT AS OF'' in the Match clause.
% Due to space constraints, the detailed queries can be found in our extended version~\cite{}.

\noindent \textbf{T-gMark} is based on gMark\cite{gmark}, a well-known synthetic graph workload.
It consists of four datasets as shown in Tables~\ref{tabel:dataset}.
We use gMark's query generation tool to create non-temporal queries, and then transform them into temporal queries by adding the time condition ``FOR TT AS OF $t$''. 
The query generation follows gMark's default configuration, which includes constraints on arity (0-4), query shape (25\% chain, 25\% star-chain, 25\% cycle, and 25\% star), selectivity (33\% constant, 33\% linear, and 33\% quadratic), probability recursion (50\%), and query size ([1,1], [3,4], [1,3], [2,4]).
% a schema-driven graph and query generation focusing on structural properties. 
% T-gMark includes four datasets, namely Bib, SP, LSN, and WD, generated by gMark's default graph generation with fixed 100k vertices. 
% Specifically, Bib comprises 100k vertices, 121k edges, 5 vertex labels, and 4 edge labels; SP includes 103k vertices, 93k edges, 24 vertex labels, and 82 edge labels; LSN encompasses 100k vertices, 200k edges, 15 vertex labels, and 27 edge labels; WD includes 118k vertices, 385k edges, 7 vertex labels, and 7 edge labels.
% We inject 320k graph operations for each dataset to generate historical data. 
% For temporal queries, we utilize gMark's query generation to generate non-temporal queries and then enhance those common queries with temporal features by associating with the time condition ``FOR TT AS OF $t$''.
% For our evaluation, we adhere to the default query configuration of gMark to generate diverse queries, encompassing arity constraints (0-4), shape constraints (25\% chain, 25\% star-chain, 25\% cycle, and 25\% star), selectivity constraints (33\% constant, 33\% linear, and 33\% quadratic), and query size constraints ([1,1],[3,4],[1,3],[2,4]).

To effectively evaluate the efficiency of temporal features in {\tgdb} and baseline systems, for each workload, we generate additional historical data before evaluations.
% extended the conducted two following extensions to set up these workloads.
% First, we 
% enhanced the data generation process of each workload.
Unless otherwise specified, we first use the data generation tools from the original workload to create the initial dataset, and then execute graph operations with a mix of 80\% updates, 10\% creates, and 10\% deletes to generate historical data.
The access distribution of update operations and queries follows the Zipf{~\cite{Zipf49}} distribution to simulate real-world graph manipulation scenarios. 
% Since different systems potentially have different graph operation performances, we control to execute one graph operation per second, 
% Second, we introduced representative temporal queries by integrating time conditions into their provided non-temporal queries.

\begin{table}[]
\setlength{\abovecaptionskip}{0.1cm}
\caption{Temporal Queries of T-mgBench}
\resizebox{\columnwidth}{!}{
\renewcommand\arraystretch{1.2}{
\begin{tabular}{|c|c|}
\hline
Query     & Statement               \\ \hline
Q1      & Match (n: User \{id: \textit{\$id}\}) FOR TT AS OF $t$ RETURN n                        \\ \hline
Q2      & Match (n: User \{id: \textit{\$id}\}) FOR TT From $t_1$ to $t_2$ RETURN n                        \\ \hline
Q3      & Match (n: User \{id: \textit{\$id}\})-[e]->(m) FOR TT AS OF $t$ RETURN n,e,m     \\ \hline
Q4      & Match (n: User \{id: \textit{\$id}\})-[e]->(m) FOR TT From $t_1$ to $t_2$ RETURN n,e,m  \\ \hline
\end{tabular}
}
}
\label{table:temporal_queries}
\end{table}

\subsubsection{Default Configuration}
By default, 
% we perform 320k graph operations for data generation. 
the Zipf distribution factor is set to 1.1.
% The Zipf distribution factor is set to 1.1. 
The parameters of the adaptive anchoring approach defined in Equation \ref{eq:anchor_threshold} are configured as $\tau_1=1k$, $\tau_2=10k$ and $c=1\%$.
% default settings: 
% We set
The \texttt{retention\_period} discussed in Section \ref{sec:implementation} is set to 0, indicating that all historical data will be retained permanently. 

\todo{
\subsection{{\tgdb} vs Baseline Systems}
\label{sec:performance_comparison}
We now compare {\tgdb} with 
% first report the performance evaluation of {\tgdb} in comparison to 
two baseline systems, Clock-G and T-GQL. 
% Our evaluation measures storage consumption, the write performance of graph operations, and the read performance of temporal queries with a specific focus on accessing historical data. 
%
% \makesure{In the following experiments, we close the historical data retention cleanup in {\tgdb} and apply the adaptive anchoring configuration with the default configuration of $\tau_1=1k$, $\tau_2=10k$ and $c=1\%$.} 
As T-GQL is an in-memory database, to ensure fair comparisons, we make sure all data is cached in memory for {\tgdb} and Clock-G by configuring RocksDB's MemTable size to 640MB.
}

\subsubsection{Experiment on the storage consumption.} 
\label{sec:storage}
% We first present the performance comparison on the T-mgBench workload. 
% In this experiment, we inject graph operations ranging from 80k to 400k.
% \textbf{Storage consumption with varying graph operations.} 
We first conduct experiments using T-mgBench, and plot the storage consumption of each system under different numbers of graph operations in Figure \ref{Fig.tpokec.storage}.
As observed,  {\tgdb} reduces the storage consumption by up to {$2.4\times$} compared to Clock-G and {$2.09\times$} compared to T-GQL, with the increased number of graph operations. 
The lowest storage consumption of {\tgdb} can be attributed to our ``anchor+delta'' strategy, which compactly stores most historical graph data as deltas, minimizing the storage consumption for maintaining temporal data.
In contrast, T-GQL's graph model prevents it from storing historical graph data compactly, while Clock-G periodically creates historical snapshots of the entire graph. 
Both systems incur higher storage overheads than {\tgdb}.

We also run experiments using T-LDBC and T-gMrak to evaluate the storage consumption of these systems.
As shown {in} Figure \ref{Fig.ldbc.storage_graph_op}, {\tgdb} exhibits up to {$5.73\times$} and {$3.59\times$} lower storage consumption compared to Clock-G and T-GQL, under T-LDBC. 
% In the T-gMark workload, {\tgdb} maintains its superiority across diverse datasets.} 
\todo{Further, as observed in Figure \ref{Fig.gmark.storage}, the storage consumption of {\tgdb} is lower than that of Clock-G and T-GQL by up to a {$4.34\times$} and a {$2.39\times$}, under T-gMrak. 
% This result is consistent with that observed in Figure \ref{Fig.tpokec.storage} and Figure \ref{Fig.ldbc.storage_graph_op}. 
This trend is consistent with that observed in Figure \ref{Fig.tpokec.storage}, demonstrating that {\tgdb} still achieves lower storage overhead when handling large and complex graph workloads.}

\extended{
\begin{figure}[t]
  \centering
    %graph op
    \subfigure[T-mgBench: Storage Consumption]{
    \begin{minipage}[t]{0.5\linewidth}
     \label{Fig.tpokec.storage}
    \includegraphics[width=1\textwidth]{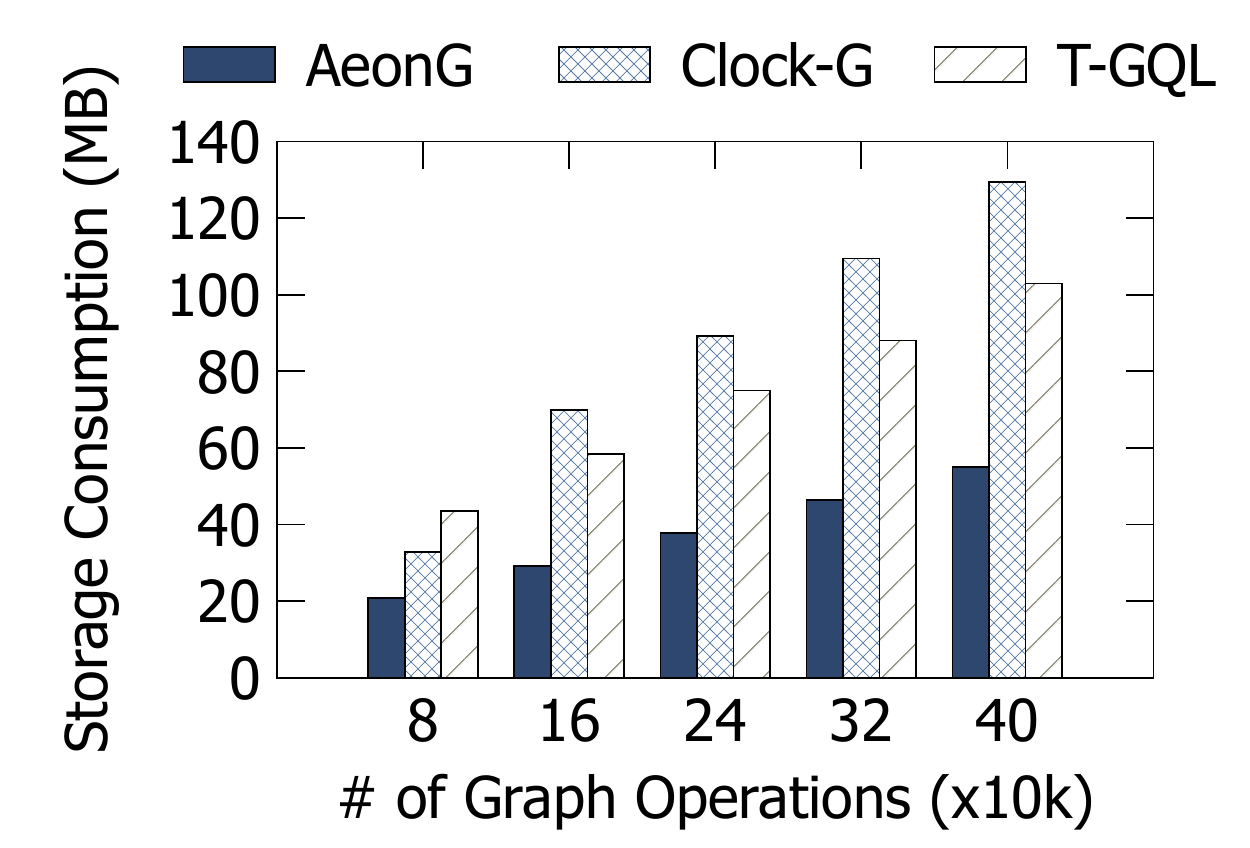}
    \end{minipage}%
   }%
   \subfigure[T-mgBench: Graph Operation Latency]{%varying $n$ (DBpedia)
    \begin{minipage}[t]{0.5\linewidth}
     \label{Fig.tpokec.graph_op}
    \includegraphics[width=1\textwidth]{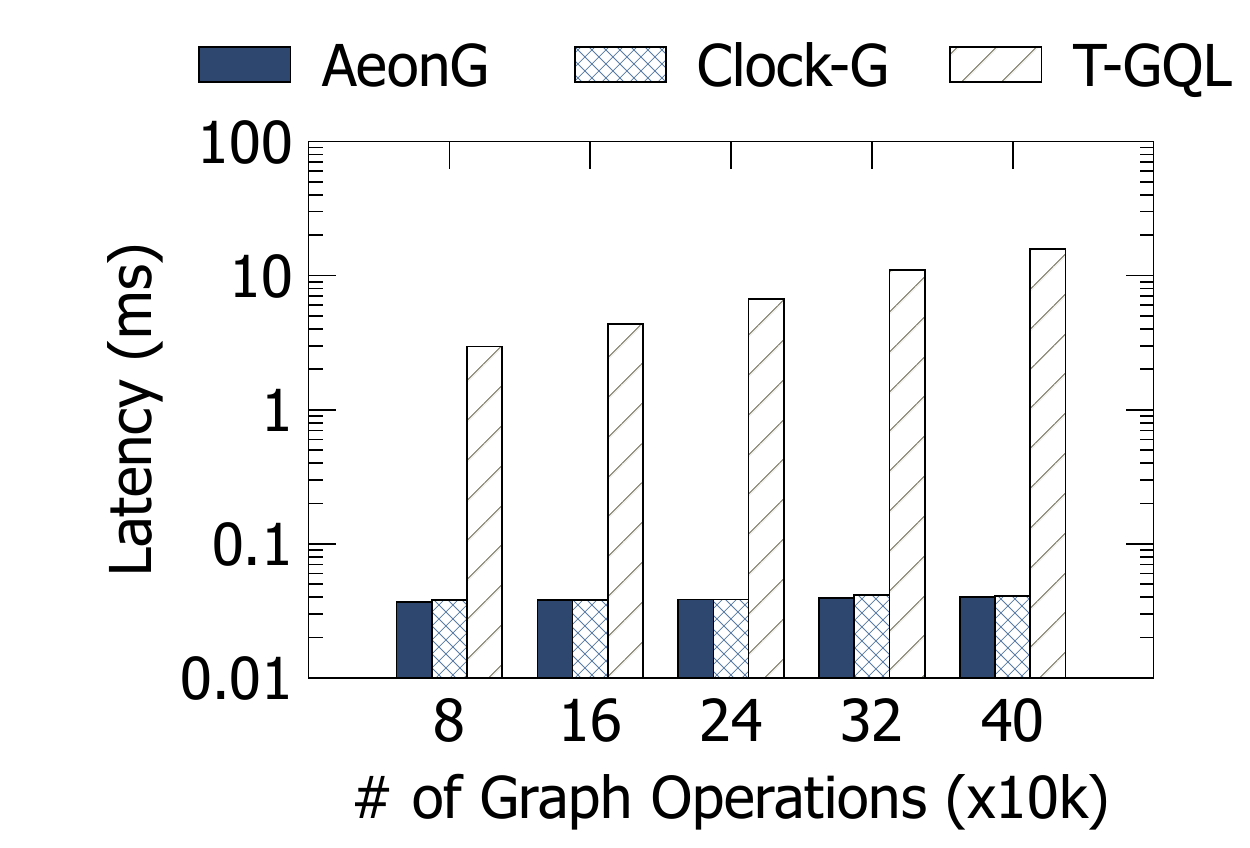}
    \end{minipage}%
   }%
    \vspace{-3mm}
     \subfigure[T-LDBC: Storage Consumption]{%varying $n$ (DBpedia)
    \begin{minipage}[t]{0.5\linewidth}
     \label{Fig.ldbc.storage_graph_op}
\includegraphics[width=1\textwidth]{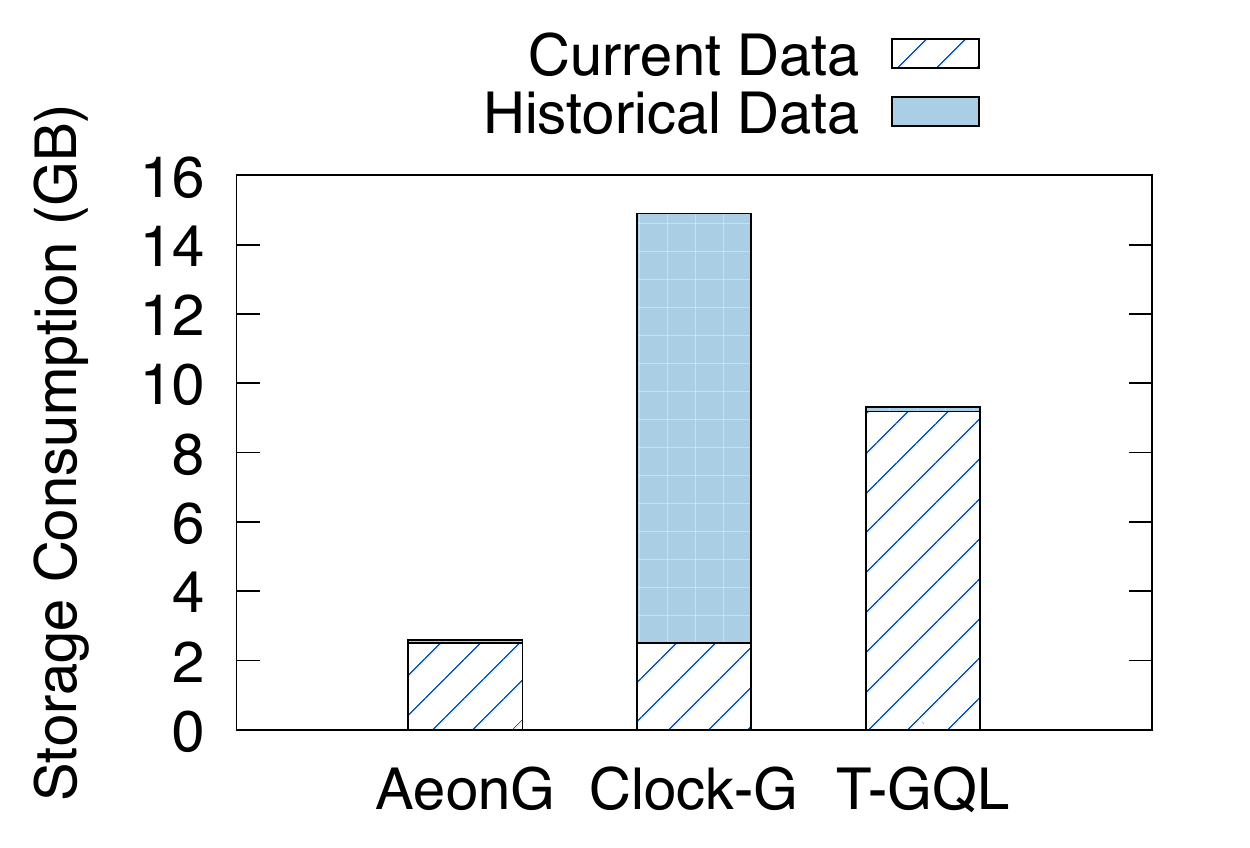}
    \end{minipage}%
    }%
     \subfigure[T-LDBC: Graph Operation Latency]{%varying $n$ (DBpedia)
    \begin{minipage}[t]{0.5\linewidth}
     \label{Fig.ldbc.storage_graph_op1}
\includegraphics[width=1\textwidth]{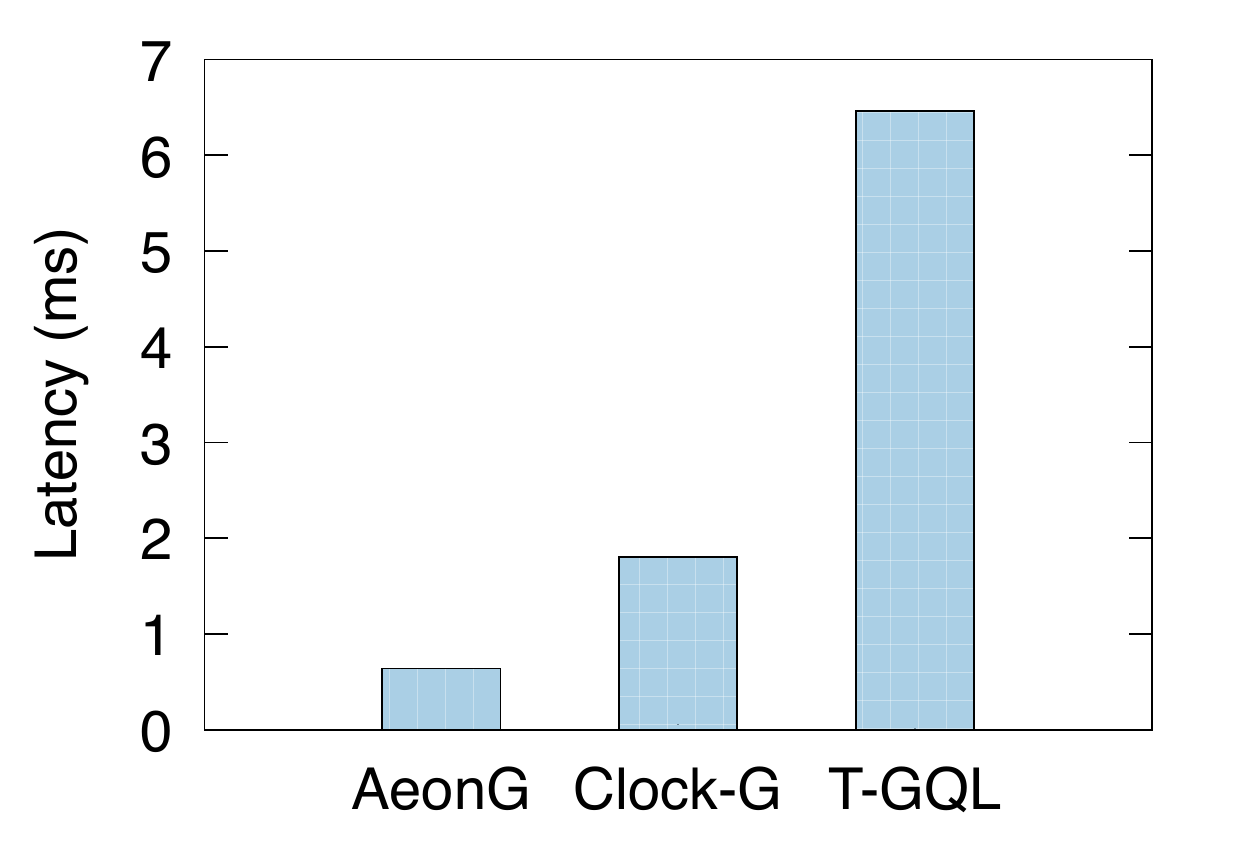}
    \end{minipage}%
    }%
    \vspace{-3mm}
    \subfigure[T-gMark: Storage Consumption]{%varying $n$ (DBpedia)
    \begin{minipage}[t]{0.5\linewidth}
     \label{Fig.gmark.storage}
    \includegraphics[width=1\textwidth]{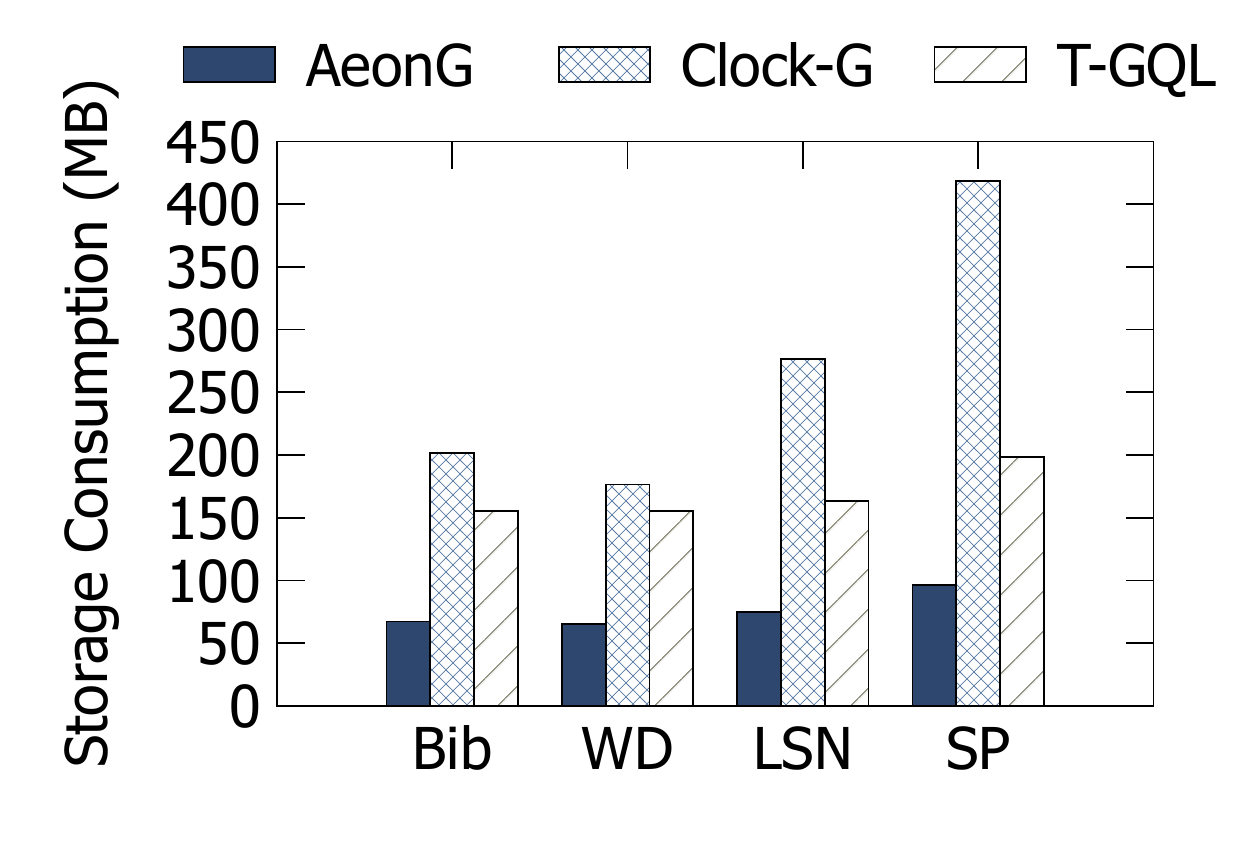}
    \end{minipage}%
   }%
   \subfigure[T-gMark: Graph Operation Latency]{%varying $n$ (DBpedia)
    \begin{minipage}[t]{0.5\linewidth}
     \label{Fig.gmark.graph_op}
    \includegraphics[width=1\textwidth]{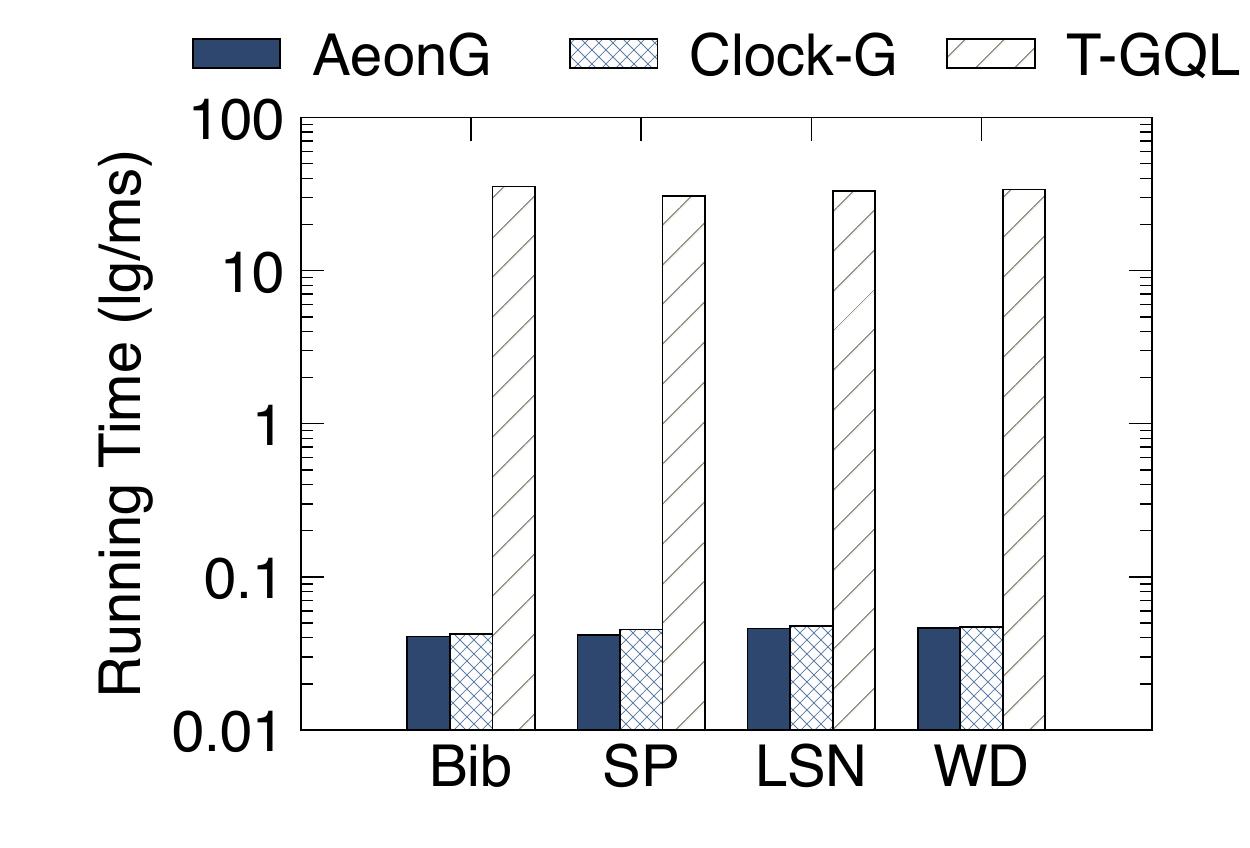}
    \end{minipage}%
   }%
  \vspace{-4mm}
  % \caption{Performance Comparison on the T-mgBench workload}
\captionsetup{justification=raggedright}
\caption{Comparisons on Storage Consumption and Graph Operation Latency}
% \vspace{-2mm}
  \label{Fig.overall}
  
\end{figure}
}

\maintext{
\begin{figure}[t]
  \centering
    %graph op
    \subfigure[T-mgBench: Storage Consumption]{
    \begin{minipage}[t]{0.5\linewidth}
     \label{Fig.tpokec.storage}
    \includegraphics[width=1\textwidth]{figures/results/0_tmgbench/storage.pdf}
    \end{minipage}%
   }%
   \subfigure[\todo{T-mgBench: Graph Operation Latency}]{%varying $n$ (DBpedia)
    \begin{minipage}[t]{0.5\linewidth}
     \label{Fig.tpokec.graph_op}
    \includegraphics[width=1\textwidth]{figures/results/0_tmgbench/graph_op.pdf}
    \end{minipage}%
   }%
    \vspace{-3mm}
     \subfigure[\todo{T-LDBC: Storage Consumption \&  Graph Operation Latency}]{%varying $n$ (DBpedia)
    \begin{minipage}[t]{0.5\linewidth}
     \label{Fig.ldbc.storage_graph_op}
\includegraphics[width=1\textwidth]{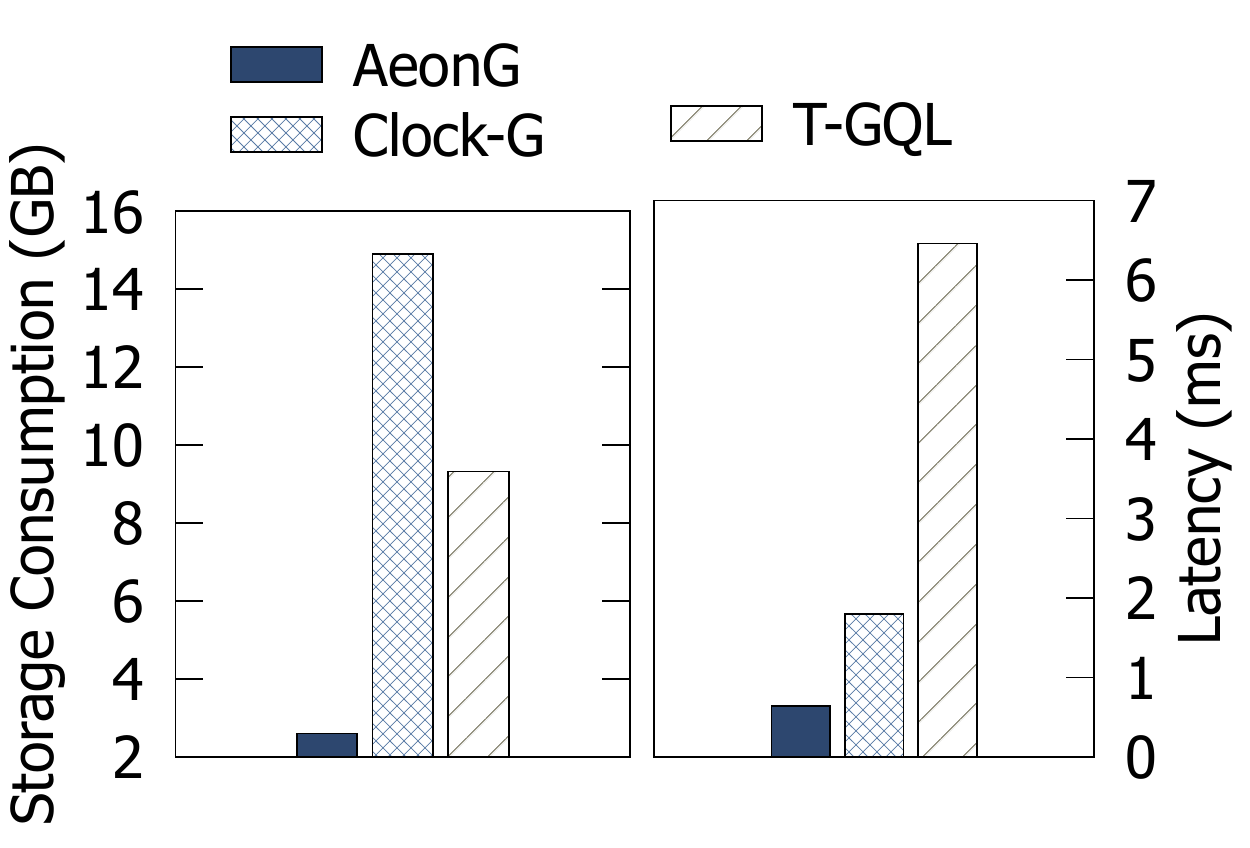}
    \end{minipage}%
    }%
    \subfigure[\todo{T-gMark: Storage Consumption}]{%varying $n$ (DBpedia)
    \begin{minipage}[t]{0.5\linewidth}
     \label{Fig.gmark.storage}
    \includegraphics[width=1\textwidth]{figures/results/2_gmark/storage.pdf}
    \end{minipage}%
   }%   
  \vspace{-2mm}
  % \caption{Performance Comparison on the T-mgBench workload}
\captionsetup{justification=raggedright}
\caption{Comparisons on Storage Consumption and Graph Operation Latency}
\label{Fig.overall}
  
\end{figure}
}
\todo{
\subsubsection{Experiments on the graph operation latency.}
\label{sec:graph_op}
We then evaluate the graph operation latency with varying numbers of graph operations using T-mgBench.
% \marginpar{R1.D6}
% Figure \ref{Fig.tpokec.graph_op} shows the latency of graph operations with varying scales. 
% We next evaluate the performance of graph operations.
% \todo{we need to describe the figure}
As shown in Figure \ref{Fig.tpokec.graph_op}, 
{\tgdb} performs similarly to Clock-G, but significantly outperforms T-GQL by up to {$397.06\times$}. 
% is lower than Clock-G and T-GQL by an average of {$2.24\times$} and {$521.4\times$}, respectively. 
% As the size of graph operations increases, {\tgdb} exhibits minimal performance degradation, with a slight increase of {$1.05$}. In contrast, Clock-G experiences a relatively higher degradation of {$2.91$}, and T-GQL encounters a substantial degradation of {$17.6$}. 
As the number of graph operations grows, both {\tgdb} and Clock-G exhibit a performance degradation of {$5.4\times$} from 80k to 400k, whereas T-GQL shows a much larger degradation of {$34.95\times$}.
The performance difference is mainly due to that T-GQL does not separate the storage of current and historical data.
Therefore, graph operations, such as updates, require traversing through a larger number of graph objects (both current and historical data) to reach the specific graph object for updating. 
However, {\tgdb} separate current and historical data, leading to much smaller latency overheads.
The similar performance of {\tgdb} and Clock-G in T-mgBench can be explained as the overhead from snapshot creation is not significant when handling a relatively small size graph.
% but outperforms Clock-G by a large extent on T-LDBC (shown in Figure\ref{Fig.ldbc.storage_graph_op}), which has a much larger dataset than T-mgBench. 
% The reason will be described in detail later in Section \ref{sec:tldbc_tgmark}.
Therefore, to further evaluate storage operation latency with larger graphs, we conducted additional experiments using the T-LDBC workload.
As observed in \maintext{Figure\ref{Fig.ldbc.storage_graph_op}}
\extended{Figure \ref{Fig.ldbc.storage_graph_op1}}, the graph operation latency of {\tgdb} is lower than Clock-G and T-GQL by up to {$2.82\times$} and {$10.11\times$}, respectively. 
% Clock-G exhibits performance degradation compared to the results observed in Figure \ref{Fig.tpokec.graph_op}. 
As T-LDBC is more substantial, Clock-G requires extra CPU and IO resources to periodically create large historical snapshots, which can negatively affect graph operation performance due to resource contention.
\maintext{We also report the graph operation latency under T-gMark in the extended version [1], where the observations are similar to those reported.}
\extended{As shown in Figure \ref{Fig.gmark.graph_op}, {\tgdb} performs similarly to Clock-G and significantly outperforms T-GQL by up to $766\times$. This trend is consistent with the one observed in Figure \ref{Fig.tpokec.graph_op}.}

}

\maintext{
\begin{figure}[]
  \centering
  \subfigure[T-mgBench]{
    \begin{minipage}[t]{0.5\linewidth}
     \label{Fig.tpokec.various_queries}
    \includegraphics[width=1\textwidth]{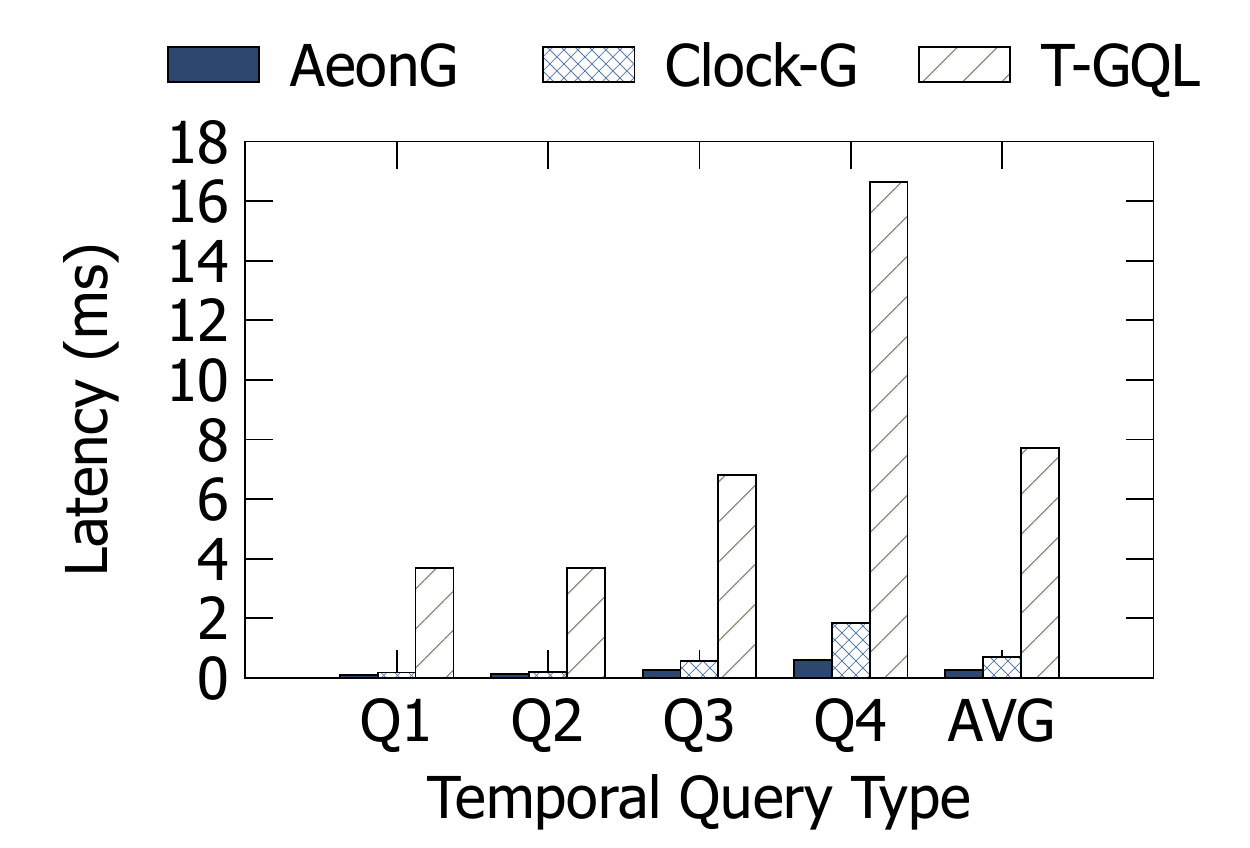}
    \end{minipage}%
   }%
    \subfigure[T-mgBench (Q1): \# of Graph Operations]{
    \begin{minipage}[t]{0.5\linewidth}
     \label{Fig.tpokec.q1_time_scale}
    \includegraphics[width=1\textwidth]{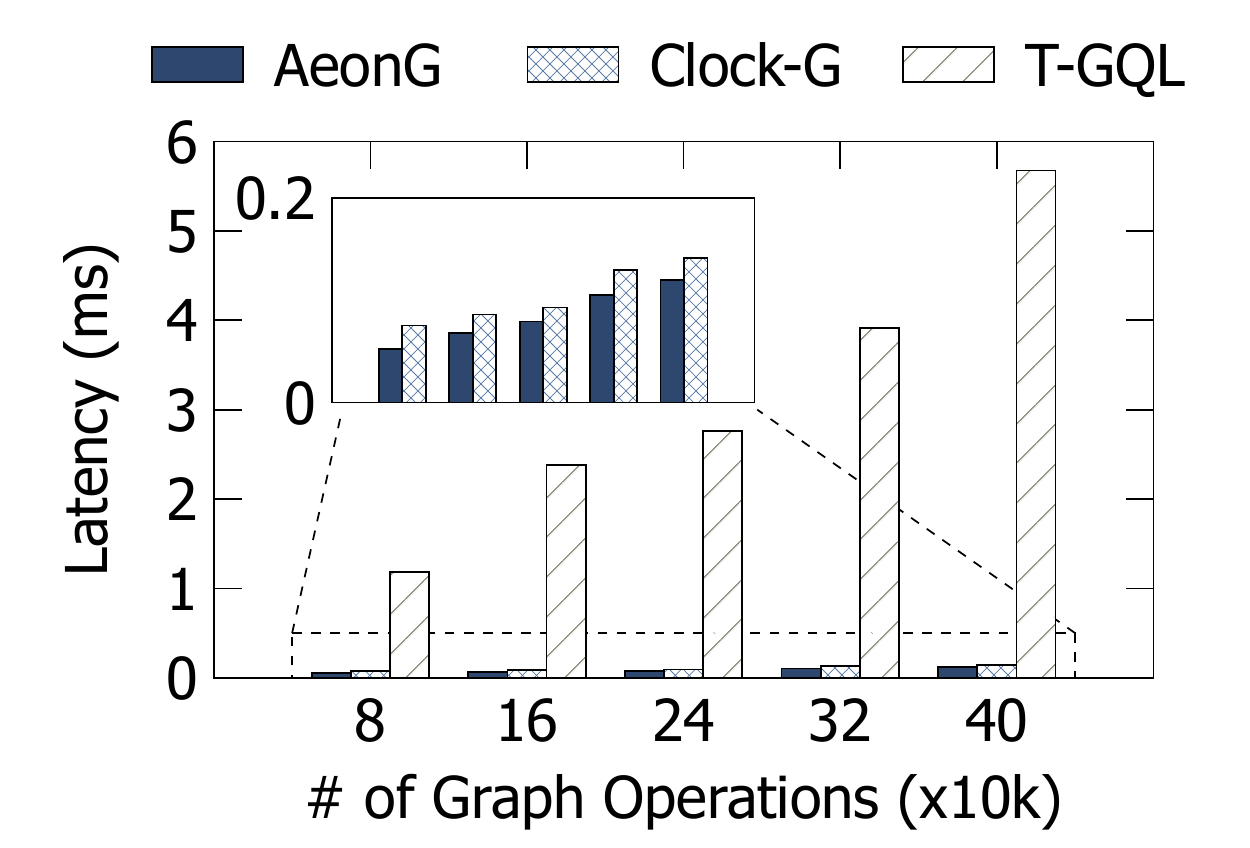}
    \end{minipage}%
   }%
    \vspace{-3mm}
    \subfigure[T-mgBench (Q3): Accessed Vertex Types]{
    \begin{minipage}[t]{0.5\linewidth}
     \label{Fig.tpokec.q3_frequency}
    \includegraphics[width=1\textwidth]{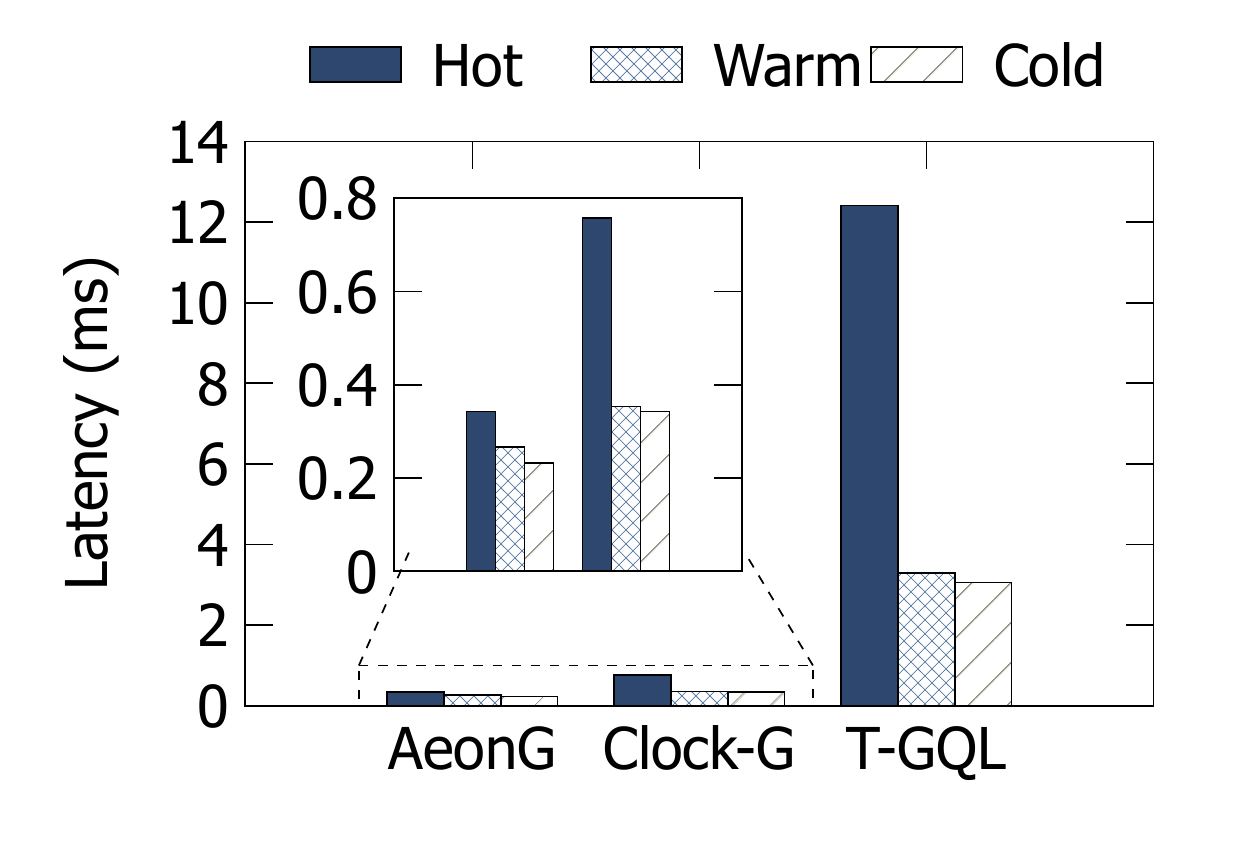}
    \end{minipage}%
   }%
    \subfigure[T-mgBench (Q4): Time Slice]{
    \begin{minipage}[t]{0.5\linewidth}
     \label{Fig.tpokec.q4_timeslice}
    \includegraphics[width=1\textwidth]{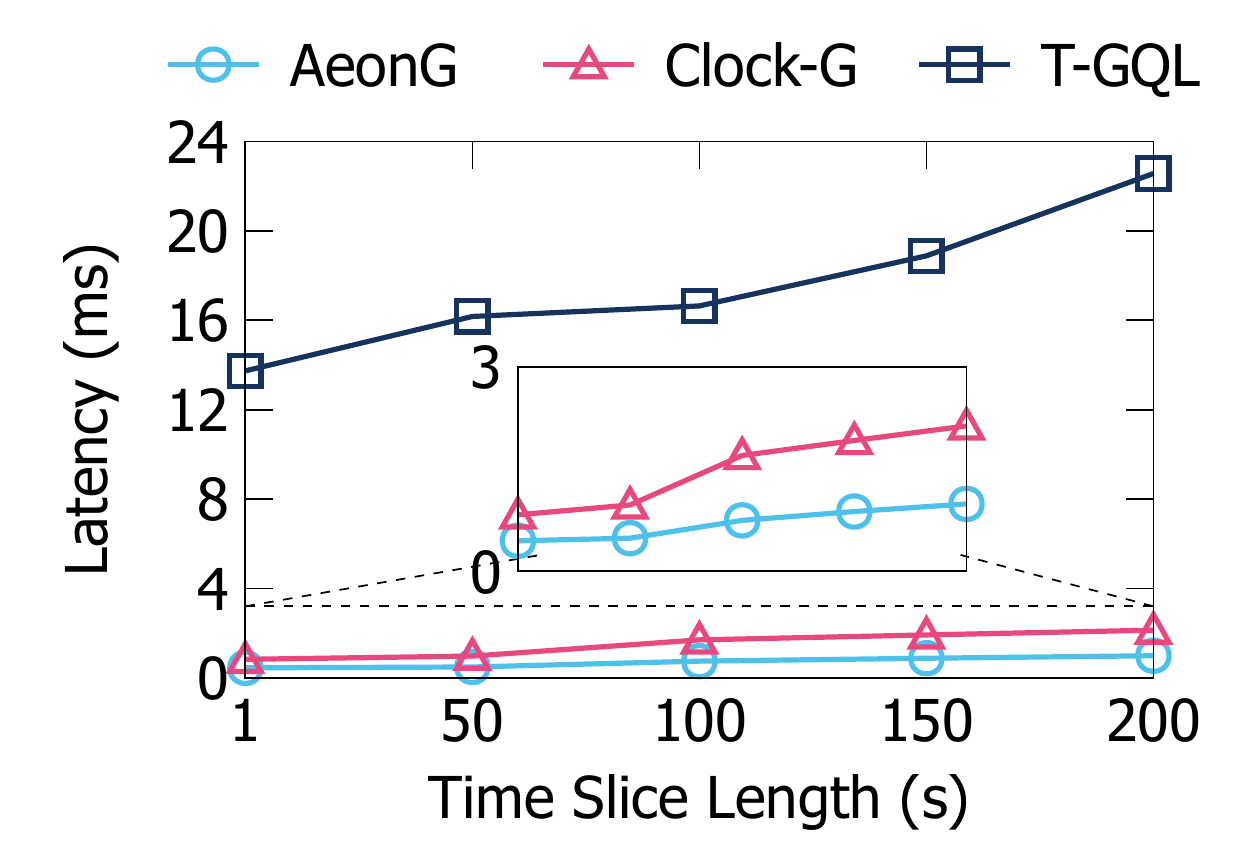}
    \end{minipage}%
   }%
   \vspace{-3mm}
    \subfigure[T-LDBC]{%varying $n$ (DBpedia)
    \begin{minipage}[t]{0.5\linewidth}
     \label{Fig.ldbc.various_q}
    \includegraphics[width=1\textwidth]{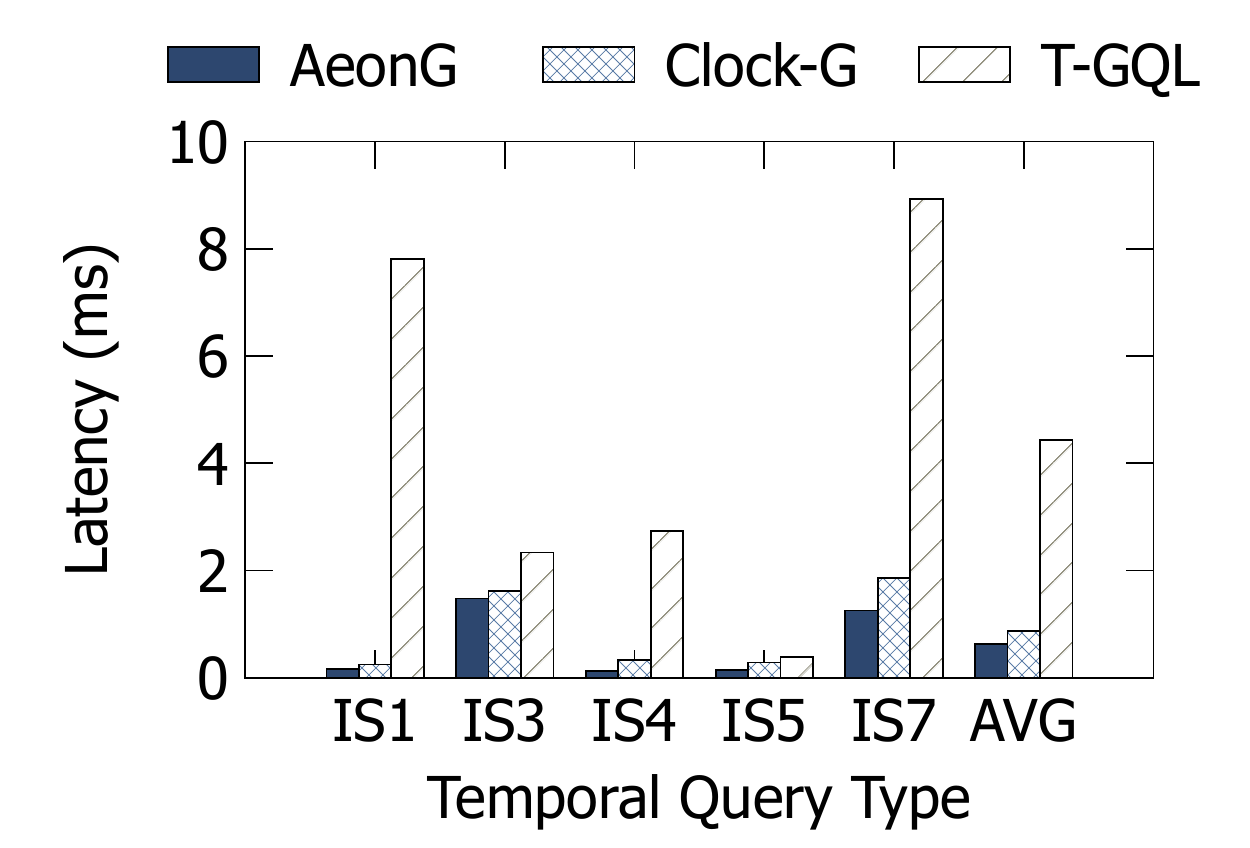}
    \end{minipage}%
    }%
   \subfigure[T-gMark]{%varying $n$ (DBpedia)
    \begin{minipage}[t]{0.5\linewidth}
     \label{Fig.gmark.time}
    \includegraphics[width=1\textwidth]{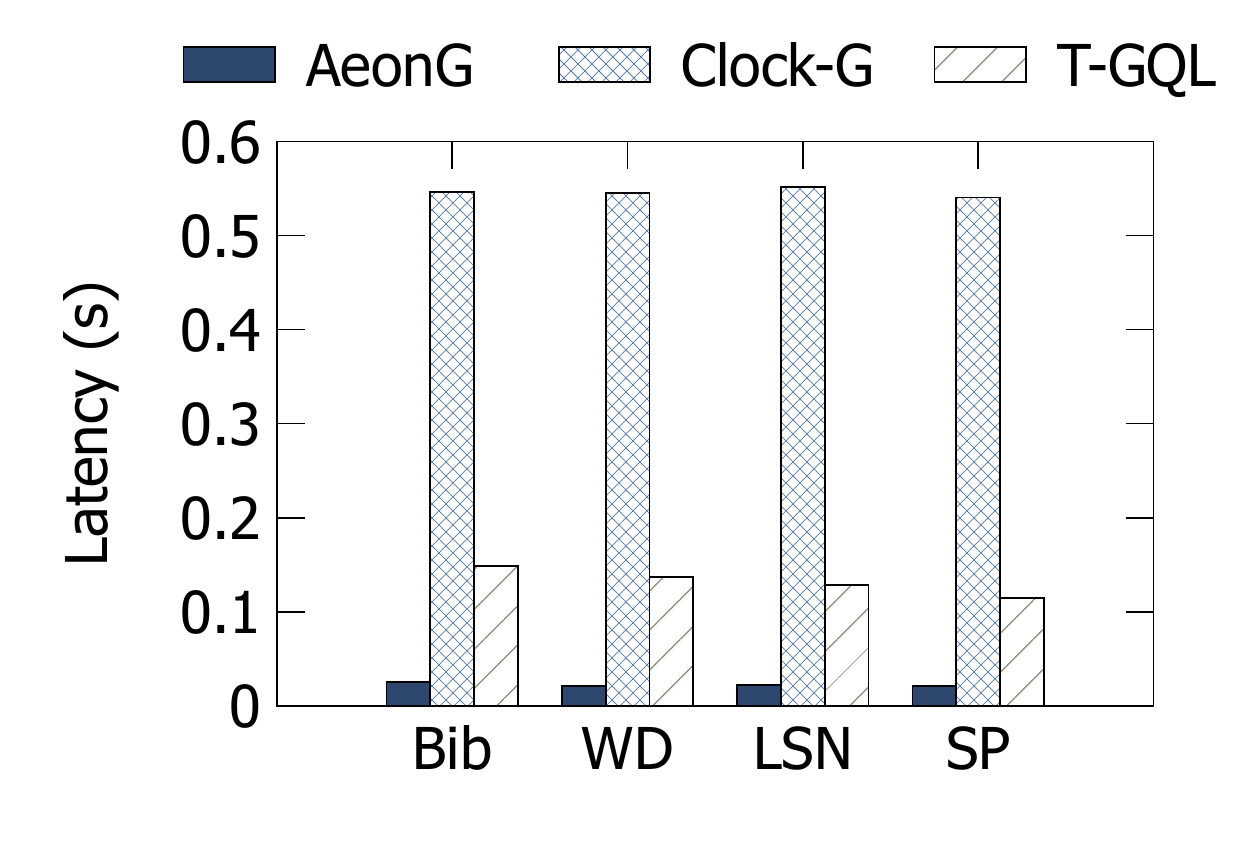}
    \end{minipage}%
   }%
  \vspace{-2mm}
   \captionsetup{justification=raggedright}
\caption{\todo{Comparisons on Temporal Query Latency}}
%alternative workloads
  \label{Fig.alternative.workload}
  % \vspace{-2mm}
\end{figure}
}

\subsubsection{Experiments on temporal query latency.}
\label{sec:temporal_query}
% \textbf{Latency of temporal queries.}  
We 
% provide a detailed analysis of the 
now analyze the 
performance of temporal queries 
% for accessing historical data 
under various configurations. 
% In the subsequent experiment, we fix the size of graph operations at 320k, standardize all systems to inject one graph operation per second, and 
% \textit{Various temporal queries.} 
We first conduct performance evaluations across various temporal queries in T-mgBench, and plot the query latency on different temporal queries in Figure \ref{Fig.tpokec.various_queries}. 
By default, we set the time slice length for Q2 and Q4 to 100s. 
We can observe that {\tgdb} reduces the query latency by {$2.57\times$} compared to Clock-G and {$37.57\times$} compared to T-GQL.
% consistently outperforms the other systems for all types of query statements. On average, {\tgdb} 
The superior performance of {\tgdb} can be attributed to our built-in query engine, which employs an efficient anchor-based version retrieval technique to avoid unnecessary version traversal.
In contrast, to access desired historical graph elements, T-GQL necessitates traversing the entire graph , while Clock-G requires fetching the corresponding historical snapshot and appending logs on it, thereby resulting in slower performance.

% \textit{Varying the number of graph operations.} 
We then study the latency of temporal query Q1 with varying the number of graph operations.
Figure \ref{Fig.tpokec.q1_time_scale} shows that {\tgdb} outperforms Clock-G by up to {$1.43\times$} and achieves an up to {$47.18\times$} improvement compared to T-GQL.
This performance gap becomes increasingly pronounced with a growing number of graph operations. 
% When the number of graph operations reaches 400k, {\tgdb} outperforms T-GQL by up to {$47.18\times$}.
% as the number of graph operations increases.
% the query latency of each system on Q1 under different numbers of graph operations.
As discussed, {\tgdb} exhibits better performance due to the proposed optimized temporal query engine.
% specifically optimized for temporal query processing.
% , highlighting a substantial performance advantage.
% It is worth mentioning that despite having more compact storage compared to Clock-G, {\tgdb} achieves similar or even faster query performance, benefiting from its efficient temporal query engine
% enhanced by the anchor-based version retrieval technique.
% \textit{Varying the operation frequency.} 
\todo{We further evaluate the query latency of Q3 across ``cold'', ``warm'', and 
``hot'' queries. 
We categorize these queries based on the vertices they access.
For example, a query accessing ``hot'' vertices is classified as a ``hot'' query. 
We divide all vertices in the database into ``cold'', ``warm'', and 
``hot'' categories according to their access possibility, which ranges from 0\% to 100\% based on the Zipf distribution.
%%%%hjm: 100->0?
% \todo{describe the figure here.}
As shown in Figure \ref{Fig.tpokec.q3_frequency}, {\tgdb} outperforms the next-best system, Clock-G, in all query categories by up to {$2.21\times$}.
We can also observe that ``hot'' queries, which require accessing more historical data, generally have lower performance than ``warm'' and ``cold'' queries.
However, in {\tgdb}, ``hot'' queries are only {$1.29\times$} slower than ``warm'' queries, while in T-GQL and Clock-G, ``hot'' queries underperform ``warm'' queries by up to {$2.14\times$} and {$3.77\times$}, respectively. 
As discussed, the smaller performance gap of {\tgdb} can be attributed to our anchor-based version retrieval technique, which avoids unnecessary version traversal.}
% We next evaluate the performance with varying operation frequencies, focusing on Q3 with hot, warm, and cold queries.
% As depicted in Figure \ref{Fig.tpokec.q3_frequency}, all temporal systems experience an increase in query time as the frequency of hotter queries rises. Hotter queries require more historical data, resulting in more time. 
% {\tgdb} demonstrates superior capabilities in handling all query types, while T-GQL and Clock-G experience significant performance degradation for hotter queries.
% When transitioning from warm to hot queries, {\tgdb} experiences a minimal performance drop of \makesure{$1.29\times$}. In contrast, Clock-G endures a \makesure{$2.14\times$} drop and T-GQL suffers a substantial \makesure{$3.77\times$} performance decline.
% Compared to T-GQL, {\tgdb} avoids unnecessary subgraph retrieval through the utilization of two separate storage systems. 
% Compared to Clock-G, {\tgdb} implements an efficient anchor-based version retrieval technique, which accelerates query performance.
% \textit{Varying the length of the time slice.} 
We also study the latency of temporal query Q4 with varying time slice length
% . Given the disparate write performance of each system, as depicted in Figure \ref{Fig.tpokec.graph_op}, we standardize all systems to inject one graph operation per second while manipulating time slice lengths 
from 1s to 200s. 
As observed in Figure \ref{Fig.tpokec.q4_timeslice}, {\tgdb} outperforms Clock-G and T-GQL by up to {$2.27\times$} and {$33.23\times$}, respectively, showing a consistently superior performance under 
different time slice lengths.
% \todo{describe the figure here.}
% As shown in Figure \ref{Fig.tpokec.q4_timeslice}, 
% {\tgdb} achieves a performance improvement of up to \makesure{$2.27\times$} compared to Clock-G and an impressive \makesure{$33.23\times$} compared to T-GQL. 
% {\tgdb} exhibits a consistently superior performance under different time spans, owing to our anchor-based retrieval strategy.

{
% \subsubsection{ Experiment on the T-LDBC and T-gMark workloads.} 
\todo{We additionally evaluate the temporal query performance on T-LDBC.
\maintext{As depicted in Figure \ref{Fig.ldbc.various_q}, {\tgdb} outperforms among all temporal query types and achieves lower latency by up to {$1.37\times$} and {$7\times$} than Clock-G and T-GQL, in alignment with the trends observed in Figure \ref{Fig.tpokec.various_queries}.}
\extended{As depicted in Figure \ref{Fig.ldbc.various_q} and \ref{Fig.ldbc.various_q.i26}, {\tgdb} outperforms among all temporal query types and achieves lower latency by up to {$1.46\times$} and {$12.07\times$} than Clock-G and T-GQL, in alignment with the trends observed in Figure \ref{Fig.tpokec.various_queries}. We also rerun the experiments on T-LDBC by utilizing a larger graph with 9.28M vertices and 52.7M edges.
During these experiments, T-GQL encounters the out-of-memory issue due to the increased graph size.
However, as shown in Figure \ref{Fig.ldbc.s3}, {\tgdb} continues to outperform Clock-G by up to $2.06\times$, which aligns with the findings shown in Figure \ref{Fig.ldbc.various_q}. }
% achieving an average {$1.37\times$} faster performance than Clock-G and {$7\times$} faster performance than T-GQL, aligning with the results observed in Figure \ref{Fig.tpokec.various_queries}. 
\maintext{Due to space limitations, we leave the latency details of IS2 and IS6 in our extended manuscript \cite{aeongsupplement}.
Their trends align with Figure \ref{Fig.ldbc.various_q}, only but their scales differ. }
Furthermore, we conduct experiments using T-gMrak.
As depicted in Figure \ref{Fig.gmark.time}, {\tgdb} consistency outperforms in all the datasets and demonstrates up to {$26.16\times$} faster temporal query performance than Clock-G and {$6.56\times$} faster than T-GQL. 
}
}

% \extended{\input{extend_version/exp_tldbc_tgmark}}

% \todo{Moreover, it demonstrates up to {$2.82\times$} faster graph operation performance than Clock-G and {$10.11\times$} faster performance than T-GQL.} In terms of temporal queries, as illustrated in Figure \ref{Fig.ldbc.various_q}, {\tgdb} outperforms all temporal query types, achieving an average {$1.37\times$} faster performance than Clock-G and {$7\times$} faster performance than T-GQL. {\tgdb} benefit from its efficient temporal query engine enhanced by the anchor-based version retrieval technique.

\todo{

\extended{
\begin{figure*}[]
  \centering
  \subfigure[T-mgBench]{
    \begin{minipage}[t]{0.25\linewidth}
     \label{Fig.tpokec.various_queries}
    \includegraphics[width=1\textwidth]{figures/results/0_tmgbench/various_queries.pdf}
    \end{minipage}%
   }%
    \subfigure[T-mgBench (Q1): \# of Graph Operations]{
    \begin{minipage}[t]{0.25\linewidth}
     \label{Fig.tpokec.q1_time_scale}
    \includegraphics[width=1\textwidth]{figures/results/0_tmgbench/q1_time.pdf}
    \end{minipage}%
   }%
    \subfigure[T-mgBench (Q3): Accessed Vertex Types]{
    \begin{minipage}[t]{0.25\linewidth}
     \label{Fig.tpokec.q3_frequency}
    \includegraphics[width=1\textwidth]{figures/results/0_tmgbench/Q3_frequency.pdf}
    \end{minipage}%
   }%
    \subfigure[T-mgBench (Q4): Time Slice]{
    \begin{minipage}[t]{0.25\linewidth}
     \label{Fig.tpokec.q4_timeslice}
    \includegraphics[width=1\textwidth]{figures/results/0_tmgbench/q4_timeslice.pdf}
    \end{minipage}%
   }%
   \vspace{-3mm}
    \subfigure[T-LDBC]{%varying $n$ (DBpedia)
    \begin{minipage}[t]{0.25\linewidth}
     \label{Fig.ldbc.various_q}
    \includegraphics[width=1\textwidth]{figures/results/1_ldbc/ldbc_short_q.pdf}
    \end{minipage}%
    }%
    \subfigure[T-LDBC]{%varying $n$ (DBpedia)
    \begin{minipage}[t]{0.25\linewidth}
     \label{Fig.ldbc.various_q.i26}
    \includegraphics[width=1\textwidth]{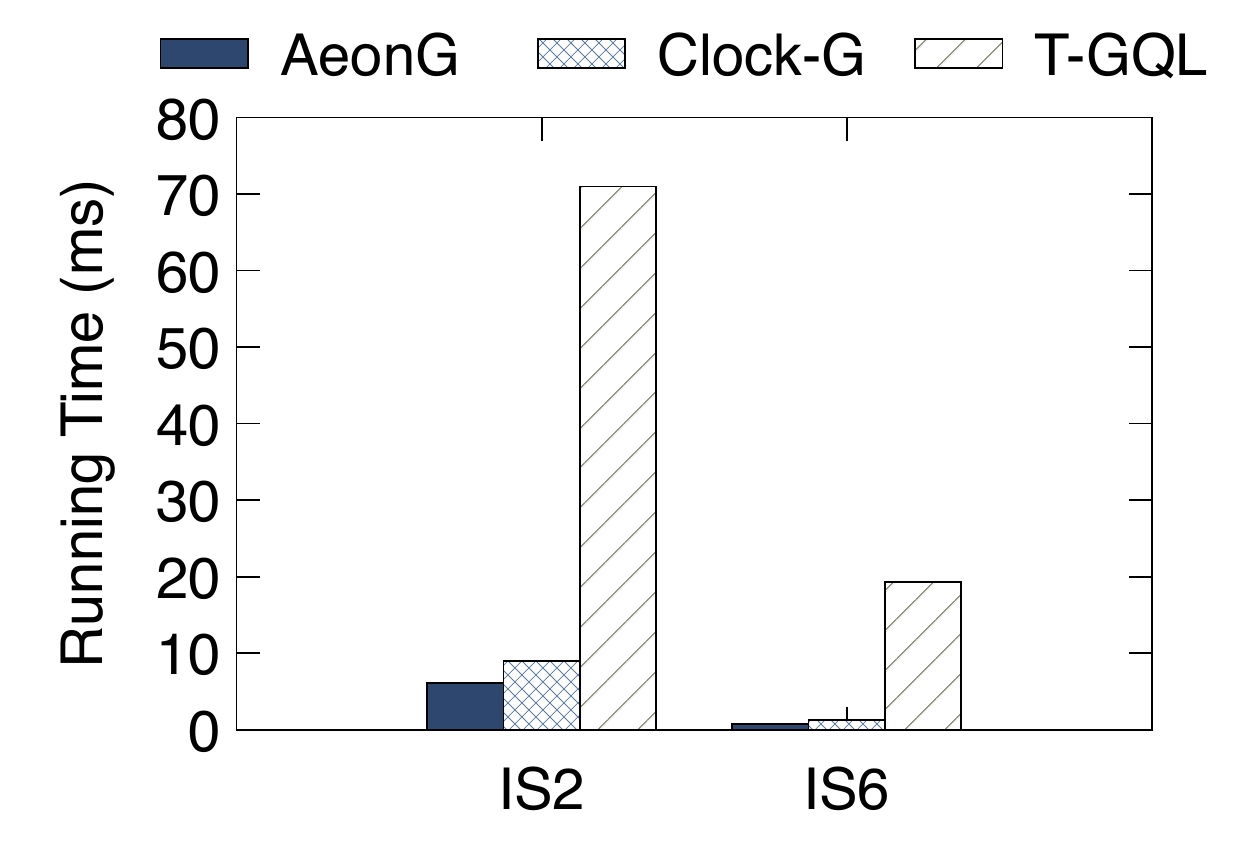}
    \end{minipage}%
    }%
    \subfigure[T-LDBC (S3)]{%varying $n$ (DBpedia)
    \begin{minipage}[t]{0.25\linewidth}
     \label{Fig.ldbc.s3}
    \includegraphics[width=1\textwidth]{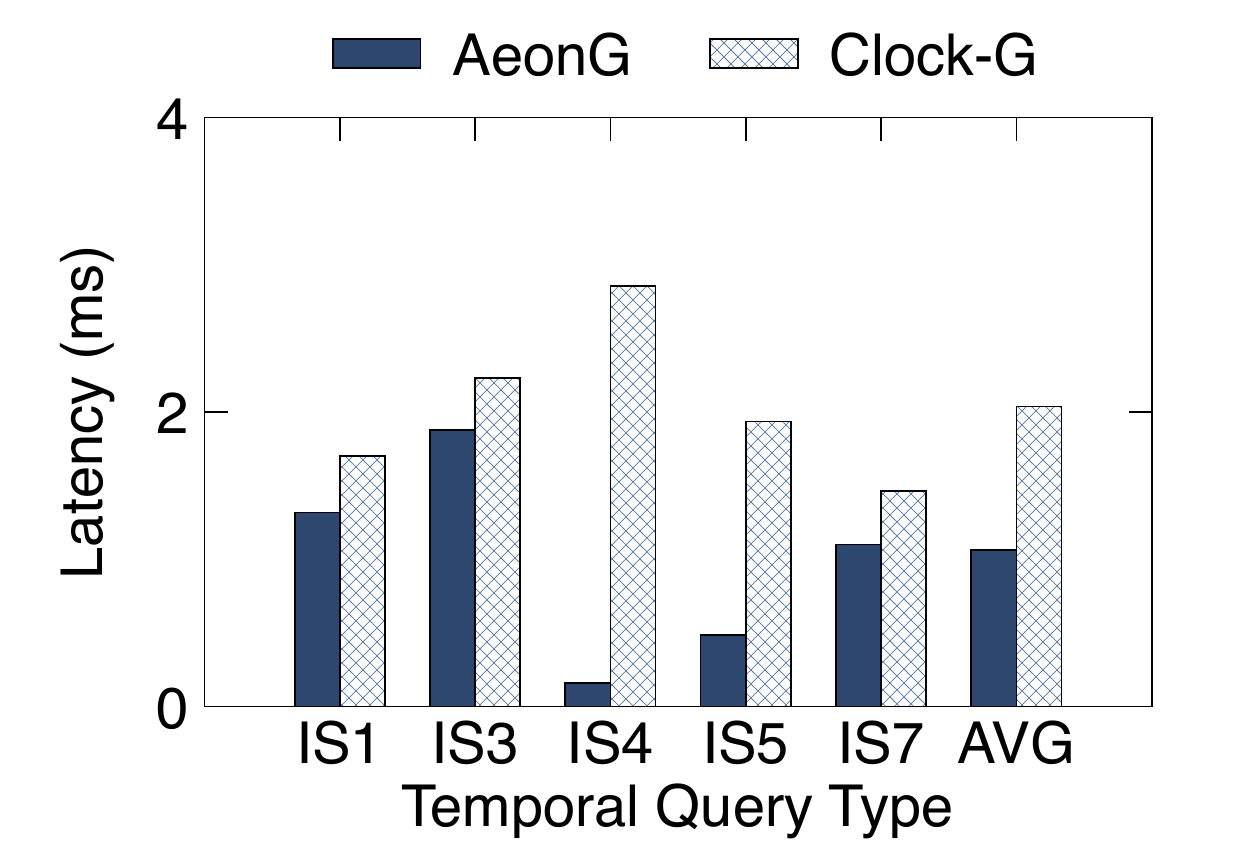}
    \end{minipage}%
    }%
   \subfigure[T-gMark]{%varying $n$ (DBpedia)
    \begin{minipage}[t]{0.25\linewidth}
     \label{Fig.gmark.time}
    \includegraphics[width=1\textwidth]{figures/results/2_gmark/time.pdf}
    \end{minipage}%
   }%
  \vspace{-3mm}
   \captionsetup{justification=raggedright}
\caption{\todo{Comparisons on Temporal Query Latency}}
%alternative workloads
  \label{Fig.alternative.workload}
  % \vspace{-2mm}
\end{figure*}
}

\subsection{Performance Analysis on {\tgdb}}
\label{sec:evaluation.analysis}
We now provide an in-depth analysis of {\tgdb}'s performance under diverse configurations. In the following experiments, we fix RocksDB's MemTable size to the default value of {64MB}.

\subsubsection{The performance of non-temporal queries.}
\label{sec:evaluation.non-temporal}
We first analyze the performance of {\tgdb} on non-temporal queries to study the impact of introducing temporal features in its fundamental system, Memgraph. 
% We evaluate all non-temporal queries derived from three origin unextended workloads: mgBench, LDBC, and gMark. 
% These queries are mixed in a 1:1 ratio, and the average execution time is measured. 
We use various non-temporal queries defined in three origin unextended workloads: mgBench, LDBC, and gMark. We run corresponding queries based on the datasets generated by T-mgBench, T-LDBC, and T-gMark, and plot the average query latency of each workload in Figure \ref{Fig.aeong.mgbench}. 
The results indicate that {\tgdb} experiences an acceptable performance drop of up to {9.74\%} compared with Memgraph, a trade-off for its support of temporal queries.
% , within a range of \makesure{9.8\%} to \makesure{17.97\%}.
% These outcomes substantiate that {\tgdb}'s temporal support is of a lightweight and non-intrusive nature to the original system.
\extended{In addition, we provide a detailed performance breakdown for each specific query of mgBench in Figure \ref{Fig.aeong.mgbench.various_q}.
We can observe that {\tgdb} maintains comparable performance with Memgraph in most of the tested queries.}
{\tgdb} adopts a design that separates the current database and asynchronously transfers historical data, ensuring minimal impact on dominant non-temporal queries.}

\maintext{\begin{figure}[]
  \centering
\includegraphics[width=0.48\textwidth]{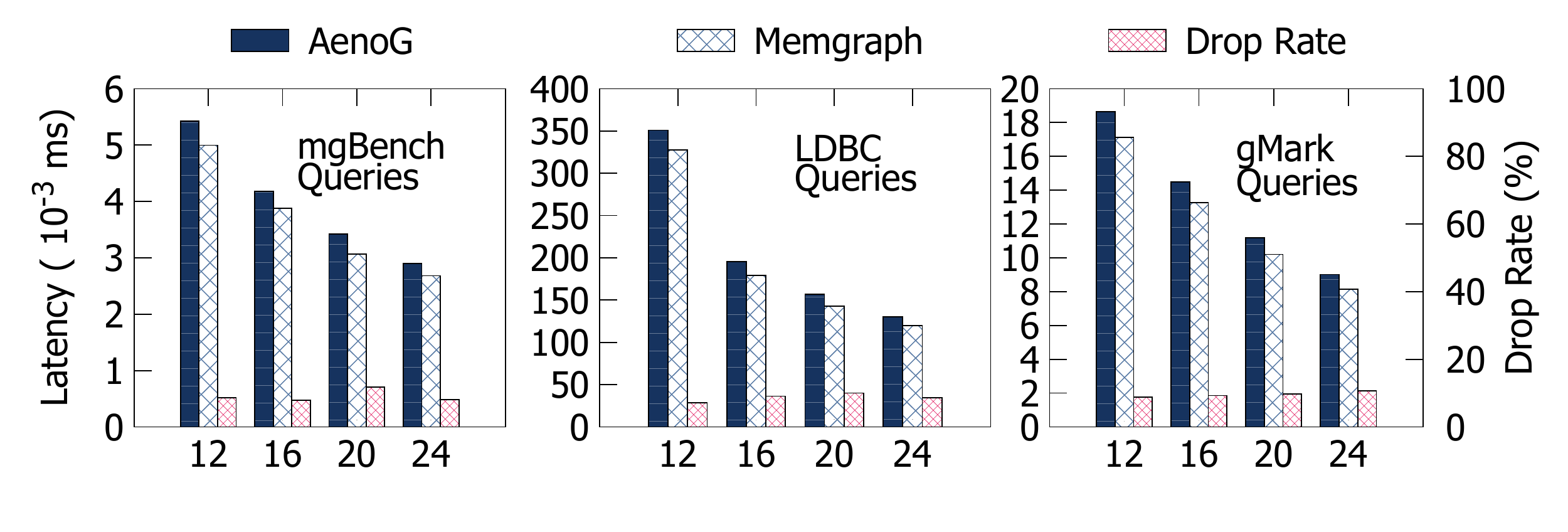}
  \vspace{-8mm}
   % \captionsetup{justification=raggedright}
\caption{\todo{{\tgdb} vs Memgraph on Non-temporal Queries}}
  \label{Fig.aeong.mgbench}
  % \vspace{-2mm}
\end{figure}}

\todo{
\subsubsection{The impact of historical data migration.}
\label{exp:read_perf}
We next use the T-mgBench workload to study the query performance with varying the GC interval to control the frequency of historical data migration.
We plot the latency of queries across different data types: current, reclaimed, and unreclaimed data, and graph operation in Figure \ref{Fig.aeong.gc}.
As observed, the query performance for current and unreclaimed data is relatively similar, both outperforming reclaimed data queries by up to {$25.8\%$}.
This difference is attributed to the fact that querying current and unreclaimed data both need to traverse the version chain in the current storage, while querying reclaimed data requires the additional step of reconstructing a historical version using anchors and deltas in the historical storage, as detailed in Section~\ref{sec:hybrid_storage}.
Further, we note that increasing the GC interval from 1s to 1000s leads to a {$17.3\%$} decrease in the graph operation latency and a {$9.5\%$} increase in the query latency.
This is expected as less frequent migrations can reduce contention with graph operation, thereby enhancing graph operation performance.
In contrast, less frequent migrations result in longer version-chain traversal in the current storage, negatively impacting query performance.
}

\extended{\begin{figure}[]
  \centering
   % \captionsetup{justification=raggedright}
    \subfigure[{\tgdb} vs Memgraph on All Workloads]{%varying $n$ (DBpedia)
    \begin{minipage}[t]{1\linewidth}
     \label{Fig.aeong.all}
    \includegraphics[width=1\textwidth]{figures/results/3_aeong/all_workloads.pdf}
    \end{minipage}%
   }%
   \vspace{-3mm}
   \subfigure[Running Time of Specific Queries in mgBench]{
    \begin{minipage}[t]{1\linewidth}
     \label{Fig.aeong.mgbench.various_q}
    \includegraphics[width=1\textwidth]{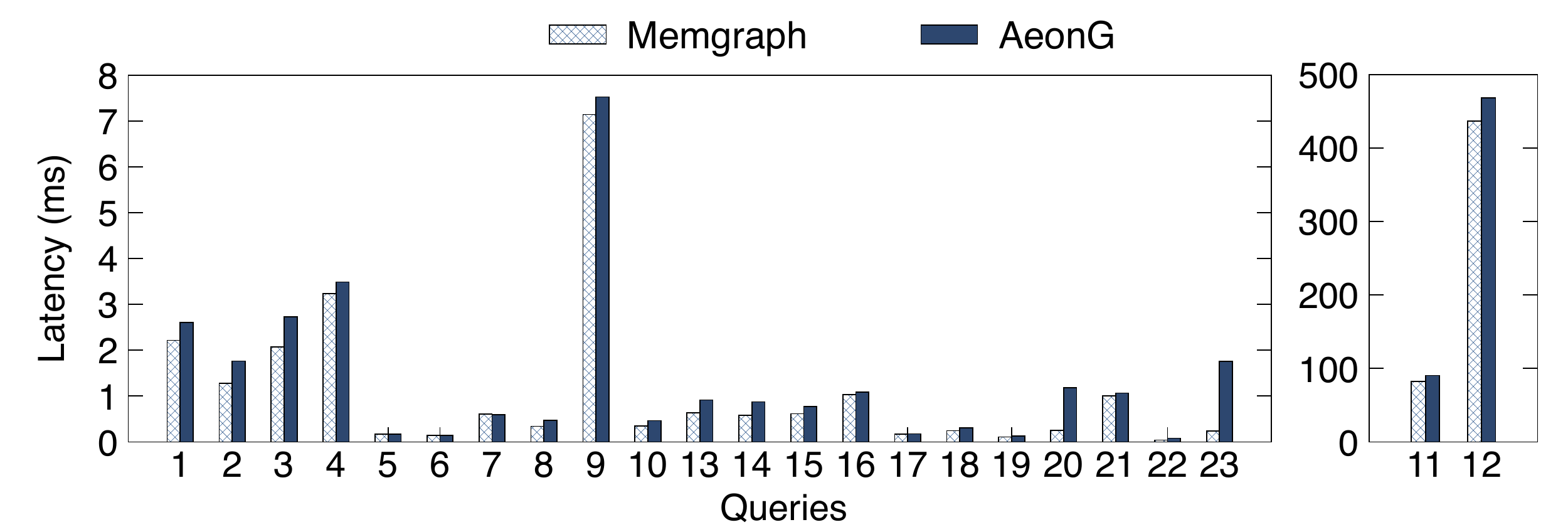}
    \end{minipage}%
   }%
   \vspace{-4mm}
\caption{{\tgdb} vs Memgraph on Non-temporal Queries}
  \label{Fig.aeong.mgbench}
  % \vspace{-2mm}
\end{figure}}
% \vspace{-1mm}

\todo{
\subsubsection{The analysis on the anchor interval.} 
\label{exp:anchor}
We evaluate the effectiveness of our adaptive anchoring approach using the T-LDBC workload. 
As shown in Figure \ref{Fig.aeong.anchor}, when we assign a fixed anchor interval $u$ to each graph object and vary $u$ from 1 to 1000, we observe that storage consumption of the historical storage decreases by {2.69$\times$} and the temporal query latency increases by {$2.15\times$}. 
In contrast, our adaptive anchoring approach consistently achieves near-optimal query performance and storage efficiency against all fixed anchor interval settings, because of its ability to properly balance query latency and storage overhead efficiency. 
% properly. 
% We now study the impact of anchor data on the T-LDBC workload in terms of historical storage consumption and temporal query execution time. We compare our adaptive anchoring approach with a global constant anchoring configuration, where anchors are placed at consistent regular intervals between the delta data. We vary the constant anchor interval threshold, denoted as $u$, from 1 to 1000.
% As depicted in Figure \ref{Fig.aeong.anchor}, higher values of $u$ lead to reduced storage consumption but degrade query performance. When transitioning $u$ from 1 to 1000, the storage consumption decreases by a factor of \makesure{2.69} and the query latency increases by up to \makesure{$2.15\times$}. Moreover, our proposed adaptive anchoring approach demonstrates enhanced flexibility and strikes a balance between storage and query performance compared to the global constant anchoring configuration. The query performance closely approximates that of $u=1$, while simultaneously achieving a similar storage consumption level as that of $u=1000$.
}

\todo{
\subsubsection{The impact of the historical retention period.}
\label{exp:historical_retention}
% We now activate historical data purging in {\tgdb}. 
We utilize the T-LDBC workload to evaluate the historical storage overhead and temporal query latency with varying historical data retention periods. 
We simulate one day's amount of graph evolution in just one minute, which is done by assuming a daily operation count of 100k and executing these operations within one minute. 
Consequently, we set historical data retention periods at 15, 30, 90, and 180 minutes, simulating real-world scenarios of half a month, one month, one quarter, and half a year, respectively.
% as 15 minutes, 30 minutes, 90 minutes, and 180 minutes
% , configuring the retention period of historical data in RocksDB from 15 minutes to 180 minutes.
% Employing the T-LDBC workload, we inject 100k graph operations per minute over a span of 360 minutes and evaluate the storage overhead and latency of IS4. 
The results, shown in Figure \ref{Fig.aeong.clean_history}, demonstrate that the storage consumption increases by {$6.02\times$} and query performance decreases by {$1.62\times$} as the data retention period extends from 15 to 180 minutes.
This trend is expected since longer retention period results in more historical data being maintained, leading to decreased query performance.
Based on this observation, we consider enabling users to set a proper \texttt{retention\_period} to achieve a balance among storage overhead, historical data duration, and query performance.
% We vary the historical data retention period from 15 minutes to 180 minutes and evaluate the historical storage overhead and latency. 
% To enhance experimental efficiency, we employ a time dilation technique in which each simulated minute of graph operations corresponds to a full day of real-world graph operations.  
% Data retention periods are configured to 15 minutes, 30 minutes, 90 minutes, and 180 minutes, corresponding to real-world scenarios simulating half a month, one month, a quarter, and half a year of data retention, respectively. 
}

% \begin{figure}[]
%   \centering
%    % \subfigure[{\tgdb} vs Memgraph]{%varying $n$ (DBpedia)
%    %  \begin{minipage}[t]{0.5\linewidth}
%    %   \label{Fig.aeong.mgbench}
%    %  \includegraphics[width=1\textwidth]{figures/results/3_aeong/mgbench_time.pdf}
%    %  \end{minipage}%
%    % }%
%    \subfigure[The effect of the data migration]{%varying $n$ (DBpedia)
%     \begin{minipage}[t]{0.5\linewidth}
%      \label{Fig.aeong.gc}
%     \includegraphics[width=1\textwidth]{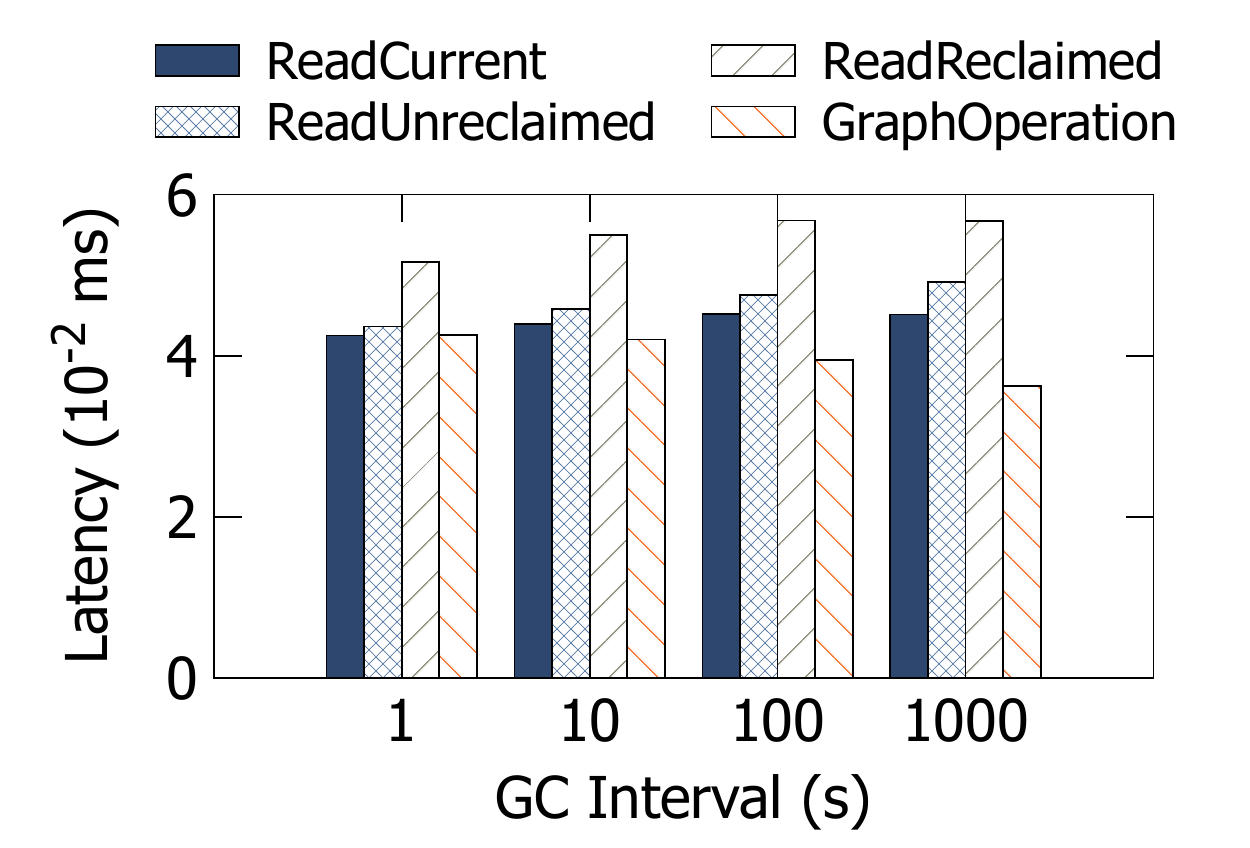}
%     \end{minipage}%
%    }%
%    \subfigure[The effect of the anchor interval]{%varying $n$ (DBpedia)
%     \begin{minipage}[t]{0.5\linewidth}
%      \label{Fig.aeong.anchor}
%     \includegraphics[width=1\textwidth]{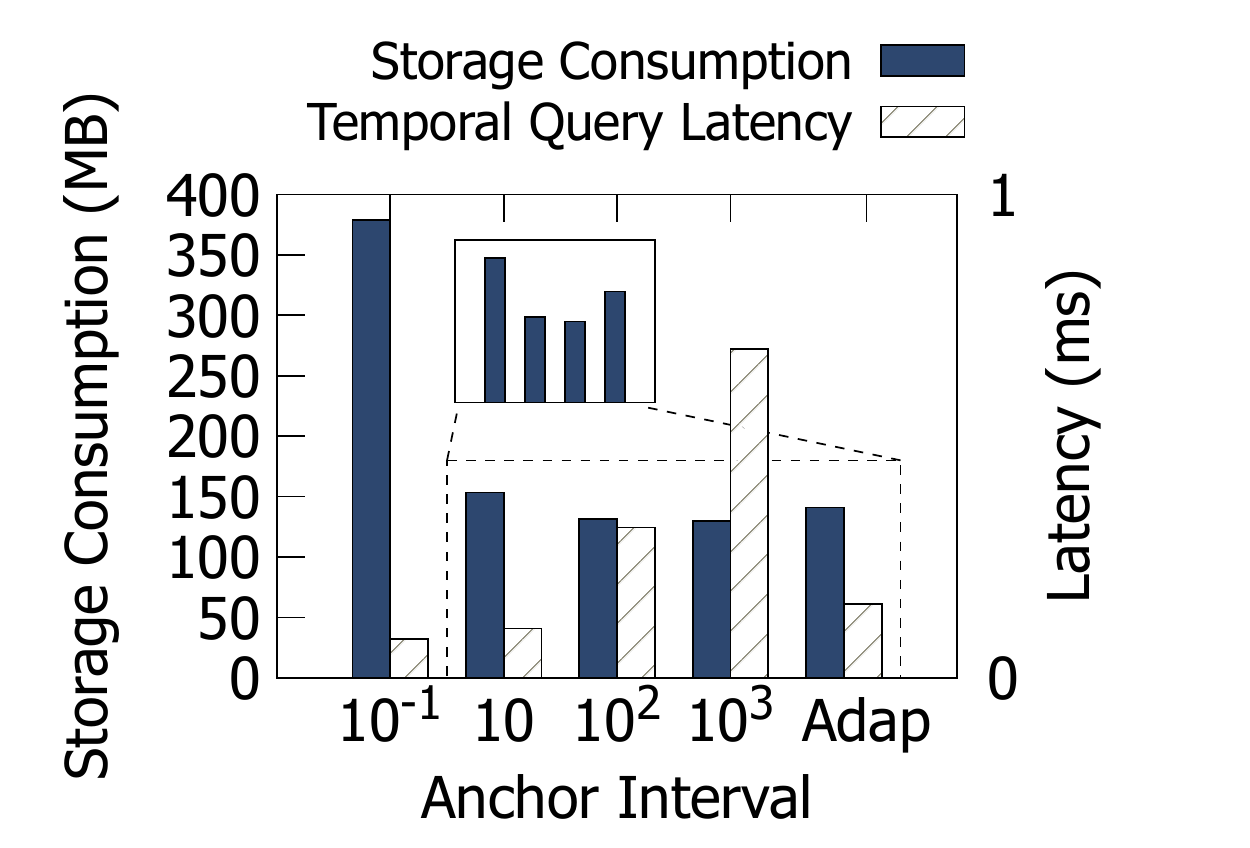}
%     \end{minipage}%
%    }%
   
%    \subfigure[Varying the retention period]{%varying $n$ (DBpedia)
%     \begin{minipage}[t]{0.5\linewidth}
%      \label{Fig.aeong.clean_history}
%     \includegraphics[width=1\textwidth]{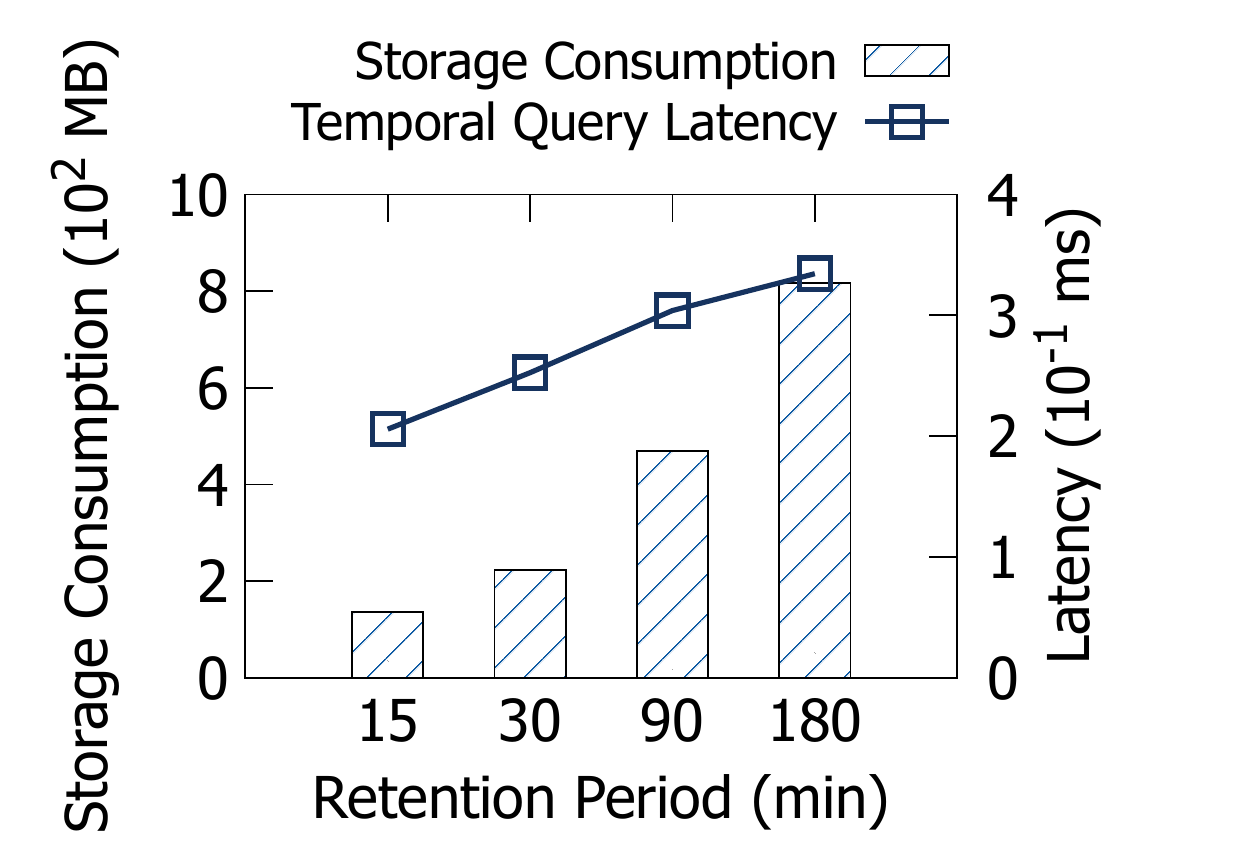}
%     \end{minipage}%
%    }%
%    \subfigure[Extend on TiKV]{%varying $n$ (DBpedia)
%     \begin{minipage}[t]{0.5\linewidth}
%      \label{Fig.aeong.tikv}
%     \includegraphics[width=1\textwidth]{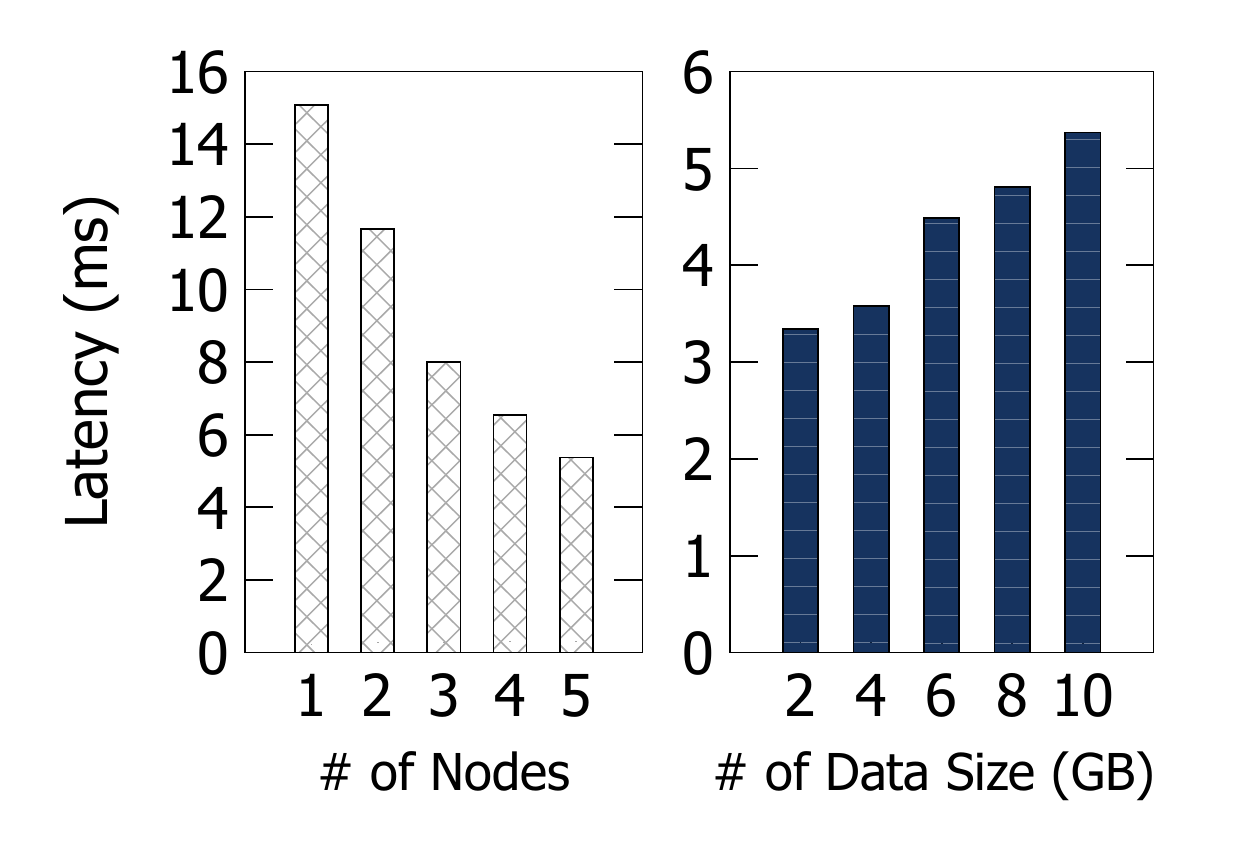}
%     \end{minipage}%
%    }%
%   \vspace{-4mm}
%    \captionsetup{justification=raggedright}
% \caption{Performance Analysis on AeonG}
%   \label{Fig.aeong}
%   \vspace{-2mm}
% \end{figure}
% \vspace{-1mm}

\maintext{\begin{figure}[]
  \centering
   \subfigure[The Effect of the Data Migration]{%varying $n$ (DBpedia)
    \begin{minipage}[t]{0.5\linewidth}
     \label{Fig.aeong.gc}
    \includegraphics[width=1\textwidth]{figures/results/3_aeong/gc_rw.pdf}
    \end{minipage}%
   }%
   \subfigure[The Effect of the Anchor Interval]{%varying $n$ (DBpedia)
    \begin{minipage}[t]{0.5\linewidth}
     \label{Fig.aeong.anchor}
    \includegraphics[width=1\textwidth]{figures/results/3_aeong/anchor.pdf}
    \end{minipage}%
   }%
    \vspace{-3mm}
   \subfigure[Varying the Retention Period]{%varying $n$ (DBpedia)
    \begin{minipage}[t]{0.5\linewidth}
     \label{Fig.aeong.clean_history}
    \includegraphics[width=1\textwidth]{figures/results/3_aeong/clean_history.pdf}
    \end{minipage}%
   }%
   \subfigure[Extending on TiKV]{%varying $n$ (DBpedia)
    \begin{minipage}[t]{0.5\linewidth}
     \label{Fig.aeong.tikv}
    \includegraphics[width=1\textwidth]{figures/results/3_aeong/tikv_all.pdf}
    \end{minipage}%
   }%
  \vspace{-2mm}
   \captionsetup{justification=raggedright}
\caption{\todo{Performance Breakdown Analysis on {\tgdb}}}
  \label{Fig.aeong}
  % \vspace{-2mm}
\end{figure}
}
\todo{
\subsubsection{The scalability of {\tgdbd}.}
\label{exp:tikv}
We now deploy {\tgdbd} and TiKV across {5 nodes} by default, with historical data horizontally partitioned among these nodes. 
% Our scalability evaluation involved two key aspects: expanding the node count and handling increased data volumes.
First, we evaluate the impact of increasing the server count on performance \todo{with 10GB data volume} with the T-LDBC workload. 
The results, shown in the left part of Figure~\ref{Fig.aeong.tikv}, indicate that the temporal query of {\tgdbd} decreases by up to {$2.8\times$} when scaling from 1 to 5 servers.
The scalability of {\tgdbd} can be attributed to the improved parallelism achieved by adding more servers, where each server can independently process historical data retrieval requests with its TiKV instance.
Second, we assess temporal query latency using T-LDBC with the data volume increasing from 2GB to 10GB.
As shown in the right part of Figure~\ref{Fig.aeong.tikv}, the latency of {\tgdbd} increases by up to {$1.6\times$}, which is expected due to the greater cost of fetching graph objects from a larger database.
\todo{Similar trends are also reported in~\cite{clock-g,g*,khurana2016storing}.}

\extended{\begin{figure}[]
  \centering
   \subfigure[The Effect of the Data Migration]{%varying $n$ (DBpedia)
    \begin{minipage}[t]{0.5\linewidth}
     \label{Fig.aeong.gc}
    \includegraphics[width=1\textwidth]{figures/results/3_aeong/gc_rw.pdf}
    \end{minipage}%
   }%
   \subfigure[The Effect of the Anchor Interval]{%varying $n$ (DBpedia)
    \begin{minipage}[t]{0.5\linewidth}
     \label{Fig.aeong.anchor}
    \includegraphics[width=1\textwidth]{figures/results/3_aeong/anchor.pdf}
    \end{minipage}%
   }%
    \vspace{-3mm}
   \subfigure[Varying the Retention Period]{%varying $n$ (DBpedia)
    \begin{minipage}[t]{0.5\linewidth}
     \label{Fig.aeong.clean_history}
    \includegraphics[width=1\textwidth]{figures/results/3_aeong/clean_history.pdf}
    \end{minipage}%
   }%
   \subfigure[Extend on TiKV]{%varying $n$ (DBpedia)
    \begin{minipage}[t]{0.5\linewidth}
     \label{Fig.aeong.tikv}
    \includegraphics[width=1\textwidth]{figures/results/3_aeong/tikv_all.pdf}
    \end{minipage}%
   }%
  % \vspace{-4mm}
   \captionsetup{justification=raggedright}
\caption{\todo{Performance Breakdown Analysis on {\tgdb}}}
  \label{Fig.aeong}
  % \vspace{-2mm}
\end{figure}
}

}

\maintext{\section{Related Work}
\label{sec:relatedwork}
Temporal graph data management involves two primary approaches.
% : one at the application level and the other at the internal system level.
One approach integrates temporal features at the application level, utilizing commercial graph databases by attaching temporal metadata~\cite{Frame,liu2017keyword,GRADOOP,Timebased3,Timebased4, durand2017backlogs,T-GQL}. 
% Cattuto et al.~\cite{Frame} propose a temporal model with frames, defined as the finest temporal aggregation unit. 
Take a state-of-the-art approach T-GQL~\cite{T-GQL} in this field as an example. T-GQL adopts a specific representation of temporal graphs,
where conventional vertices are decomposed into {Object}, {Attribute}, {Value} vertices, and conventional edges remain the same. Time dimensions are introduced as properties of designed vertices and edges. 
% Other works exploit graph databases by representing the temporal dimension as an interval attribute to graph objects~\cite{liu2017keyword,GRADOOP,Timebased3,Timebased4, durand2017backlogs}. 
However, there may exhibit unpredictable performance due to underlying engines designed for static graphs.
% These systems are implemented at the application level upon existing database systems. They may exhibit unpredictable performance, as the underlying storage and query engines are designed for static graphs rather than evolving ones.  
An alternative line of research focuses on the system level, designing storage engines to handle growing historical data while enabling efficient querying. In this regard, two storage approaches, \textit{Copy} and \textit{Log}, are widely used to manage temporal graph data~\cite{DeltaGraph,ChronoGraph, Raphtory,llama,ImmortalGraph,clock-g, graphone, Chronos, Auxo, khurana2016storing}.
The \textit{Copy} approach~\cite{DeltaGraph,ChronoGraph}
stores an entire graph state whenever a batch of updates occurs. 
Although it simplifies graph querying, it results in excessive redundancy in the stored graph information. 
In contrast, the \textit{Log} approach~\cite{Raphtory,llama}
records every graph update activity in a log, offering a more compact solution but requiring costly reconstruction when executing a temporal graph query. To balance query performance and space overhead, the \textit{Copy+Log} approach~\cite{ImmortalGraph,clock-g, graphone, Chronos, Auxo, khurana2016storing}
combines a finite set of snapshots with a list of deltas between them. 
However, we argue the \textit{Copy+Log} approach is suboptimal since it still requires significant storage overhead to materialize the entire graph. Moreover, they lack support for a powerful temporal graph data model or a declarative temporal query language, restricting user convenience.

}
\maintext{% \vspace{-1mm}
\section{Conclusion}
\label{sec:conclusion}
In this paper, we propose {\tgdb}, a new graph database that efficiently offers built-in temporal support. {\tgdb} includes a formally defined temporal property graph model.
Based on this model, we propose a hybrid storage engine to store temporal data with minimal storage consumption.
Furthermore, {\tgdb} equips a native temporal query engine to enable efficient temporal query processing.
% ies with efficient and data consistent guarantee.
The results demonstrate that {\tgdb} achieves up to {5.73$\times$} lower storage consumption
% {2.82$\times$} lower latency for graph operations, 
and {2.57$\times$} lower latency for temporal queries against state-of-the-art approaches, while introducing only {{9.74\%}} performance degradation for supporting temporal features.

\begin{acks}
% We thank the anonymous reviewers for their feedback. 
This work was supported by the National Natural Science Foundation of China (Number 61972403, 62072458). 
\end{acks}}
\extended{\section{Related Work}
\label{sec:relatedwork}

Temporal data management has reached a mature stage in the field of relational database systems~\cite{RDBMS1,RDBMS2,RDBMS3,RDBMS4,RDBMS5,RDBMS6,RDBMS16}.
% , as demonstrated by extensive research~\cite{RDBMS1,RDBMS2,RDBMS3,RDBMS4,RDBMS5,RDBMS6} and the adoption of the well-established standard SQL:2011~\cite{sql-2011}. 
Nevertheless, the temporal data management in the graph context 
% study of temporal graph data management in the context of graphs 
remains relatively limited. In this section, we provide a review of temporal graph data management systems through two primary approaches: one at the application level and the other at the internal system level.

One approach to supporting temporal dimension is to utilize commercial graph databases by attaching temporal metadata. Cattuto et al.~\cite{Frame} propose a temporal model where the temporal data is represented as frames, with a frame being defined as the finest unit of temporal aggregation.
%frame这篇论文
Debrouvier et al. enhance the property graph model to store the temporal graph~\cite{T-GQL}. They assume that Objects, Attributes, and Values are stored as conventional property graph vertices, whereas time intervals are stored as properties of these vertices. Temporal edges are, in turn, stored as conventional edges, with time intervals as one of their properties.
Other works exploit the graph database by representing the temporal dimension as an interval attribute to graph objects~\cite{liu2017keyword,GRADOOP,Timebased3,Timebased4, durand2017backlogs}. These systems are implemented at the application level upon existing database systems. 
% , benefiting from existing database systems.
They may exhibit unpredictable performance, as the underlying storage and query engines are designed for static graphs rather than evolving ones. 
Another line of research focuses on underlying storage engines to organize temporal graphs. Their primary goal is to handle growing historical data while enabling efficient querying. In this regard, two storage approaches, \textit{Copy} and \textit{Log}, are widely used to manage temporal graph data~\cite{DeltaGraph,ChronoGraph, Raphtory,llama,ImmortalGraph,clock-g, graphone, Chronos, Auxo, khurana2016storing}.
% ~\cite{ChronoGraph, ImmortalGraph, Log, DeltaGraph,llama, clock-g, graphone}. 
The \textit{Copy} approach~\cite{DeltaGraph,ChronoGraph}
% , implemented by systems like DeltaGraph~\cite{DeltaGraph} and ChronoGraph~\cite{ChronoGraph}, 
stores a complete graph state whenever a batch of updates occurs. 
% periodically 
While this simplifies graph querying, it results in excessive redundancy in the stored graph information.
% , making graph querying easier but introducing excessive redundant graph information. 
In contrast, the \textit{Log} approach~\cite{Raphtory,llama}
% , adopted by systems like Raphtory~\cite{Raphtory} and LLAMA~\cite{llama}, 
records every graph update activity in a log. This approach is more compact but requires expensive reconstruction when executing a temporal graph query. As a trade-off between query performance and space overhead, the \textit{Copy+Log} approach~\cite{ImmortalGraph,clock-g, graphone, Chronos, Auxo, khurana2016storing}
% , followed by ImmortalGraph~\cite{ImmortalGraph} and Clock-G~\cite{clock-g}, 
combines a finite set of snapshots with a list of deltas between them. 
% is proposed as a hybrid of both approaches, combining a finite set of snapshots with a list of deltas between them. 
However, we argue that \textit{Copy+Log} is a sub-optimal approach since it still necessitates significant storage overhead to materialize the entire graph. Furthermore, all the above approaches primarily concentrate on offline historical data, limiting their usability for online queries. Worse still, to the best of our knowledge, they lack support for either a powerful temporal graph data model or a declarative temporal query language, restricting the convenience for users. 
}
\extended{% \vspace{-1mm}
\section{Conclusion}
\label{sec:conclusion}
In this paper, we present {\tgdb}, a new graph database that efficiently offers built-in temporal support. {\tgdb} is based on a novel temporal property graph model. To fit this
model, we design a temporal-enhanced storage engine and a query engine.
% Based on a proposed temporal property graph model, 
We first provide a hybrid storage engine with 
an asynchronous migration, 
guaranteeing small performance degradation on current storage and optimizing storage efficiency for historical storage. 
We then introduce a temporal query engine, which employs an {\visible} examination technique to ensure data consistency and an anchor-based version retrieval technique to boost query performance. We conducted extensive experimental evaluations using three workloads.
% The result demonstrated that {\tgdb} achieves up to \makesure{$4.5\times$} lower storage consumption and \makesure{$5.1\times$} lower query latency against state-of-the-art approaches, while only inducing {8.72\%} performance degradation for supporting temporal features.
The results demonstrated that {\tgdb} achieves up to {5.73$\times$} lower storage overhead, {{2.82$\times$} lower latency for graph operations, and {2.57$\times$} lower latency for temporal queries against the next-best system}, while introducing only {{9.74\%}} performance degradation for supporting temporal features.}

\balance 
\normalem
\bibliographystyle{ieeetr} %ACM-Reference-Format
\bibliography{bible}

\extended{
\setcounter{table}{0}
\newpage
\appendix

% \twocolumn[\centerline{\textbf{\Huge Temporal Queries of T-LDBC}}\vspace{2mm}]

\section{Temporal Queries of T-LDBC}
{We list the temporal queries of the T-LDBC workload used in our experiment in Table \ref{table:ldbc_temporal_q}. The values of the vertex id are sampled from the respective datasets on which the queries are evaluated.}

\begin{table}
 \centering
\setlength{\abovecaptionskip}{0.1cm}
\caption{Temporal queries of T-LDBC workload}
\resizebox{\columnwidth}{!}{
\begin{tabular}{|l|l|}
\hline
IS1 & {\color[HTML]{343434} \begin{tabular}[c]{@{}l@{}}MATCH (n:Person \{id:xx\})-{[}:IS\_LOCATED\_IN{]}-(p:Place)\\ \textbf{FOR TT AS OF $t$}\\ RETURN  \\ n.firstName AS firstName, n.lastName AS lastName,  \\ n.birthday AS birthday, n.locationIP AS locationIp,  \\ n.browserUsed AS browserUsed, n.gender AS gender,  \\ n.creationDate AS creationDate,  p.id AS cityId;\end{tabular}}                                                                                                                                                                                                                                                              \\ \hline
IS2 & \begin{tabular}[c]{@{}l@{}}MATCH (:Person \{id:xx\})\textless{}-{[}:HAS\_CREATOR{]}-(m)-{[}:REPLY\_OF*0..{]}-\textgreater{}(p:Post)\\ MATCH (p)-{[}:HAS\_CREATOR{]}-\textgreater{}(c)\\ \textbf{FOR TT AS OF $t$}\\ RETURN m.id as messageId,\\   CASE m.content is not null\\    WHEN true \\    THEN m.content,\\     ELSE m.imageFile, \\   END AS messageContent,\\   m.creationDate AS messageCreationDate,  p.id AS originalPostId,\\    c.id AS originalPostAuthorId, c.firstName as originalPostAuthorFirstName,\\    c.lastName as originalPostAuthorLastName\\ ORDER BY messageCreationDate DESC\\ LIMIT 10;\end{tabular}                   \\ \hline
IS3 & \begin{tabular}[c]{@{}l@{}}MATCH (n:Person \{id:xx\})-{[}r:KNOWS{]}-(friend)\\ \textbf{FOR TT AS OF $t$}\\ RETURN\\   friend.id AS personId, friend.firstName AS firstName,\\   friend.lastName AS lastName, r.creationDate AS friendshipCreationDate\\ ORDER BY friendshipCreationDate DESC, toInteger(personId) ASC;\end{tabular}                                                                                                                                                                                                                                                                                                                   \\ \hline
IS4 & \begin{tabular}[c]{@{}l@{}}MATCH (m:Message \{id:xx\})\\ \textbf{FOR TT AS OF $t$} \\ RETURN\\   CASE exists(m.content)\\     WHEN true THEN m.content\\     ELSE m.imageFile\\   END AS messageContent,\\   m.creationDate as messageCreationDate;\end{tabular}                                                                                                                                                                                                                                                                                                                                                                                       \\ \hline
IS5 & \begin{tabular}[c]{@{}l@{}}MATCH (m:Message \{id:xx\})-{[}:HAS\_CREATOR{]}-\textgreater{}(p:Person)\\ \textbf{FOR TT AS OF $t$} \\ RETURN p.id AS personId, p.firstName AS firstName, p.lastName AS lastName;\end{tabular}                                                                                                                                                                                                                                                                                                                                                                                                                             \\ \hline
IS6 & \begin{tabular}[c]{@{}l@{}}MATCH (m:Message \{id:xx\})-{[}:REPLY\_OF*0..{]}-\textgreater\\ (p:Post)\textless{}-{[}:CONTAINER\_OF{]}-(f:Forum)-{[}:HAS\_MODERATOR{]}-\textgreater{}(mod:Person)\\ \textbf{FOR TT AS OF $t$} \\ RETURN f.id AS forumId, f.title AS forumTitle,\\ mod.id AS moderatorId, mod.firstName AS moderatorFirstName, mod.lastName AS moderatorLastName;\end{tabular}                                                                                                                                                                                                                                                             \\ \hline
IS7 & \begin{tabular}[c]{@{}l@{}}MATCH (m:Message \{id:xx\})\textless{}-{[}:REPLY\_OF{]}-(c:Comment)-{[}:HAS\_CREATOR{]}-\textgreater{}(p:Person)\\ OPTIONAL MATCH (m)-{[}:HAS\_CREATOR{]}-\textgreater{}(a:Person)-{[}r:KNOWS{]}-(p)\\ \textbf{FOR TT AS OF $t$} \\ RETURN  c.id AS commentId, c.content AS commentContent, c.creationDate AS commentCreationDate,  \\ p.id AS replyAuthorId, p.firstName AS replyAuthorFirstName, p.lastName AS replyAuthorLastName,  \\ CASE r   \\ WHEN null THEN false    \\ ELSE true \\ END AS replyAuthorKnowsOriginalMessageAuthor\\ ORDER BY commentCreationDate DESC, toInteger(replyAuthorId) ASC;\end{tabular} \\ \hline
\end{tabular}
\label{table:ldbc_temporal_q}
}
\end{table}
}

\end{document}
\endinput